\tikzset{%
        brace/.style = { decorate, decoration={brace, amplitude=5pt} },
       mbrace/.style = { decorate, decoration={brace, amplitude=5pt, mirror} }
}
\newtheorem{thm}{Theorem}[section]
\newtheorem{prop}[thm]{Proposition}
\newtheorem{con}[thm]{Conclusion}
\newtheorem{dfnsz}[thm]{Definition and Proposition}
\newtheorem{lem}[thm]{Lemma} \newtheorem{cor}[thm]{Corollary}
\theoremstyle{remark}
\newtheorem{rmk}[thm]{Remark}
 \newtheorem{xmpl}[thm]{Example}
\theoremstyle{definition} 
\newtheorem{dfn}[thm]{Definition}
\newtheorem{conv}[thm]{Convention}
\begin{document}

\title{On emittance and optics calculation from the tracking data in periodic lattices}

\author{Malte Titze}
\affiliation{CERN, Geneva, Switzerland}


\begin{abstract}
In this work we examine the interplay between normal form and matched particle distributions in a linear setting. We first outline the connection between the established \textit{$\Sigma$-matrix method} and Williamson's Theorem. Then we show that the Iwasawa decomposition provides a natural framework for a description of beam optics parameters. Along the way we will apply these methods to a realistic tracking example, as well as provide additional examples, including the connection to the parameterization of Courant-Synder.
\end{abstract}

\maketitle


\section{Introduction}

In a particle simulation involving a periodic lattice, it is usually desired to generate particles in a \textit{matched} state, which means that the shape of the distribution should not change after one passage through the lattice. In fact, if a matched distribution can be found, one often has already accomplished a great deal in the understanding of the simulation. Additionally, there are circumstances in which the knowledge of the effective emittance and optics parameters is required but difficult to compute, for example in the case of the MAD-X space charge module for the CERN PS lattice (without poleface windings) near the integer resonance $Q_x = 6$, where the search for a closed-orbit can break down.\\\\
In recent years, efforts have been made with success in using covariance matrices to compute emittances and beam optics parameters. This was demonstrated in Ref. \cite{bib:lebedev1999} in the 4D case and in \cite{bib:wolski2006, bib:nash2006} for the 6D case. In a later published version of \cite{bib:lebedev1999}, i.e. in Ref. \cite{bib:lebedev2010}, it was used to examine the relation between the Edwards-Teng \cite{bib:edwards1973} (see also Ref. \cite{bib:sagan1999} for a summary) and the Mais-Ripken \cite{bib:willeke1988} parameterization in a 4D situation with coupled optics. The results were picked up in \cite{bib:alexahin2013} in order to compute the emittances from the covariance matrices of 6D tracking data and, in regards of code implementation, recent progress has been made to include some of these techniques into the MAD-X space charge module \cite{bib:alex2018}.
In Refs. \cite{bib:luo2004, bib:qin2009, bib:qinetal2013} generalizations of the Courant-Snyder parameterization to 4D were examined.\\\\
The aim of this work is to continue in this spirit by systematically exploring the connection to linear normal form and established theorems regarding symplectic matrices: By utilizing Williamson's Theorem we obtain a proof of the remarkable result to obtain emittances by symplectic diagonalization of the given covariance matrix. Such diagonalizations are not unique but, as we shall see in Sec. \ref{subs:param}, by knowledge of how the underlying freedom enters into the equations we outline how to obtain faithful optics information out of the tracking data.\\\\
In particular we found that the Iwasawa decomposition provides a natural framework: Two of their three factors always remain the same, while the third factor can be determined under one additional condition: Namely, that the emittances have to be mutually distinguishable. We are thus led, rather naturally, to a characterization of e.g. the optics $\beta$-functions. We also discuss an alternative route to obtain $\beta$-functions by a statistical argument, which was proposed in Ref. \cite{bib:wolski2006}, and which we connect and apply to our situation.\\\\
Along the way we will provide several tables from a realistic tracking scenario and three examples which illustrate the results of the technical steps. The first example establishes the connection to the well-known (2D) Courant-Snyder parameterization. The second example contains a short way of how to obtain the emittances from a 4D covariance matrix, without the necessity to compute the eigenvectors. The last example also deals with a 4D situation, now with a single coupling parameter. By means of this last example we will demonstrate certain properties of the general decomposition.\\\\
For practical purposes we will summarize in Sec. \ref{subsec:emit_cov} the techniques of how emittance calculations can be performed in a linear scenario, establish the connection to the familiar emittance of Lapostolle and the single-particle action. For completeness we also discuss the situation of measuring beam sizes at three different locations in the lattice.

\section{On invariant covariance matrices} \label{subs:param}

\subsection{Motivation and preliminaries} \label{subs:prelims}

Consider a tracking simulation which produces, at every turn, a
distribution of particles depending on an initial set of coordinates. We can compute the moments of these distributions in phase space and obtain some sort of measure of the phase space volume occupied. It is of great interest to understand how to set up a distribution in which certain functions of these macroscopic quantities remain unchanged or vary only very slowly in the course of the simulation. In the following we will understand our lattice to be in the form of a ring, but the same reasoning can be applied to a straight periodic lattice.\\\\
Let $\mathcal{F} \colon \mathcal{P} \to \mathcal{P}$ be a canonical transformation from phase space $\mathcal{P} \subset \mathbb{R}^{2n}$ onto itself, which describes the physics of the storage ring in form of a single turn around the machine at a given fixed position in the ring. Such a \textit{one-turn} or \textit{Poincar\'e} map is usually the result of a composition of many elementary maps, which describe the individual elements of the machine.\\\\
In this work we will examine the situation in the vicinity of an assumed closed orbit, where linear effects play the dominant role. Therefore we will be focusing on the first derivative $M$ of the one-turn map $\mathcal{F}$ at the closed orbit and do not consider any higher-order effects of the full map $\mathcal{F}$. Because of this restriction -- and for brevity -- we will also call $M$ the one-turn map. This map is symplectic since $\mathcal{F}$ is canonical\footnote{Here 'canonical' means a symplectomorphism.}.\\\\
If $g \colon \mathcal{P} \to [0, 1]$ denotes the phase space density of a particle distribution,
its covariance matrix $G$, consisting of the second-order moments, is given by
\begin{equation}
G := \sum_{k, l} \langle x_k x_l \rangle \, e_k e_l^{tr} = \langle x x^{tr} \rangle, \label{eq:g_cormat}
\end{equation}
where, for any integrable function $h \colon P \to \mathbb{R}$, the mean $\langle h \rangle$ is given by
$\langle h \rangle := \int g(x) h(x) dx .$
We see that $M$ acts by
matrix congruence on $G$, where the new covariance matrix $G'$ is given by
\begin{equation}
G' = \langle M x (Mx)^{tr} \rangle = \langle M x x^{tr} M^{tr} \rangle = M G M^{tr} . \label{eq:congruence_inv}
\end{equation}
Such covariance matrices are important, because their entries are the ingredients to compute the emittances of the beam; in the 2D case:
\begin{equation}
\det \left( \begin{array}{cc} \langle x^2 \rangle & \langle x p_x \rangle \\
\langle x p_x \rangle & \langle p_x^2 \rangle \end{array} \right) = \langle x^2 \rangle \langle p_x^2 \rangle
- \langle x p_x \rangle^2 = \epsilon_x^2 . \label{eq:2d_emittance}
\end{equation}
Moreover, by means of Eq. \eqref{eq:congruence_inv}, we have a way to follow the evolution of the moments in the course of the tracking. Because $\det(M) = 1$ for symplectic maps, the emittance in Eq. \eqref{eq:2d_emittance} is conserved. Note that the emittance is just one example of an invariant. In Ref. \cite{bib:rangarajan89} functions of higher-order moments which remain invariant with respect to symplectic matrices were studied.\\\\
Here we are focusing on second-order moments and address as our first goal the following question: Given $M$, how can we classifying all 'matched' cases in which $G' = G$ holds? As we shall see in the course of this section, the answer will connect a property used in e.g. Ref. \cite{bib:nash2006} to linear normal form.
\begin{dfn} \label{dfn:m_inv}
We say a matrix $G \in \mathbb{R}^{2n \times 2n}$ is \textit{$M$-congruent invariant}
or, for brevity, \textit{$M$-invariant}, if $M G M^{tr} = G$ holds. 
\end{dfn} \noindent
To begin with, we recall an important fact which we
will frequently use to identify covariance matrices.
\begin{thm} \label{thm:cov_spsd}
$G \in \mathbb{R}^{m \times m}$ is a covariance matrix if and only if $G$ is symmetric and positive semidefinite.
\end{thm}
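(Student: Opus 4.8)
The plan is to prove the two implications separately, as the statement is an ``if and only if''. For the forward direction, suppose $G = \langle x x^{tr} \rangle$ arises from some phase space density $g$ via Eq.~\eqref{eq:g_cormat}. Symmetry is immediate from $(x x^{tr})^{tr} = x x^{tr}$ together with the linearity of the mean $\langle \cdot \rangle$, so $G^{tr} = \langle (x x^{tr})^{tr} \rangle = \langle x x^{tr} \rangle = G$. For positive semidefiniteness I would fix an arbitrary $v \in \mathbb{R}^m$ and compute
\begin{equation}
  v^{tr} G v = \langle v^{tr} x x^{tr} v \rangle = \langle (x^{tr} v)^2 \rangle \ge 0 ,
\end{equation}
the inequality holding because the integrand is pointwise nonnegative and $g \ge 0$. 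This direction uses nothing beyond the definition.

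For the converse, suppose $G$ is symmetric and positive semidefinite. I would invoke the spectral theorem to write $G = O D O^{tr}$ with $O$ orthogonal and $D = \mathrm{diag}(\lambda_1, \dots, \lambda_m)$, $\lambda_i \ge 0$, and set $B := O D^{1/2}$, so that $G = B B^{tr}$. Then I would exhibit an explicit distribution realizing $G$: let $y$ be a random vector whose components are independent, each with zero mean and unit variance (for instance a product of standard Gaussians, whose density is well defined on $\mathbb{R}^m$), and put $x := B y$. Arguing exactly as in Eq.~\eqref{eq:congruence_inv} with $M$ replaced by $B$ gives $\langle x x^{tr} \rangle = B \langle y y^{tr} \rangle B^{tr} = B\, I\, B^{tr} = G$, exhibiting $G$ as a covariance matrix.

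The one point needing care --- and essentially the only obstacle --- is the degenerate case $\det G = 0$, where $B$ is singular and $x = By$ is supported on the proper subspace $\mathrm{ran}(B) = \mathrm{ran}(G) \subsetneq \mathbb{R}^m$, so that $x$ fails to have a density on all of $\mathbb{R}^m$ in the strict sense of Eq.~\eqref{eq:g_cormat}. I would resolve this in one of two ways: either by reading Theorem~\ref{thm:cov_spsd} (as is standard) for probability measures rather than only for absolutely continuous densities, in which case a Gaussian concentrated on $\mathrm{ran}(G)$ suffices; or, if a density on all of $\mathbb{R}^m$ is insisted upon, by a limiting argument --- each $G + \varepsilon I$ with $\varepsilon > 0$ is positive definite and hence realized by a genuine Gaussian density, and one passes to the limit $\varepsilon \to 0^+$, using that the set of covariance matrices is closed. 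Either way the characterization holds; the emittances relevant in the remainder of the paper are nonzero, so the nondegenerate construction is what is actually used in practice.
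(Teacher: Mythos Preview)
Your proof is correct and follows essentially the same route as the paper: symmetry and positive semidefiniteness from $v^{tr}Gv=\langle (v^{tr}x)^2\rangle\ge 0$ for the forward direction, and for the converse an orthogonal diagonalization $G=ODO^{tr}$, the square-root factor $B=OD^{1/2}$, and pushing forward a vector of independent unit-variance variables via $x=By$. Your explicit treatment of the degenerate case $\det G=0$ is in fact more careful than the paper's own proof, which silently passes over this point.
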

\begin{proof}
A proof for convenient reference is included in Appendix \ref{app:ec}.
\end{proof} \noindent
This means we are interested in $M$-invariant symmetric and positive semidefinite matrices $G$, having in mind that $M$ is the symplectic one-turn map of a lattice in a particle accelerator. In particular this means that we can assume that the complexification $M_\mathbb{C}$ of $M$ can be diagonalized with mutually distinguishable eigenvalues (tunes).\\\\
We will now systematically develop important properties of $M$-invariant symmetric matrices $G$. Throughout this section convention \ref{conv:ab} will hold and all matrices are given with respect to the ordering $x, y, z, p_x, p_y, p_z$ unless otherwise stated.

\subsection{Invariance and linear normal form} \label{subs:normal_form}

In this paragraph we outline the interplay between the technique of transforming a given 
linear symplectic map $M$ with mutually distinguishable eigenvalues into a normal form and invariant covariance matrices. We will use an important lemma which not only helps us to describe the connection, but also in the next paragraph \ref{ssubs:class} where we examine the degree of freedom involved in the matrices. For a procedure of how to construct the normal version of a general (higher-order) one-turn map we refer the reader to Refs. \cite{bib:forest1990, bib:bengtsson1990}. Preliminary tools are given in Appendix \ref{app:ec}. 

\begin{thm}[Linear normal form] \label{thm:s_diag_r}
Let $M \in \mathrm{Sp}(2n; \mathbb{R})$ diagonalizable with mutually distinguishable eigenvalues
on the unit circle. Then there exists $V \in \mathrm{Sp}(2n; \mathbb{R})$, 
so that $R := V^{-1} M V$ is orthogonal, leaving the plane
$E_k := \mathrm{span}\{e_k, e_{n + k}\}$ for $k \in \overline{n}$ invariant:
\[
\forall k \in \overline n \colon \;\; R |_{E_k} =
\left(
\begin{array}{cc}
\cos(\varphi_k) & \sin(\varphi_k) \\
- \sin(\varphi_k) & \cos(\varphi_k)
\end{array}
\right) ,
\]
where the phases $\varphi_k \in [0, 2 \pi[$ are related to the eigenvalues 
$\lambda_k$ of $M$ by $\lambda_{j_k} = \exp(i \varphi_k)$, and where the $j_k$'s correspond to a
representation introduced in Dfn. \ref{dfn:equirel}, so that for the corresponding eigenvectors $a_{j_k}$ of $M$,
$\langle a_{j_k}, J a_{j_k} \rangle$
has positive imaginary part.
\end{thm}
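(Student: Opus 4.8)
The plan is to diagonalize $M$ over $\mathbb{C}$, organize the eigenvectors into conjugate pairs normalized with respect to the symplectic form $J$, and then assemble a real symplectic change of basis $V$ from real and imaginary parts of these eigenvectors. First I would use the hypothesis that $M$ is diagonalizable with simple eigenvalues on the unit circle: since $M$ is real, the non-real eigenvalues come in conjugate pairs $\lambda, \bar\lambda$, and since $M$ is symplectic, eigenvalues also come in pairs $\lambda, \lambda^{-1}$; on the unit circle $\bar\lambda = \lambda^{-1}$, so these two pairings coincide. Thus the $2n$ eigenvalues split into $n$ conjugate pairs $\{\lambda_k, \bar\lambda_k\}$, and I would write $\lambda_k = \exp(i\varphi_k)$ with a chosen branch. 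For each pair pick an eigenvector $a_k$ with eigenvalue $\lambda_k$; then $\bar a_k$ is an eigenvector with eigenvalue $\bar\lambda_k$.

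Next I would establish the symplectic (anti-)orthogonality relations among the $a_k$: for $\langle u, Jv\rangle$ with $u,v$ eigenvectors of the symplectic $M$, one has $\langle Mu, JMv\rangle = \langle u, Jv\rangle$, which forces $\lambda_k\lambda_l\langle a_k, Ja_l\rangle = \langle a_k, Ja_l\rangle$, so $\langle a_k, Ja_l\rangle = 0$ unless $\lambda_k\lambda_l = 1$, i.e. unless $l$ indexes the conjugate partner. The same argument gives $\langle a_k, J\bar a_l\rangle = 0$ for $k\neq l$. Nondegeneracy of $J$ then guarantees $\langle a_k, J\bar a_k\rangle \neq 0$; this quantity is purely imaginary (since $J$ is real antisymmetric and we are pairing a vector with its conjugate), so after possibly swapping the roles of $\lambda_k$ and $\bar\lambda_k$ — which is exactly the choice of representative encoded in Dfn.~\ref{dfn:equirel} — and rescaling $a_k$, I can normalize to $\langle a_k, J\bar a_k\rangle = i$, with the sign convention that fixes which of $\lambda_k,\bar\lambda_k$ is called $\lambda_{j_k}$. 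Now set $v_k := \sqrt{2}\,\mathrm{Re}(a_k)$ and $w_k := \sqrt{2}\,\mathrm{Im}(a_k)$ (up to normalization constants to be fixed); a direct computation from the relations above shows that $\{v_1,\dots,v_n,w_1,\dots,w_n\}$ is a symplectic basis, i.e. the matrix $V$ with these columns (suitably ordered to match the convention $x,y,z,p_x,p_y,p_z$) lies in $\mathrm{Sp}(2n;\mathbb{R})$.

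Finally I would compute $R := V^{-1}MV$ in this basis. Since $Ma_k = e^{i\varphi_k}a_k$, taking real and imaginary parts gives $Mv_k = \cos(\varphi_k)v_k - \sin(\varphi_k)w_k$ and $Mw_k = \sin(\varphi_k)v_k + \cos(\varphi_k)w_k$, so $R$ leaves each plane $E_k = \mathrm{span}\{e_k,e_{n+k}\}$ invariant and acts there by the stated rotation matrix; being a rotation on each block of an orthogonal decomposition, $R$ is orthogonal. I expect the main obstacle to be bookkeeping rather than conceptual: carefully tracking the normalization constants and the sign of $\langle a_k, J\bar a_k\rangle$ so that the rotation blocks come out with exactly the signs displayed (rather than the transpose), and making sure the column ordering of $V$ is the permutation that sends the natural pair-by-pair ordering to the global $x,y,z,p_x,p_y,p_z$ ordering; the requirement that the eigenvalues be mutually distinguishable is used precisely to get clean one-dimensional eigenspaces so that these normalizations are unambiguous on each pair.
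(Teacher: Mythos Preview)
Your proposal is correct and follows essentially the same route as the paper's proof: choose complex eigenvectors, use symplecticity of $M$ to deduce the $J$-orthogonality relations between eigenvectors of different eigenvalues, fix the sign/scaling via the purely imaginary quantity $\langle a_k, J a_k\rangle$ (the paper uses the sesquilinear pairing $a^H J a$ and normalizes to $2i$, you use a bilinear pairing and normalize to $i$ with an extra $\sqrt{2}$ on the real/imaginary parts --- these are equivalent), then read off the real symplectic basis from real and imaginary parts and compute the $2\times 2$ rotation blocks directly from $M a_k = e^{i\varphi_k} a_k$. Your assessment that the only delicate point is the sign/ordering bookkeeping is exactly right, and the paper handles it the same way (via its Conv.~\ref{conv:ab} and the explicit expansion in Eqs.~(\ref{eq:i2dkl_a})--(\ref{eq:i2dkl_b})).
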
 \noindent
With the operator $\diamond$ introduced in
Def. \ref{dfnsz:oplus} we can conveniently write $R = R_1 \diamond \cdots \diamond R_n$. In the accelerator-physics terminology the phase space in which the one-turn map has the above \textit{normal} form is also called \textit{Floquet-space}.
\begin{proof}
A proof can be found in Ref. \cite{bib:dragt2018} or in Appendix \ref{app:ec}.
\end{proof} \noindent
The matrix $V$ in Thm. \ref{thm:s_diag_r} is not unique, as can readily be seen if composing the map by additional rotations (which are symplectic) leaving the individual planes invariant:
\begin{rmk} \label{rmk:freedom_v}
By lemma \ref{lem:r_invariance} we can conclude that the linear normal form map $V$ of Thm. \ref{thm:s_diag_r} is determined up to an orthosymplectic transformation on the left; namely if $V^{-1} M V = R = \tilde V^{-1} M \tilde V$, i.e. $V R V^{-1} = \tilde V R \tilde V^{-1}$, then $\tilde V^{-1} V = D_1 + J_2^{\oplus n} D_2$. Because $V$ and $\tilde V$ are symplectic, it follows that $D_1^2 + D_2^2 = 1$ and so $\tilde V^{-1} V$ has the same form as $R$.
\end{rmk}
\begin{thm} \label{thm:m_invariance}
Let $M \in \mathrm{Sp}(2n; \mathbb{R})$ diagonalizable with mutually distinguishable eigenvalues on the unit circle and $G$ symmetric.
Then $M G M^{tr} = G$ if and only if there exists a diagonal matrix
$D = \mathrm{diag}(\Lambda, \Lambda)$ with $\Lambda = \mathrm{diag}(b_1, ..., b_n)$ so that
$G = V D V^{tr}$, where $V \in \mathrm{Sp}(2n; \mathbb{R})$ is given according to Thm. \ref{thm:s_diag_r}.
\end{thm}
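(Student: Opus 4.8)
The plan is to transport the invariance condition into Floquet space. By Thm.~\ref{thm:s_diag_r} write $M = V R V^{-1}$ with $V \in \mathrm{Sp}(2n;\mathbb{R})$ and $R = R_1 \diamond \cdots \diamond R_n$ orthogonal, each $R_k$ being a rotation by $\varphi_k$ in the plane $E_k$. Set $\tilde G := V^{-1} G (V^{-1})^{tr}$, so that $G = V \tilde G V^{tr}$ and, using $V^{tr}(V^{-1})^{tr} = (V^{-1}V)^{tr} = 1$,
\[
M G M^{tr} = (V R V^{-1})(V \tilde G V^{tr})(V^{-1})^{tr} R^{tr} V^{tr} = V R \tilde G R^{tr} V^{tr}.
\]
Hence $M G M^{tr} = G$ is equivalent to $R \tilde G R^{tr} = \tilde G$, and since $R$ is orthogonal this is in turn equivalent to $R \tilde G = \tilde G R$; moreover $\tilde G$ is symmetric precisely when $G$ is. So the theorem reduces to the claim that a symmetric matrix $\tilde G$ commutes with $R$ if and only if $\tilde G = \mathrm{diag}(\Lambda, \Lambda)$, after which one conjugates back with $V$ and sets $D := \tilde G$.

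The ``if'' direction of this reduced claim is immediate: if $\tilde G = \mathrm{diag}(\Lambda,\Lambda)$ with $\Lambda = \mathrm{diag}(b_1,\dots,b_n)$, then the restriction of $\tilde G$ to each plane $E_k$ is $b_k 1_2$, which commutes with the rotation $R_k$; summing over $k$ gives $R\tilde G = \tilde G R$, hence $M G M^{tr} = G$ by the identity above.

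For the ``only if'' direction I would invoke Lemma~\ref{lem:r_invariance}, exactly as in Rmk.~\ref{rmk:freedom_v}: every matrix commuting with $R$ has the form $D_1 + J_2^{\oplus n} D_2$ with $D_1, D_2$ diagonal of the type $\mathrm{diag}(\Lambda_i, \Lambda_i)$. Requiring that $\tilde G = D_1 + J_2^{\oplus n} D_2$ be symmetric forces $J_2^{\oplus n} D_2$ to be symmetric; but $J_2^{\oplus n}$ is antisymmetric and commutes with $D_2$ (which is $c_k 1_2$ on each $E_k$), so $J_2^{\oplus n} D_2$ is antisymmetric, hence $D_2 = 0$, leaving $\tilde G = D_1 = \mathrm{diag}(\Lambda, \Lambda)$ as desired. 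A self-contained alternative is to complexify: the mutual distinguishability of the tunes makes the $2n$ eigenvalues $e^{\pm i \varphi_k}$ of $R$ pairwise distinct, with eigenvectors $e_k \pm i e_{n+k}$, so $\tilde G$ is simultaneously diagonal in this basis; reality of $\tilde G$ pairs conjugate eigenvalues, and symmetry together with $(e_k + i e_{n+k})^{tr}(e_l - i e_{n+l}) = 2\delta_{kl}$ forces the two eigenvalues on $E_k$ to coincide and to be real, again giving $\tilde G|_{E_k} = b_k 1_2$ and no coupling between distinct planes.

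The only genuinely delicate point — and the only place the hypotheses on $M$ enter — is this ``only if'' step: mutual distinguishability of the eigenvalues is exactly what keeps each $R_k$ a genuine rotation (so that its real commutant is $\{a 1_2 + b J_2\}$ rather than all of $\mathbb{R}^{2\times 2}$) and simultaneously prevents a matrix commuting with $R$ from mixing two different planes $E_k, E_l$. Everything else is the elementary bookkeeping with transposes and the normal-form relation $M = VRV^{-1}$ already recorded above.
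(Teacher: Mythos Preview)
Your proof is correct and follows essentially the same route as the paper: conjugate by $V$ into Floquet space, reduce $M$-invariance of $G$ to $R$-invariance of $\tilde G = V^{-1}GV^{-tr}$, and then invoke Lemma~\ref{lem:r_invariance}. The only cosmetic difference is that the paper interposes the orthogonal reordering $T$ of Rmk.~\ref{rmk:dia_oplus} before citing the lemma (which is stated for $R_1\oplus\cdots\oplus R_n$ rather than $R_1\diamond\cdots\diamond R_n$), whereas you apply it directly as in Rmk.~\ref{rmk:freedom_v}; your explicit use of symmetry to kill the $J$-part is a detail the paper's proof leaves implicit.
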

\begin{proof}
Let $T$ be the orthogonal operator discussed in Rmk. \ref{rmk:dia_oplus} and $R = V^{-1} M V$ according to 
Thm. \ref{thm:s_diag_r}.
Then we have $\tilde R := R_1 \oplus \cdots \oplus R_n = T^{tr} R T = T^{tr} V^{-1} M V T$, 
and $G$ is $M$-invariant if and only if
$D_0 := T^{tr} V^{-1} G V^{-tr} T$ is $\tilde R$-invariant:
\[
\tilde R D_0 \tilde R^{tr} = T^{tr} V^{-1} M V T T^{tr} V^{-1} G V^{-tr} T T^{tr} V^{tr} M^{tr} V^{-tr} T = 
T^{tr} V^{-1} M G M^{tr} V^{-tr} T = D_0 ,
\]
and by lemma \ref{lem:r_invariance} this the case if and only if $D_0$ has the form
$D_0 = \mathrm{diag}(b_1, b_1, b_2, b_2, ..., b_n, b_n)$, which is equivalent to $D := T D_0 T^{tr} = V^{-1} G V^{-tr}$
has the form $D = \mathrm{diag}(\Lambda, \Lambda)$ with $\Lambda = \mathrm{diag}(b_1, b_2, ..., b_n)$.  
\end{proof}
This result can be connected to a statement used in Ref. \cite{bib:nash2006}, and to obtain a familiar expression in the form of eigenvector decompositions, as follows:
\begin{thm} \label{cor:real_sym_inv_basis}
Every real-valued symmetric invariant $G$ of $M$, where $M$ is diagonalizable with mutually distinguishable
eigenvalues, can be represented as a sum $G = \sum_{k = 1}^n g_k Z_k$ of $n$ elementary matrices with $g_k \in \mathbb{R}$, where the $Z_k$ are given by $Z_k = a_{j_k} a_{j_k}^H + \bar a_{j_k} a_{j_k}^{tr}$, and where $\{j_1, j_2, ..., j_n\} \subset \overline{2n}$
is a representation system according to Conv. \ref{conv:ab}.
\end{thm}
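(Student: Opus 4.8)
The plan is to combine Theorem~\ref{thm:m_invariance} with the explicit relationship between the real symplectic normal-form matrix $V$ and the complex eigenvectors $a_{j_k}$ of $M$. By Theorem~\ref{thm:m_invariance}, a real symmetric $M$-invariant $G$ is exactly a matrix of the form $G = V D V^{tr}$ with $D = \mathrm{diag}(\Lambda,\Lambda)$, $\Lambda = \mathrm{diag}(b_1,\dots,b_n)$. So the whole task reduces to rewriting $V D V^{tr}$ as $\sum_k g_k Z_k$ with $Z_k = a_{j_k} a_{j_k}^H + \bar a_{j_k} a_{j_k}^{tr}$. The natural first step is to decompose $D$ into a sum over the planes: $D = \sum_{k=1}^n b_k P_k$, where $P_k$ is the projector onto $E_k = \mathrm{span}\{e_k, e_{n+k}\}$, i.e. $P_k = e_k e_k^{tr} + e_{n+k} e_{n+k}^{tr}$. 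Then $G = \sum_k b_k\, V P_k V^{tr}$, and it suffices to identify $V P_k V^{tr}$ with (a real scalar multiple of) $Z_k$.

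Next I would make the columns of $V$ explicit in terms of the eigenvectors. Since $V^{-1} M V = R$ with $R|_{E_k}$ the standard rotation by $\varphi_k$, the two columns $v_k := V e_k$ and $v_{n+k} := V e_{n+k}$ span the real invariant plane $E_k$ of $M$ and satisfy $M v_k = \cos\varphi_k\, v_k - \sin\varphi_k\, v_{n+k}$, $M v_{n+k} = \sin\varphi_k\, v_k + \cos\varphi_k\, v_{n+k}$. Equivalently, the complex combinations $v_k \mp i v_{n+k}$ are eigenvectors of $M$ with eigenvalues $e^{\pm i \varphi_k}$. With the sign/normalization convention of Thm.~\ref{thm:s_diag_r} and Conv.~\ref{conv:ab} — the one fixing which eigenvector is called $a_{j_k}$ via the sign of $\mathrm{Im}\langle a_{j_k}, J a_{j_k}\rangle$ — one gets $a_{j_k} = c_k (v_k - i v_{n+k})$ for some real normalization constant $c_k$ (possibly $c_k = 1$ depending on how the paper normalizes; this has to be checked against the symplectic normalization $\langle a_{j_k}, J a_{j_k}\rangle = i$ or similar). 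Then a direct computation gives
\[
Z_k = a_{j_k} a_{j_k}^H + \bar a_{j_k} a_{j_k}^{tr} = c_k^2\big[(v_k - i v_{n+k})(v_k + i v_{n+k})^{tr} + (v_k + i v_{n+k})(v_k - i v_{n+k})^{tr}\big] = 2 c_k^2\big(v_k v_k^{tr} + v_{n+k} v_{n+k}^{tr}\big),
\]
the imaginary cross terms cancelling. Since $v_k v_k^{tr} + v_{n+k} v_{n+k}^{tr} = V(e_k e_k^{tr} + e_{n+k} e_{n+k}^{tr})V^{tr} = V P_k V^{tr}$, we conclude $V P_k V^{tr} = (2 c_k^2)^{-1} Z_k$, hence $G = \sum_k g_k Z_k$ with $g_k = b_k/(2 c_k^2) \in \mathbb{R}$.

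The part that needs the most care is bookkeeping rather than ideas: matching the sign and normalization conventions so that the columns of $V$ really do assemble into the specific eigenvectors $a_{j_k}$ labelled by the representation system of Conv.~\ref{conv:ab}, including confirming that $a_{j_k}$ (and not its conjugate) is the one picked out by the positivity of $\mathrm{Im}\langle a_{j_k}, J a_{j_k}\rangle$, and tracking the exact constant $c_k$ so that the claimed realness of $g_k$ comes out cleanly. Everything else — the plane-by-plane decomposition of $D$, the cancellation of the imaginary cross terms in $Z_k$, and the passage back through $V$ — is routine linear algebra. One should also note the converse direction is immediate: any $\sum_k g_k Z_k$ is real symmetric (each $Z_k$ is, being a Hermitian matrix $a a^H$ plus its transpose-conjugate written as $\bar a a^{tr}$, which is real symmetric) and, being of the form $V D V^{tr}$ by the above, is $M$-invariant by Thm.~\ref{thm:m_invariance}.
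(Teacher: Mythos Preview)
Your argument is correct. The paper's formal proof (Appendix~\ref{subs:symbasis}) takes a genuinely different, Lie-algebraic route: it recasts $M$-invariance of $G$ as $[M, JG] = 0$ with $JG$ lying in the Lie algebra $\mathfrak{g}$ of $\mathrm{Sp}(2n;\mathbb{R})$, diagonalizes $\mathrm{ad}(M)$ to show that the commutant is spanned by the rank-one matrices $a_i b_i^H$, builds from these a $J$-unitary basis $\{C_k\}$, and then isolates the real symmetric part as the span of the $C_k^R$. That argument is independent of Thm.~\ref{thm:m_invariance} and exposes the abelian Lie-subalgebra structure behind the result. Your approach, by contrast, is shorter and more elementary precisely because it leans on Thm.~\ref{thm:m_invariance} (and hence on Lemma~\ref{lem:r_invariance}); in fact it is exactly the equivalence the paper spells out in the paragraph immediately following the theorem, where it observes $V(e_k + i e_{n+k}) = a_{j_k}$ and derives $Z_k = 2\, V(e_k e_k^{tr} + e_{n+k} e_{n+k}^{tr}) V^{tr}$.

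On the bookkeeping you flagged: with the paper's rotation convention one has $M(v_k + i v_{n+k}) = e^{i\varphi_k}(v_k + i v_{n+k})$, so the eigenvector with eigenvalue $\lambda_{j_k} = e^{i\varphi_k}$ is $a_{j_k} = v_k + i v_{n+k}$ (plus sign, not minus), and the normalization in the proof of Thm.~\ref{thm:s_diag_r} gives $c_k = 1$; hence $g_k = b_k/2$. This sign slip is harmless for your computation of $Z_k$, since the imaginary cross terms cancel either way.
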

\begin{proof}
The statement can be found in e.g. Ref. \cite{bib:nash2006} in a slightly different version. A proof was included in Appendix \ref{subs:symbasis}. In \ref{subs:symbasis} -- and only there -- we have changed our notation to $M^{tr}$. This means, by Conv. \ref{conv:ab}, that the eigenvectors $a_{j_k}$ appearing there are proportional to the vectors $J a_{j_k}$ here. 
\end{proof}
If we recall that by construction $V(e_k + i e_{n + k}) = a_{j_k}$ for $k \in \bar n$ holds (see the proof of Thm. \ref{thm:s_diag_r}), then we see that Thm. \ref{cor:real_sym_inv_basis} is equivalent to Thm. \ref{thm:m_invariance}, which was obtained in a rather different manner:
\begin{equation}
V D V^{tr} = G = \sum_{k = 1}^n g_k Z_k \;\; \Leftrightarrow \;\; D = \sum_{k = 1}^n g_k V^{-1}(a_{j_k} a_{j_k}^H + \bar a_{j_k} a_{j_k}^{tr}) V^{-tr} = \sum_{k = 1}^n 2 g_k (e_k e_k^{tr} + e_{n + k} e_{n + k}^{tr}) .  
\end{equation}
In some sense Thm. \ref{thm:m_invariance} lays at the heart of computing emittances (i.e. the entries of $D$) out of covariance matrices using linear normal forms and therefore in answering the questions raised in the introduction of this section. We will now turn our attention to the emittance and optics computation.

\subsection{Classification of invariant covariance matrices} \label{ssubs:class}

The matrix $G$ in Thm. \ref{thm:m_invariance} was only assumed to be symmetric. In particular this includes our case, where $G$ comes from a covariance matrix of a particle distribution. By our remark in paragraph \ref{subs:prelims}, these matrices are additionally positive semidefinite. In the typical situation of tracking the distribution through the accelerator, the beam will not be degenerated, i.e. the diagonal entries of $D$ in Thm. \ref{thm:m_invariance}, which correspond to the emittances, as we shall see in Sec. \ref{subsec:emit_cov}, are always positive. This means that we can well assume that $G$ is positive definite.\\\\
In this case even more can be said about such invariants: Thm. \ref{thm:m_invariance} effectively makes a statement on the conditions by which $G$ can be diagonalized by matrix congruence via a symplectic map $V$. Since $V$ was constructed via $M$ by Thm. \ref{thm:s_diag_r}, the linear optics of the machine is -- up to the tune -- effectively contained in $V$. On the other hand, an abstract symplectic diagonalization of $G$ without knowledge of the optics is always possible in form of Williamson's Theorem:

\begin{thm} [Williamson \cite{bib:william1936, bib:william1937}] \label{thm:williamson}
Let $G$ be a $2n$-dimensional real symmetric positive definite matrix. Then there exist $S \in \mathrm{Sp}(2n; \mathbb{R})$ so that 
\begin{equation}
G = S^{tr} D S  \label{eq:gstds}
\end{equation}
with $D = \mathrm{diag}(\Lambda, \Lambda)$ and $\Lambda = \mathrm{diag}(\lambda_1, ..., \lambda_n)$.
\end{thm}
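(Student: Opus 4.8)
The plan is to reduce Williamson's statement to the real normal form of skew-symmetric matrices, using the positive definite square root of $G$ to produce such a matrix in the first place. First I would set $A := G^{1/2}$, the unique symmetric positive definite square root of $G$ (invertible since $G$ is), and form $B := A J A$, where $J$ is the standard symplectic unit. From $A = A^{tr}$ and $J^{tr} = -J$ one sees at once that $B$ is real skew-symmetric, and it is invertible because $A$ and $J$ are; equivalently $iB$ is Hermitian, so $B$ has nonzero, purely imaginary eigenvalues occurring in conjugate pairs $\pm i\lambda_k$ with $\lambda_k > 0$, $k \in \overline n$. Since $A^{-1} B A = J A^2 = J G$, these $\lambda_k$ are precisely the moduli of the eigenvalues of $J G$ — the numbers destined to appear in $\Lambda$.

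Next I would invoke the real normal form for skew-symmetric matrices: there is an orthogonal $O$ with
\[
O^{tr} B O = \begin{pmatrix} 0 & \Lambda \\ -\Lambda & 0 \end{pmatrix}, \qquad \Lambda := \mathrm{diag}(\lambda_1, \dots, \lambda_n),
\]
the off-diagonal block arrangement being obtained from the usual direct sum of elementary $2\times 2$ blocks by an orthogonal permutation absorbed into $O$. Introducing the positive diagonal matrix $\mathcal{M} := \mathrm{diag}(\Lambda^{1/2}, \Lambda^{1/2})$, a one-line computation gives $\mathcal{M}^{-1}(O^{tr} B O)\mathcal{M}^{-1} = J$. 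Substituting $B = A J A$ and setting $S := A\, O\, \mathcal{M}^{-1}$ turns this identity into $S^{tr} J S = J$, so $S$ is symplectic.

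To conclude, $S = A O \mathcal{M}^{-1}$ gives $A = S \mathcal{M} O^{tr}$, and since $A$ is symmetric and $O$ orthogonal,
\[
G = A^2 = A A^{tr} = (S \mathcal{M} O^{tr})(S \mathcal{M} O^{tr})^{tr} = S\, \mathcal{M}^2\, S^{tr},
\]
with $\mathcal{M}^2 = \mathrm{diag}(\Lambda, \Lambda) =: D$ of exactly the required form. Since the real symplectic group is closed under transposition, replacing $S$ by $S^{tr}$ puts this in the asserted shape $G = S^{tr} D S$. The only step I expect to carry genuine content is the skew-symmetric normal form together with the accompanying bookkeeping — in particular, checking that the permutation carrying the interleaved $2\times 2$-block form to $\mathrm{diag}(\Lambda, \Lambda)$ is orthogonal and can thus be folded into $O$ without spoiling symplecticity of $S$; everything else only uses $A = A^{tr}$ and $J^{tr} = -J$.
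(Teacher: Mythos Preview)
Your proof is correct and follows essentially the same route as the paper's sketch (attributed to Ref.~\cite{bib:simon1998}): form a skew-symmetric matrix from $J$ and the square root of $G$, bring it to block-antidiagonal form by an orthogonal map, then rescale by a positive diagonal factor to obtain symplecticity. The only cosmetic difference is that the paper works with $G^{-1/2} J G^{-1/2}$ (so that $D = \Omega^{-1}$), whereas you work with $G^{1/2} J G^{1/2}$ (so that $D = \Lambda$ directly); these two skew-symmetric matrices are negatives of inverses of one another, and the resulting $S$ and $D$ agree.
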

\begin{proof}
 A concise proof of this theorem can be found in Ref. \cite{bib:simon1998}. Because of its relevance in our computations we will sketch the proof here. Since $G$ is symmetric and positive definite, it can be diagonalized by an orthogonal matrix and all its eigenvalues are positive. Hence it admits an invertible square root $G^{1/2}$. Now anti-diagonalize the antisymmetric matrix $G^{-1/2} J G^{-1/2}$, i.e. find an orthogonal matrix $A$ so that
\begin{equation}
A^{tr} G^{-1/2} J G^{-1/2} A = \left(\begin{array}{cc} 0 & \Omega \\ -\Omega & 0 \end{array}\right)
\end{equation}
holds, where $\Omega$ is a diagonal $n \times n$-matrix with positive entries. Then set 
$S := \mathrm{diag}(\Omega^{1/2}, \Omega^{1/2}) A^{tr} G^{1/2}$ and verify symplecticity and Eq. \eqref{eq:gstds} (with $D = \Omega^{-1}$).
\end{proof}
The set of positive real quantities $\lambda_i > 0$ obtained in the above manner is known in the literature as the \textit{symplectic spectrum} of $G$ \cite{bib:degosson2006}. As already indicated in the above proof, and in particular by the next theorem, it will become apparent that the symplectic spectrum is $S$-independent. However, the symplectic matrices diagonalizing $G$ are not unique, as we shall see by the examples given below. But by the next classification theorem they are not 'too far away' from each other:
\begin{thm} \label{thm:mt}
Let $M \in \mathrm{Sp}(2n; \mathbb{R})$ be a symplectic matrix with mutually distinguishable eigenvalues and
$G_i$, $i = 1, 2$, symmetric and positive definite, so that
\begin{equation}
  M G_i M^{tr} = G_i  \label{eq:inv}
\end{equation}
hold.  Let $G_i = S_i^{tr} D_i S_i$ be symplectic diagonalizations
with $D_i = \mathrm{diag}(\Lambda_i, \Lambda_i)$. 
Then it holds $S_2 S_1^{-1} \in SO(2n; \mathbb{R})$.
\end{thm}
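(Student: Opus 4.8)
The plan is to funnel the statement through Theorem~\ref{thm:m_invariance}: the symplectic matrix $V$ it produces depends only on $M$, so it serves $G_1$ and $G_2$ simultaneously, and this is exactly what forces $S_1$ and $S_2$ to be close. First I would apply Theorem~\ref{thm:m_invariance} to each $G_i$: from $M G_i M^{tr}=G_i$ with $G_i$ symmetric one obtains a diagonal $D_i^{(0)}=\mathrm{diag}(\Lambda_i^{(0)},\Lambda_i^{(0)})$ with $G_i=V D_i^{(0)} V^{tr}$, the \emph{same} $V$ for $i=1,2$. Since $G_i$ is positive definite and $V$ invertible, $D_i^{(0)}$ is positive definite, and because $V^{tr}$ is again symplectic, $(V^{tr},D_i^{(0)})$ is a Williamson decomposition of $G_i$ in the sense of Theorem~\ref{thm:williamson}. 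Comparing it with the given one, $S_i^{tr} D_i S_i = V D_i^{(0)} V^{tr}$, and multiplying by $S_i^{-tr}$ on the left and $S_i^{-1}$ on the right yields $D_i = W_i D_i^{(0)} W_i^{tr}$ with $W_i:=S_i^{-tr}V\in\mathrm{Sp}(2n;\mathbb{R})$.

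So the problem reduces to understanding which symplectic matrices carry one Williamson-normal matrix to another by congruence. A symplectic congruence preserves the symplectic spectrum (if $G'=QGQ^{tr}$ with $Q$ symplectic then $JG'=Q^{-tr}(JG)Q^{tr}$ is conjugate to $JG$), so $D_i$ and $D_i^{(0)}$ have the same diagonal entries up to a permutation; that permutation is realized by $\mathrm{diag}(P,P)$ with $P$ a permutation matrix, which is orthosymplectic, so absorbing it into $W_i$ we may assume $D_i=D_i^{(0)}=:\mathcal{D}_i$, i.e. $\mathcal{D}_i=W_i\mathcal{D}_i W_i^{tr}$.

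The heart of the argument is the claim: if $W\in\mathrm{Sp}(2n;\mathbb{R})$ satisfies $W\mathcal{D}W^{tr}=\mathcal{D}$ with $\mathcal{D}=\mathrm{diag}(\Lambda,\Lambda)$ positive definite, then $W$ is orthosymplectic. Here I would set $E:=\mathrm{diag}(\Lambda^{1/2},\Lambda^{1/2})$, so $E^2=\mathcal{D}$, $E$ is symmetric positive definite and $EJ=JE$. From $WE^2W^{tr}=E^2$ the matrix $U:=E^{-1}WE$ satisfies $UU^{tr}=I$, i.e. it is orthogonal. Substituting $W=EUE^{-1}$ into $WJW^{tr}=J$ and using $EJ=JE$ collapses the symplectic condition to $UKU^{tr}=K$ with $K:=\mathcal{D}^{-1}J$; since $U$ is orthogonal this says $U$ commutes with $K$, hence with $K^2=-\mathcal{D}^{-2}$, hence with $\mathcal{D}$ and with $E$. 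Therefore $W=EUE^{-1}=U$ is orthogonal as well as symplectic. Feeding this back, each $W_i$ is orthosymplectic; from $W_i=S_i^{-tr}V$ one solves $S_i=W_iV^{tr}$ (using $W_i^{-tr}=W_i$), whence $S_2 S_1^{-1}=W_2 W_1^{-1}$ lies in the orthosymplectic group, and since symplectic matrices have unit determinant, in particular in $SO(2n; \mathbb{R})$.

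I expect the third step, the claim that a symplectic congruence fixing a Williamson-normal $\mathcal{D}$ must be orthosymplectic, to be the real obstacle; the rest is bookkeeping with Theorems~\ref{thm:s_diag_r}, \ref{thm:m_invariance} and \ref{thm:williamson}. The decisive ingredients are the conjugation $W\mapsto E^{-1}WE$ by the square root of $\mathcal{D}$ and the identity $K^2=-\mathcal{D}^{-2}$, which upgrades ``$U$ commutes with $\mathcal{D}^{-1}J$'' to ``$U$ commutes with $\mathcal{D}$''. A minor point to watch is that the permutation matching $D_i$ to $D_i^{(0)}$ in the second step really can be taken orthosymplectic, so that absorbing it does not affect the conclusion.
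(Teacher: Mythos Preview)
Your argument is correct and considerably shorter than the paper's own proof, which occupies two appendix subsections. The paper does not reuse Theorem~\ref{thm:m_invariance}; instead it Cholesky-factors $G_i=P_i^H P_i$, shows $P_iMP_i^{-1}$ is unitary, analyzes the eigenvectors of $M$, $M^{tr}$ and $JD_i$ simultaneously, builds orthosymplectic permutations to align the eigensystems, and finally shows that $S_2S_1^{-1}$ acts on a certain unitary basis $\{f_j\}$ of $\mathbb{C}^{2n}$ by pure phases, hence is unitary and, being real, orthogonal. Your route is structurally different: you exploit that Theorem~\ref{thm:m_invariance} already furnishes a \emph{common} $V$ for both $G_i$, reducing the question to the stabilizer problem ``which symplectic $W$ fix a Williamson-normal $\mathcal{D}$ under congruence?''; the conjugation $U=E^{-1}WE$ by $E=\mathcal{D}^{1/2}$ together with $K^2=-\mathcal{D}^{-2}$ is an elegant way to settle that. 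What the paper's longer approach buys is self-containedness (it does not lean on Theorem~\ref{thm:m_invariance}) and, as a by-product, the explicit observation that the eigenvalues of $M$ lie on the unit circle; what yours buys is brevity and a clean separation of the two ingredients (common diagonalizer, stabilizer of the normal form).

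One small point you should make explicit: Theorem~\ref{thm:m_invariance} assumes the eigenvalues of $M$ lie on the unit circle, whereas the hypothesis of Theorem~\ref{thm:mt} does not state this. It is, however, a consequence of the existence of a positive definite $M$-invariant $G$: with $G=P^{tr}P$ one has $(PMP^{-1})^{tr}(PMP^{-1})=1$, so $PMP^{-1}$ is orthogonal and $M$ has unimodular spectrum. Add this sentence before invoking Theorem~\ref{thm:m_invariance}. Your remark about the permutation in step two is well placed; $\mathrm{diag}(P,P)$ with $P$ a permutation matrix is indeed orthosymplectic, so absorbing it into $W_i$ is harmless.
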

For the proof of this theorem we have dedicated two smaller paragraphs \ref{subsec:evev} and \ref{subsec:evev2} in the appendix.
One immediate implication together with Thm. \ref{thm:m_invariance} is the justification of the pretty remarkable result that one can find the emittances out of a covariance matrix alone, without having knowledge of the underlying optics given by the one-turn map. The only assumption on the covariance matrix is that it belongs to a matched distribution with respect to the unknown linear optics: Namely if $G$ is given with a decomposition $G = V D V^{tr}$ according to Thm. \ref{thm:m_invariance} and another decomposition $G = S^{tr} D_1 S$ has been found, for example from Thm. \ref{thm:williamson}, then we obtain
with the orthogonal $W := V^{tr} S^{-1}$:
\[
D_1 = S^{-tr} G S^{-1} = S^{-tr} V D V^{tr} S^{-1} = W^{tr} D W
\]
and since the eigenvalues can not be changed by orthogonal matrix congruence, $D_1$ and $D$ must contain the same entries up to a suitable permutation. In Sec. \ref{subsec:emit_cov} we will see how the diagonal entries of $D$ are connected to the classical emittances by Lapostolle (in the 2D case).\\\\
\begin{table}[h] \centering
\addtolength{\tabcolsep}{-0.4cm}
\hspace{-1.5cm}
\resizebox{\textwidth}{!}{
\begin{tabular}{SSSSSS}
6.179e-06 & 1.536e-08 & 1.329e-05 & 1.138e-07 & -8.148e-11 & -4.819e-07 \\
1.536e-08 & 2.342e-06 & 8.767e-08 & 4.230e-10 & -4.951e-08 & -1.188e-08 \\
1.329e-05 & 8.767e-08 & 2.922e-02 & 1.713e-07 & 5.298e-09 & 1.791e-06 \\
1.138e-07 & 4.230e-10 & 1.713e-07 & 3.084e-09 & -7.580e-13 & 5.042e-09 \\
-8.148e-11 & -4.951e-08 & 5.298e-09 & -7.580e-13 & 2.480e-09 & 3.380e-10 \\
-4.819e-07 & -1.188e-08 & 1.791e-06 & 5.042e-09 & 3.380e-10 & 1.371e-06 \\ 
\end{tabular}
}
\caption[Covariance matrix example during SPS tracking]{Example of a covariance matrix belonging to a stable particle beam in the SPS at $Q_x=20.144$.}
\label{tab:emit_g}
\end{table}
\begin{table}[h] \centering
\addtolength{\tabcolsep}{-0.4cm}
\hspace{-1.5cm}
\resizebox{\textwidth}{!}{
\begin{tabular}{SSSSSS}
7.121e-08 & -2.906e-11 & 4.643e-08 & -1.405e-10 & -1.657e-11 & 2.463e-08 \\
-2.906e-11 & 5.826e-08 & -9.649e-10 & 8.306e-11 & -1.520e-10 & 1.535e-08 \\
4.643e-08 & -9.649e-10 & 2.236e-04 & -6.934e-09 & -2.722e-09 & -8.960e-06 \\
-1.405e-10 & 8.306e-11 & -6.934e-09 & 7.090e-08 & -2.198e-10 & -4.618e-08 \\
-1.657e-11 & -1.520e-10 & -2.722e-09 & -2.198e-10 & 5.762e-08 & 1.440e-08 \\
2.463e-08 & 1.535e-08 & -8.960e-06 & -4.618e-08 & 1.440e-08 & 1.795e-04 \\ 
\end{tabular}
}
\caption[Normalizing an SPS one-turn map using a matched distribution]{Matrix $V^{-1} G V^{-tr}$ according to Thm. \ref{thm:m_invariance}, where $G$ is the matrix of 
Tab. \ref{tab:emit_g}. The matrix is not perfectly diagonal because the full lattice contains additional effects like non-linearities, small mismatches and numeric noise which are not considered here.}
\label{tab:emit_vgv}
\end{table}
In Tab. \ref{tab:emit_g} we show an example of a covariance matrix $G$ coming from a PyOrbit tracking simulation (without space charge) in the CERN SPS. Since $G$ is determined under the effect of small imperfections in the lattice, $V$ is not perfectly diagonalizing $G$, see Tab. \ref{tab:emit_vgv}. Let us denote by $|G - G_*|$ the error between $G$ and an ideally matched covariance matrix $G_*$ of the lattice (see Sec. \ref{subsec:emit_cov}).
We then see that the diagonal entries after diagonalizing $G$ by a symplectic matrix $S$ according to Thm. \ref{thm:williamson} belong to such a $G_*$, see Tab. \ref{tab:emit_sgs}. Remaining small errors stem from the fact that $S$ and $V$ are determined by -- and involved in -- two different procedures.\\\\
The check whether the matrix $V^{tr} S^{-1}$ is orthogonal is depicted in Tab. \ref{tab:emit_wtw}, which would be the unit matrix if $V$ would perfectly diagonalize $G$. Let us summarize this finding in the following corollary.
\begin{table}[h] \centering
\addtolength{\tabcolsep}{-0.8cm}
\hspace{-1.8cm}
\begin{tabular}{SSS}
5.793521e-08 &
7.104235e-08 &
2.001317e-04
\end{tabular}
\caption[Symplectic spectrum of a tracking covariance matrix]{Symplectic spectrum of $G$ according to Thm. \ref{thm:williamson} with respect to the horizontal, vertical and longitudinal direction. If following the procedure in Sec. \ref{subsec:emit_cov} by computing the emittances using the linear lattice optics (contained in the linear normal form $V$), the emittances are $5.793903 \cdot 10^{-8}$, $7.105457 \cdot 10^{-8}$ and $2.015400 \cdot 10^{-4}$, respectively. The agreement is remarkable.}
\label{tab:emit_sgs}
\end{table}
\begin{table}[h] \centering
\addtolength{\tabcolsep}{-0.33cm}
\hspace{-1.5cm}
\resizebox{\textwidth}{!}{
\begin{tabular}{SSSSSS}
1.002e+00 & 1.446e-03 & 4.914e-04 & -1.860e-03 & 5.269e-04 & 8.191e-05 \\
1.446e-03 & 1.006e+00 & -8.477e-05 & 5.298e-04 & -2.643e-03 & 6.747e-05 \\
4.914e-04 & -8.477e-05 & 1.117e+00 & 1.122e-04 & 7.149e-05 & -4.477e-02 \\
-1.860e-03 & 5.298e-04 & 1.122e-04 & 9.978e-01 & -1.437e-03 & -4.437e-04 \\
5.269e-04 & -2.643e-03 & 7.149e-05 & -1.437e-03 & 9.945e-01 & 7.311e-05 \\
8.191e-05 & 6.747e-05 & -4.477e-02 & -4.437e-04 & 7.311e-05 & 8.970e-01
\end{tabular}
}
\caption[Orthogonality counter-check]{Counter-check whether $W := V^{tr} S^{-1}$ is orthogonal, i.e. what is shown here is the matrix $W^{tr} W$. Hereby the symplectic $S$ was determined following the procedure in the proof of Thm. \ref{thm:williamson}. The error $|W^{tr} W - 1| = 0.168653$ can be explained by imperfections induced from the full lattice during the tracking process, resulting in divergences of $G$ towards a perfectly linearly matched solution.}
\label{tab:emit_wtw}
\end{table}
\begin{con}[Emittance from covariance matrix by symplectic diagonalization]
Let $G$ be the covariance matrix of a linearly matched particle distribution, then the emittances are given by its symplectic spectrum.
\end{con}
Moreover, by means of Thm. \ref{thm:mt} and the help of the Iwasawa decomposition, we can regain optics functions out of a covariance matrix, by utilizing this natural parameterization. The Iwasawa decomposition reads \cite{bib:terras1988}:
\begin{thm}[Iwasawa decomposition of symplectic matrices] \label{thm:iwasawa}
Let $S \in \mathrm{Sp}(2n ; \mathbb{R})$. Then there exist unique symplectic matrices $K$, $A$ and $N$ with $S = KAN$ and the following properties:
\begin{align*}
K &\in \mathrm{Sp}(2n; \mathbb{R}) \cap O(2n; \mathbb{R}) , \\
A &= \mathrm{diag}(D, D^{-1}) , \;\; D = 
\mathrm{diag}(b_1, ..., b_n) \text{ with } b_j > 0 , \\
N &= \left(\begin{array}{cc} E & F \\ 0 & E^{-tr} \end{array}\right) , \text{$E$ real unit upper triangular, $E F^{tr} = F E^{tr}$} .
\end{align*}
\end{thm}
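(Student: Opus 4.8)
This statement is the Iwasawa ($KAN$) decomposition for the group $\mathrm{Sp}(2n;\mathbb{R})$, whose maximal compact subgroup is $K=\mathrm{Sp}(2n;\mathbb{R})\cap O(2n;\mathbb{R})$ and whose minimal parabolic subgroup is $B=AN$, so one route is simply to invoke the general Iwasawa theorem for reductive groups. Instead I would give a self-contained construction by a symplectic Gram--Schmidt procedure, which also exhibits directly why the three factors take the stated block forms, and then treat uniqueness separately.

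For existence, write the columns of $S$ as $s_1,\dots,s_{2n}$; since $S$ is symplectic they form a symplectic basis, and in particular $s_1,\dots,s_n$ span a Lagrangian subspace $L$ (the upper left block of $J$ vanishes). First I would run ordinary Gram--Schmidt on $s_1,\dots,s_n$ to obtain an orthonormal basis $k_1,\dots,k_n$ of $L$ with $\mathrm{span}\{k_1,\dots,k_j\}=\mathrm{span}\{s_1,\dots,s_j\}$ and $\langle k_j,s_j\rangle>0$. Next I set $k_{n+i}:=-Jk_i$ for $i=1,\dots,n$; using $J^{tr}J=1$, $J^2=-1$ and the Lagrangian property of $L$ one checks that $k_1,\dots,k_{2n}$ is an orthonormal \emph{symplectic} basis, so that $K:=(k_1\,|\,\cdots\,|\,k_{2n})$ lies in $\mathrm{Sp}(2n;\mathbb{R})\cap O(2n;\mathbb{R})$. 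Now consider $R:=K^{-1}S=K^{tr}S$, whose $(a,b)$ entry is $\langle k_a,s_b\rangle$: for $a,b\le n$ the Gram--Schmidt property makes this block upper triangular with strictly positive diagonal, and for $a>n$, $b\le n$ the entry $\langle -Jk_{a-n},s_b\rangle$ vanishes because $k_{a-n}$ and $s_b$ both lie in $L$ and $L$ is Lagrangian. Hence $R=\begin{pmatrix}R_{11}&R_{12}\\0&R_{22}\end{pmatrix}$ with $R_{11}$ upper triangular and positive on the diagonal, and imposing $R^{tr}JR=J$ on this block shape forces $R_{22}=R_{11}^{-tr}$ together with $R_{11}^{-1}R_{12}$ symmetric. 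Factoring $R_{11}=D_0E$ with $D_0$ its (positive) diagonal part and $E$ unit upper triangular then yields $R=\mathrm{diag}(D_0,D_0^{-1})\begin{pmatrix}E&D_0^{-1}R_{12}\\0&E^{-tr}\end{pmatrix}$, i.e.\ $S=KAN$ with $A=\mathrm{diag}(D_0,D_0^{-1})$ and $N$ as claimed, the relation $EF^{tr}=FE^{tr}$ being precisely the symmetry of $E^{-1}F=R_{11}^{-1}R_{12}$.

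For uniqueness, suppose $KAN=K'A'N'$. Then $K'^{-1}K=A'N'N^{-1}A^{-1}$ is simultaneously orthogonal and block upper triangular of the form $\begin{pmatrix}P&Q\\0&P^{-tr}\end{pmatrix}$ (since $B=AN$ is a group); orthogonality forces $Q=0$ and $P$ orthogonal, while on the other hand $P=D_0'E'E^{-1}D_0^{-1}$ is upper triangular with positive diagonal, hence $P=1$. Therefore $K'=K$, so $A'N'=AN$, and comparing the diagonal parts of the triangular blocks gives $A'=A$ and then $N'=N$.

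The only part I expect to demand real attention is bookkeeping: verifying that $k_{n+i}=-Jk_i$ (with the sign fixed relative to the chosen $J$) genuinely lands $K$ in $\mathrm{Sp}(2n;\mathbb{R})\cap O(2n;\mathbb{R})$, and confirming the two elementary facts used in the uniqueness step --- a block upper triangular orthogonal matrix is block diagonal, and an upper triangular orthogonal matrix with positive diagonal is the identity. There is no conceptual obstacle beyond making the concrete construction land exactly in the prescribed normal forms for $K$, $A$, $N$.
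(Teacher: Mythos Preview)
Your argument is correct. The symplectic Gram--Schmidt construction is clean: the key observations (that the first $n$ columns of a symplectic matrix span a Lagrangian, that $k_{n+i}:=-Jk_i$ completes an orthonormal system to a symplectic--orthonormal one under the paper's convention $J=\bigl(\begin{smallmatrix}0&1\\-1&0\end{smallmatrix}\bigr)$, and that the symplecticity of the resulting block upper triangular $R$ forces $R_{22}=R_{11}^{-tr}$ and $R_{11}^{-1}R_{12}$ symmetric) all check out, and the uniqueness argument via ``orthogonal $\cap$ upper triangular with positive diagonal $=1$'' is the standard one.

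The paper itself does not prove this theorem: it cites Terras for the statement and then records, as a separate result attributed to Tam, a computational recipe (LDL--Cholesky of the upper-left block of $S^{tr}S$) to extract $A$ and $N$, again without proof. That recipe is in fact dual to yours: since $S^{tr}S=N^{tr}A^2N$, the upper-left block equals $E^{tr}D_0^{2}E$, so its LDL factorisation recovers $E$ and $D_0$ directly without first building $K$. Your Gram--Schmidt route has the advantage of being self-contained and of producing $K$ explicitly along the way; the Cholesky route is slightly more efficient if one only needs $A$ and $N$ (as the paper mostly does) and avoids orthonormalising $2n$ vectors. Both are essentially the symplectic QR factorisation read two ways.
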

In Thm. \ref{thm:tam} we summarize a result in \cite{bib:tinyautam2006} of how to compute such a decomposition. If we have found a symplectic diagonalization $G = S^{tr} D S$ of an invariant covariance matrix $G$, we can proceed and determine its decomposition: $S = KAN$. On the other hand, a linear normal form $V$, block-diagonalizing $M$, can also be decomposed as $V^{tr} = K'A'N'$. By Thm. \ref{thm:mt} we have $S = X V^{tr}$ with an orthosymplectic $X$, and by uniqueness of the Iwasawa decomposition it therefore follows $K = X K'$, $A = A'$ and $N = N'$. So we already found two optics factors $A'$ and $N'$ only by examination of the covariance matrix $G$.\\\\
Our next goal is to understand the nature of the remaining orthosymplectic factor $X$. As a first step note that by $V R V^{-1} = M$, $M G M^{tr} = G$ and $V D' V^{tr} := G$ we have
\begin{align*}
& V R V^{-1} G V^{-tr} R^{tr} V^{tr} = M G M^{tr} = G , \\
\Rightarrow \;\; & R V^{-1} S^{tr} D S V^{-tr} R^{tr} = V^{-1} G V^{-tr} = D' ,
\end{align*}
and since $R$ commutes with $D'$ it follows
\begin{equation}
V^{-1} S^{tr} D S V^{-tr} = D' . \label{eq:xdx}
\end{equation}
For the symplectic spectrum we could have used any (positive definite) covariance matrix, but for a determination of the optics functions we will now have to make one additional assumption: Namely that the symplectic spectrum (respectively emittances) is not degenerate, which means that all emittances are mutually distinguishable. By suitable orthosymplectic permutations (their construction is given in Prop. \ref{dfnsz:strns}) on $K$ and $K'$ let us arrange them so that without loss of generality $D' = D$ and the $2n$ entries of $D$ are in the following order:
\begin{equation}
d_1 = d_{n + 1} < d_2 = d_{n + 2} < ... < d_n = d_{2n} .
\end{equation}
It follows from Eq. \eqref{eq:xdx}, which now reads $D X = X D$, that $d_i X_{ij} = X_{ij} d_j$ or, on other words, if $d_i \not = d_j$ then $X_{ij} = 0$, and, by exchanging symbols, also $X_{ji} = 0$ in that case. If we now take a look at the indices $k$ and $k + n$, then we see that for all $l \not \in \{ k, k + n \}$ it holds $X_{k, l} = 0 = X_{l, k}$ and $X_{l, k + n} = 0 = X_{k + n, l}$, so concerning these rows and columns $X$ must have the following form:
\begin{align*}
X &= \left(\begin{array}{ccccccccccc} 
& & & 0 & & & & 0 & & & \\
& & & \vdots & & & & \vdots & & & \\
& & & 0 & & & & 0 & & & \\
0 & \cdots & 0 & * & 0 & \cdots & 0 & * & 0 & \cdots & 0 \\
 &  &  & 0 & & & & 0 & & & \\
 &  &  & \vdots & & & & \vdots & & & \\
 &  &  & 0 & & & & 0 & & & \\
0 & \cdots & 0 & * & 0 & \cdots & 0 & * & 0 & \cdots & 0 \\
& & & 0 & & & & 0 & & & \\
& & & \vdots & & & & \vdots & & & \\
& & & 0 & & & & 0 & & & 
\end{array}\right)
\begin{array}{l}
\hphantom{0} \\
\hphantom{\vdots} \\
\hphantom{0} \\
\gets k \\
\hphantom{0} \\
\hphantom{\vdots} \vspace{0.22cm} \\
\hphantom{0} \\
\gets k + n \\
\hphantom{0}\\
\hphantom{\vdots} \\
\hphantom{0}
\end{array} \\
\hphantom{X} & \hphantom{=} \hspace{-0.68cm} 
\left.\begin{array}{cccccccc} 
\hphantom{0} & \hphantom{\cdots} & \hphantom{0} & \hspace{0.78cm} k \uparrow & \hspace{1.4cm} \uparrow k + n
\end{array}\right.
\end{align*}
By our assumption on the mutually distinguishable pairs, and because every such group of four entries must be orthosymplectic on its own, we must have $X = D_1 + J D_2$ as in lemma \ref{lem:r_invariance} with $D_1^2 + D_2^2 = 1$. By Rmk. \ref{rmk:freedom_v} this was the freedom in the choice of $V$, which leaves the individual Floquet-planes invariant. So without loss of generally $K$ can be considered as equal to $K'$ up to a symplectic permutation matrix which is exchanging the order of the symplectic spectrum and up to this freedom. Let us summarize this result:
\begin{con} \label{con:iwa_opt}
If $G$ is the covariance matrix of a linearly matched particle distribution, and $S^{tr} G S = D$ a symplectic diagonalization with mutually distinguishable emittances, then the optics functions are contained in the terms $K$, $A$ and $N$ of the Iwasawa decomposition $S = KAN$. In particular we can regain the linear map to Floquet-space from $S$, block-diagonalizing $M$. Hereby, the factor $K$ may contain a freedom of the form $D_1 + J D_2$ with $D_1^2 + D_2^2 = 1$, i.e. of the form $SO(2)^n$, and a suitable symplectic permutation of the components.\footnote{In other words: The freedom is a block-rotation as in Thm. \ref{thm:s_diag_r}.}
\end{con}
In analogy to what happens in the 2D case (see Example \ref{xmpl:2d} below), the term $N$ can be understood as a \textit{lens}- or \textit{drift factor} and the term $A$ can be described as a \textit{magnification} or \textit{squeezing factor}. As we shall see in Example \ref{xmpl:4d_cpl}, the term $K$ may also contain coupling and optic terms in a non-trivial fashion. Under the condition in conclusion \ref{con:iwa_opt}, all those three factors, and therefore the linear parameterization to normal form, are thus dependent on the $n(2n + 1)$ parameters of the covariance matrix $G$ only, which are 3, 10 and 21 in the 2, 4 and 6 dimensional cases respectively. The freedom in the coupling term reduces these numbers by $n$, so we can expect to have 2, 8 and 18 independent optics parameters in these cases.\\\\
This freedom represents our inability to extract the tunes out of the covariance matrix alone and so we can expect that in repetitive measurements the additional $SO(2)^n$-freedom enters into the covariance matrices statistically. As we will see in Example \ref{xmpl:4d_cpl}, $K$ is not in general of the form $D_1 + J D_2$. In cases one wants to obtain the exact coupling terms of an underlying model, this will require a careful analysis in order to disentangle the freedom from these coupling terms. In any case, if the emittances of the given covariance matrix are mutually distinguishable, we obtain a linear normal form map from it which block-diagonalizes $M$ (with a certain error), as demonstrated in Tabs. \ref{tab:block_m} and \ref{tab:block_m_via_g}.\\\\
\begin{table}[h] \centering
\hspace{-1.5cm}
\resizebox{0.82\textwidth}{!}{
\begin{tabular}{SSSSSS}
0.619812 & 0.0 & 0.0 & 0.784750 & 0.0 & 0.0 \\
0.0 & -0.149110 & 0.0 & 0.0 & 0.988821 & 0.0 \\
0.0 & 0.0 & 0.995494 & 0.0 & 0.0 & -0.094829 \\
-0.784750 & 0.0 & 0.0 & 0.619812 & 0.0 & 0.0 \\
0.0 & -0.988821 & 0.0 & 0.0 & -0.149110 & 0.0 \\
0.0 & 0.0 & 0.094829 & 0.0 & 0.0 & 0.995494
\end{tabular}
}
\caption[Block-diagonal SPS one-turn map by normal form]{Block-diagonal one-turn map $M$ of the SPS at the tune $Q_x = 20.144$ for the same optics as in Tab.
\ref{tab:emit_vgv}. Note that in this example the entries are ordered with respect to $x, y, z, p_x, p_y, p_z$.}
\label{tab:block_m}
\end{table}
\begin{table}[h] \centering
\hspace{-1.5cm}
\resizebox{0.82\textwidth}{!}{
\begin{tabular}{SSSSSS}
0.617887 & -0.000045 & -0.000109 & 0.783535 & -0.008843 & -0.000242 \\
0.003670 & -0.145196 & 0.000033 & 0.006804 & 0.993562 & -0.000162 \\
0.000208 & 0.000160 & 0.996339 & -0.000144 & -0.000019 & -0.084259 \\
-0.786018 & 0.007309 & -0.000117 & 0.621575 & 0.002528 & -0.000006 \\
-0.005988 & -0.984076 & 0.000010 & 0.006479 & -0.152863 & 0.000042 \\
-0.000136 & 0.000050 & 0.106734 & -0.000040 & 0.000014 & 0.994648
\end{tabular}
}
\caption[Block-diagonal SPS one-turn map by using a covariance matrix]{Block-diagonal form of $M$ now reconstructed entirely on the information of the particle distribution of Tab. \ref{tab:emit_g}. I.e. what is shown here is $S^{-tr} M S^{tr}$ where $S$ was extracted from the covariance matrix using Thm. \ref{thm:williamson}. The agreement to Tab. \ref{tab:block_m} is very good. Note that we had to conjugate this matrix by the symplectic transposition $T_{12}$ which are discussed in Prop. \ref{dfnsz:strns}.}
\label{tab:block_m_via_g}
\end{table}
As we shall see in examples \ref{xmpl:2d} and \ref{xmpl:4d_cpl}, the three diagonal terms of $A$ are directly related to the three classical optics beta functions of the Courant-Snyder parameterization in the uncoupled case. But as they are also appearing in a general coupled situation and are determined by a natural procedure -- and also to distinguish them from alternative 'generalized' $\beta$-functions discussed below -- we call these 'natural' beta functions. In Tab. \ref{tab:iwasawa_a} we show the matrices $A$ determined from the normal form map $V^{tr}$ and the symplectic map $S$ (coming from the covariance matrix $G$ of our SPS example) and check their relation to a \texttt{twiss} output from MAD-X.\\\\
\begin{table}[h] \centering
\hspace{-1.5cm}
\resizebox{0.82\textwidth}{!}{
\begin{tabular}{SSSSSS} \toprule
9.187210 & 0.0 & 0.0 & 0.0 & 0.0 & 0.0 \\
0.0 & 6.338581 & 0.0 & 0.0 & 0.0 & 0.0 \\
0.0 & 0.0 & 11.389408 & 0.0 & 0.0 & 0.0 \\
0.0 & 0.0 & 0.0 & 0.108847 & 0.0 & 0.0 \\
0.0 & 0.0 & 0.0 & 0.0 & 0.157764 & 0.0 \\
0.0 & 0.0 & 0.0 & 0.0 & 0.0 & 0.087801 \\ \bottomrule
&&&&& \\ \toprule
9.192319 & 0.0 & 0.0 & 0.0 & 0.0 & 0.0 \\
0.0 & 6.358151 & 0.0 & 0.0 & 0.0 & 0.0 \\
0.0 & 0.0 & 12.082668 & 0.0 & 0.0 & 0.0 \\
0.0 & 0.0 & 0.0 & 0.108786 & 0.0 & 0.0 \\
0.0 & 0.0 & 0.0 & 0.0 & 0.157278 & 0.0 \\
0.0 & 0.0 & 0.0 & 0.0 & 0.0 & 0.082763 \\ \bottomrule
\end{tabular}
}
\caption[Iwasawa factors of an SPS tracking example]{Iwasawa factors $A$ determined from either the linear normal form map $V^{tr}$ (top) or the symplectic map $S$ coming from the covariance matrix $G$ (bottom) of our SPS tracking example of Tab. \ref{tab:emit_g}. Small differences occur because the covariance matrix $G$ comes from a real tracking example with small lattice non-linearities, mismatches and numeric noise. From a MAD-X \texttt{twiss} command we compare the 1-1 and 2-2 entries (exemplary for the bottom matrix) and obtain good agreement: $\texttt{BETX} = 84.404930$, $9.192319^2 = 84.498729$ and $\texttt{BETY} = 40.178387$, $6.358151^2 = 40.426084$.}
\label{tab:iwasawa_a}
\end{table}
Independently on the problem of finding a faithful parameterization for the optics, one can also introduce 'generalized' $\beta$-functions, 
as proposed in Ref. \cite{bib:wolski2006}. The main idea is to use the analogy of the two-dimensional case, in which the $\beta$-function 
appears as coefficient in the well-known relation between
the emittance and the rms beam size, e.g. $\langle x^2 \rangle \propto \beta_x \epsilon_x$. Since the optics functions $K$, $A$ and $N$
are always the same for beams with non-degenerated spectrum, one can use the symplectic $S$ (or $V^{tr}$ of the linear normal form) in order to obtain such relations between the second
moments of the matched distribution and the emittances:
\begin{equation}
\langle x_i x_j \rangle = e_i^{tr} G e_j = e_i^{tr} S^{tr} D S e_j = \sum_{l = 1}^n (S_{il} S_{jl} + S_{i, n + l} S_{j, n + l}) \Lambda_l =: \sum_{l = 1}^n \beta_{ij}^{\hspace{.25cm} l} \Lambda_l . \label{eq:beta_tensor}
\end{equation}
Algebraically speaking they correspond to the coefficients of the parameterization $\mathbb{R}^n \hookrightarrow \mathrm{Sym}(\mathbb{R}^{2n}) \subset \mathbb{R}^{2n} \otimes \mathbb{R}^{2n}$ of the $M$-invariant covariance matrices and are by definition related to the $Z_l$ in Thm. \ref{cor:real_sym_inv_basis} via $2 \beta_{ij}^{\hspace{.25cm} l} = (Z_l)_{ij}$. From the analogy to the 2D case (cf. $G$ in example \ref{xmpl:2d}), in which the situation goes over without coupling, one can identify $\beta_x$ and $\beta_y$ with $\beta_{11}^{\hspace{.25cm} 1}$ and $\beta_{22}^{\hspace{.25cm} 2}$, $\alpha_x$ and $\alpha_y$ with $- \beta_{14}^{\hspace{.25cm} 1}$ and $- \beta_{25}^{\hspace{.25cm} 2}$ and $\gamma_x$ and $\gamma_y$ with $\beta_{44}^{\hspace{.25cm} 1}$ and $\beta_{55}^{\hspace{.25cm} 2}$ respectively.\\\\
As was shown in Refs. \cite{bib:wolski2006, bib:ohkawa2007, bib:alexahin2013}, these coefficients have the feature that one can also find expressions for the dispersions. The idea is that in the
classical 2D theory the betatron motion and the dispersive part are uncorrelated: Let $n = 3$ and $\eta_k$ for $k = 1, 2, 4, 5$ denote the dispersion function with respect to direction $k$, where the 6th component $x_6$ corresponds to the energy offset $\delta p/p_0$. Then from $x_k = x_{k, \beta} + \eta_k x_6$ it follows 
$\langle x_k x_6 \rangle = \eta_k \langle x_6^2 \rangle$, which translates to 
\begin{equation}
\sum_{l = 1}^3 \beta_{k, 6}^{\hspace{.35cm} l} \Lambda_l = \eta_k \sum_{l = 1}^3 \beta_{6, 6}^{\hspace{.35cm} l} \Lambda_l \;\; \Rightarrow \;\; \eta_k \cong \frac{\beta_{k, 6}^{\hspace{.35cm} 3}}{\beta_{6, 6}^{\hspace{.35cm} 3}} , \label{eq:disp_from_beta}
\end{equation}
hereby the last equation was assuming a decoupled case, similar to what was previously done to relate the $\beta_{ij}^{\hspace{.25cm} l}$ to the classical optics functions. In Tab. \ref{tab:beta_tensor} we show the general $\beta$-functions
for our SPS example and their agreement with those coming from a MAD-X \texttt{twiss} command. It should therefore be clear how useful these statistical definitions are when it comes to situations where one can not easily determine effective optics functions otherwise, for example in the scenario of a PIC code with space charge.\\\\
\begin{table}[h] \centering
\hspace{-1.5cm}
\resizebox{0.82\textwidth}{!}{
\begin{tabular}{SSSSSS} \toprule
 84.476949 &   0.774801 &  -0.882595 &   1.626052 &  -0.023067 &  -0.000297 \\
  0.774801 &   0.010308 &  -0.010263 &   0.021069 &  -0.000164 &  -0.000006 \\
 -0.882595 &  -0.010263 &   0.010690 &  -0.021157 &   0.000209 &   0.000005 \\
  1.626052 &   0.021069 &  -0.021157 &   0.043131 &  -0.000352 &  -0.000011 \\
 -0.023067 &  -0.000164 &   0.000209 &  -0.000352 &   0.000007 &   0.000000 \\
 -0.000297 &  -0.000006 &   0.000005 &  -0.000011 &   0.000000 &   0.000000 \\
&&&&&\\
  0.020908 &  -0.757982 &  -0.000262 &   0.000344 &   0.028889 &  -0.000000 \\
 -0.757982 &  40.415782 &   0.006499 &  -0.017789 &  -0.854362 &   0.000014 \\
 -0.000262 &   0.006499 &   0.000004 &  -0.000003 &  -0.000406 &   0.000000 \\
  0.000344 &  -0.017789 &  -0.000003 &   0.000008 &   0.000397 &  -0.000000 \\
  0.028889 &  -0.854362 &  -0.000406 &   0.000397 &   0.042793 &  -0.000000 \\
 -0.000000 &   0.000014 &   0.000000 &  -0.000000 &  -0.000000 &   0.000000 \\
&&&&&\\
  0.000880 &   0.000021 &   0.066726 &  -0.000008 &  -0.000001 &  -0.002408 \\
  0.000021 &   0.000001 &   0.000440 &  -0.000000 &  -0.000000 &  -0.000059 \\
  0.066726 &   0.000440 & 145.988045 &   0.000864 &   0.000027 &   0.008947 \\
 -0.000008 &  -0.000000 &   0.000864 &   0.000000 &   0.000000 &   0.000025 \\
 -0.000001 &  -0.000000 &   0.000027 &   0.000000 &   0.000000 &   0.000002 \\
 -0.002408 &  -0.000059 &   0.008947 &   0.000025 &   0.000002 &   0.006850 \\ \bottomrule
\end{tabular}
}
\caption[Generalized $\beta$-functions]{Components of the $\beta$-tensor of Eq. \eqref{eq:beta_tensor} for our SPS example (at the working point $Q_x = 20.144$). The following entries are going over into the classical linear lattice parameters in case of no coupling, and are given as follows in form of the output of a MAD-X \texttt{twiss} command: 111: $\texttt{BETX} = 84.404930$, 222: $\texttt{BETY} = 40.178387$, 141: $\texttt{ALFX} = -1.627366$, 252: $\texttt{ALFY} = 0.850149$. $\gamma$-check: 441: $\gamma_x = 0.043224$, 552: $\gamma_y = 0.042878$. Dispersion: $\texttt{DX} = -0.343296$, 163/663: $-0.351470$, $\texttt{DY} = -0.004177$, 263/663: $-0.008667$. $\texttt{DPX} = 0.003525$, 463/663: $0.003678$. $\texttt{DPY} = 0.000319$, 563/663: $0.000247$. Note that the signs of the entries to the $\alpha$'s are reversed, as expected.
}
\label{tab:beta_tensor}
\end{table}
Having the dispersion parameters at hand, which were determined by including assumptions of the origin of the one-turn map $M$, one can attempt to recover the tunes of the unknown optics -- in principle. However, as we shall see, the sensitivity with respect to the dispersion terms is very high. This indicates that such an undertaking, by purely examining covariance matrices, might require more elaborate methods (and probably also better statistics by including more particles):\\\\
For the next considerations we change to the ordering $x, p_x, y, p_y, z, p_z$ and write the one-turn-map $M$ into 4 and 2-blocks
\begin{equation}
M = \left(\begin{array}{cc} M_4 & A \\ B & C \end{array}\right) .
\end{equation}
Let $X^{tr} = (X_1, 0, p_z)$ for $X_1 \in \mathbb{R}^4$ describe a 4D closed-orbit solution of $M$, i.e.
$M_4 X_1 + A (0, p_z)^{tr} = X_1$, which translates to $X_1 = (1 - M_4)^{-1} A (0, p_z)^{tr}$ and so
$\mathcal{D} := (1 - M_4)^{-1} A$ contains the 4 known dispersion terms \texttt{DX}, \texttt{DPX}, \texttt{DY} and \texttt{DPY} in the second column. Although the terms in all columns of $\mathcal{D}$ can be approximated in an analog fashion as the left-hand side of Eq. \eqref{eq:disp_from_beta} if correlations between $z$ and the other spatial coordinates are small, and by taking into account only correlations between $z$ and $p_z$, let us assume that the first column is unknown. The relation $V^{-1} M V = R$ ($V$ any symplectic map block-diagonalizing $M$, in particular having in mind a map coming from a covariance matrix) reads in this context
\begin{align}
\left(\begin{array}{cc} M_4 & A \\ B & C \end{array} \right) 
\left(\begin{array}{cc} V_{11} & V_{12} \\ V_{21} & V_{22} \end{array}\right)
&= \left(\begin{array}{cc} V_{11} & V_{12} \\ V_{21} & V_{22} \end{array}\right)
\left(\begin{array}{cc} R_4 & 0 \\ 0 & R_2 \end{array}\right) \nonumber \\
\Leftrightarrow \;\; \left(\begin{array}{cc} 
M_4 V_{11} + A V_{21} & M_4 V_{12} + A V_{22} \\
B V_{11} + C V_{21} & B V_{12} + C V_{22} \end{array}\right) &= 
\left(\begin{array}{cc} V_{11} R_4 & V_{12} R_2 \\ V_{21} R_4 & V_{22} R_2 \end{array}\right) ,
\end{align}
hereby $R_4$ is a $4 \times 4$ block-diagonal rotation matrix, as discussed in Thm. \ref{thm:s_diag_r}. We can now use the first row to express the unknown $M_4$ by $\mathcal{D}$ and the rotation matrices $R_2$ and $R_4$ which contain the unknown tunes:
\begin{subequations}
\begin{align}
M_4 (V_{11} - \mathcal{D} V_{21}) + \mathcal{D} V_{21} = M_4 V_{11} + (1 - M_4) \mathcal{D} V_{21} &= V_{11} R_4 , \\
M_4 (V_{12} - \mathcal{D} V_{22}) + \mathcal{D} V_{22} = M_4 V_{12} + (1 - M_4) \mathcal{D} V_{22} &= V_{12} R_2 .
\end{align}
\end{subequations}
If we assume that the $4 \times 4$-map $V_{11} - \mathcal{D} V_{21}$ is invertible (which is the case in our SPS example), we can eliminate $M_4$ to obtain
\begin{equation}
(V_{11} R_4 - \mathcal{D} V_{21})(V_{11} - \mathcal{D} V_{21})^{-1} (V_{12} - \mathcal{D} V_{22}) + 
\mathcal{D} V_{22} - V_{12} R_2 = 0 .
\label{eq:tune_from_sigma}
\end{equation}
This corresponds to a system of 8 equations for 7 unknown parameters (3 tunes and the 4 entries of the first column of $\mathcal{D}$). For the given data of our SPS example it turned out, however, that the sensitivity of this problem on the dispersion terms is too high, even if assuming a known $z$-tune, as is summarized in Tab. \ref{tab:tune_sens}.\\\\
We conclude this section with three examples.
\begin{table}
\centering
\addtolength{\tabcolsep}{-0.4cm}
\hspace{-0.4cm}
\begin{tabular}{SSSS} \toprule
{${\mathcal{D}_{j1}}$ \text{inexact}}  & {${\mathcal{D}_{j2}}$ \text{inexact}} & 
{${\mathcal{D}_{j1}}$ \text{exact}}  & {${\mathcal{D}_{j2}}$ \text{exact}} \\ \hline
6.235119e-04 & -3.400830e-01 & 1.143900e-03 & -3.440703e-01 \\
8.110447e-06 & 3.533734e-03 & 2.491066e-05 &  3.518960e-03 \\
8.043134e-07 & -4.242339e-03 & 7.223080e-06 & -4.235338e-03 \\
-1.321050e-07 &  3.187020e-04 & -5.912169e-08 & 3.195229e-04 \\ \bottomrule
\end{tabular}
\addtolength{\tabcolsep}{0.8cm} \\ \vspace{0.2cm}
\begin{tabular}{l|c|c} 
 &  ${\mathcal{D}_{j2}}$ {\text{inexact}} & ${\mathcal{D}_{j2}}$ {\text{exact}} \\ \hline
${\mathcal{D}_{j1}}$ {\text{inexact}} & \text{False} & \text{True} \\ \hline
${\mathcal{D}_{j1}}$ {\text{exact}}  & \text{False} & \text{True} 
\end{tabular}
\caption[Convergence tests for tune determination]{Convergence successes of the Nelder-Mead simplex algorithm \cite{bib:nelder1965, bib:gao2012} applied to Eq. \eqref{eq:tune_from_sigma}, assuming a tune $Q_z = -1.511528 \cdot 10^{-2}$ of our SPS example, in dependency of the four combinations of exact and inexact initial parameter columns. We have used 6 iteration restarts of the optimization routine. The inexact values are determined from the tracking example, in which the values of the first column were estimated by using the assumption that the $z$-motion is uncorrelated to the first four coordinates. As can be seen, the success depends on the precision of the dispersion terms which are in the second column of $\mathcal{D}$.}
\label{tab:tune_sens}
\end{table}

\begin{xmpl}[2D] \label{xmpl:2d} For
  $\alpha, \varphi \in \mathbb{R}$ and $\mathbb{R} \ni \beta, \gamma > 0$ with
  $\beta \gamma = 1 + \alpha^2$ consider a linear transport map $M$, see Ref. \cite{bib:lee2012}, and a positive definite symmetric $G$
  \begin{subequations}
    \begin{align}
      M &:= \left(
          \begin{array}{cc}
            \cos(\varphi) + \alpha \sin(\varphi) & \beta \sin(\varphi) \\
            - \gamma \sin(\varphi) & \cos(\varphi) - \alpha \sin(\varphi)
          \end{array}\right) , \label{eq:xmpl_2dm} \\
      G &:= \left(
          \begin{array}{cc}
            \beta & -\alpha \\
            -\alpha & \gamma
          \end{array}\right) . \label{eq:xmpl_2dg}
    \end{align}
  \end{subequations}
  Then one can show that $M$ is symplectic and it holds
  $M G M^{tr} = G$. Furthermore, the symplectic matrices
  \begin{subequations}
    \begin{align}
      S_1 &:= \left(
            \begin{array}{cc}
              \sqrt{\beta} & -\alpha / \sqrt{\beta} \\
              0 & 1/\sqrt{\beta}
            \end{array}
                  \right) , \label{eq:xmpl_2ds1} \\
      S_2 &:= \left(
            \begin{array}{cc}
              1/\sqrt{\gamma} & 0 \\
              -\alpha / \sqrt{\gamma} & \sqrt{\gamma}
            \end{array}
                                      \right) , \label{eq:xmpl_2ds2}
    \end{align}
  \end{subequations}
  satisfy $G = S_i^{tr} S_i$ (in particular, $G$ is positive definite). This means that we
  have all requirements of Thm. \ref{thm:mt} and $S_2 S_1^{-1}$ must
  be orthogonal. Indeed we have the following Iwasawa decompositions
  \begin{subequations}
    \begin{align}
      S_1 &= A_1 N_1 = \left(
            \begin{array}{cc}
              \sqrt{\beta} & 0 \\
              0 & 1/\sqrt{\beta}
            \end{array} 
                  \right)
                  \left(
                  \begin{array}{cc}
                    1 & -\alpha / \beta \\
                    0 & 1
                  \end{array}
                        \right) , \\
      S_2 &= K_2 A_2 N_2 = \frac{1}{\sqrt{\beta \gamma}} \left(
            \begin{array}{cc}
              1 & \alpha \\
              -\alpha & 1
            \end{array}
                       \right)
                       \left(
                       \begin{array}{cc}
                         \sqrt{\beta} & 0 \\
                         0 & 1/\sqrt{\beta}
                       \end{array}
                             \right)
                             \left(
                             \begin{array}{cc}
                               1 & -\alpha / \beta \\
                               0 & 1
                             \end{array}
                                   \right) ,
    \end{align}
  \end{subequations}
  and we see $A_1 = A_2$ and $N_1 = N_2$ and $S_2 S_1^{-1} = K_2$ is
  orthogonal, as claimed by Thm. \ref{thm:mt}. The two independent optics parameters $\alpha$
and $\beta$ can be regained by comparison of the above Iwasawa factors with the ones obtained by any matched particle distribution according to conclusion \ref{con:iwa_opt}. Since $M G M^{tr} = G$ holds, we have $M^{-tr} G^{-1} M^{-1} = G^{-1}$ and therefore $G^{-1} = M^{tr} G^{-1} M$, which means that the associated quadratic form $g(z) := z^{tr} G^{-1} z$ is $M$-invariant. We have $g(z) = z^{tr} N^{-1} A^{-1} A^{-tr} N^{-tr} z = ((AN)^{-tr})^* 1_2(z)$ or, reversely, $(A^{tr})^* (N^{tr})^* g = 1_2$. The effects of these two operations are illustrated in Fig. \ref{fig:2dcs} and appear frequently in elementary particle accelerator textbooks.
\end{xmpl}

\begin{figure}[h]
\begin{minipage}[t]{\textwidth}
\centering
\begin{tikzpicture}[>=latex',line join=bevel]

\pgftransformcm{1.0}{0.0}{0.9}{1.0}{\pgfpoint{0}{0}} 
\draw (-5, 0) ellipse (0.667cm and 1.5cm); 
\pgftransformreset

\def\arrow{
  (10.75:1.1) -- (6.5:1) arc (6.25:120:1) [rounded corners=0.5] --
  (120:0.9) [rounded corners=1] -- (130:1.1) [rounded corners=0.5] --
  (120:1.3) [sharp corners] -- (120:1.2) arc (120:5.25:1.2)
  [rounded corners=1] -- (10.75:1.1) -- (6.5:1) -- cycle
}

\definecolor{darkblue}{rgb}{0.2,0.2,0.6}

\tikzset{
  ashadow/.style={opacity=.25, shadow xshift=0.07, shadow yshift=-0.07},
}

\pgftransformcm{-1.2}{0.0}{0.0}{1.2}{\pgfpoint{0}{0}}
    \draw[color=darkblue, bottom color=white!90!black, top color=white!30, %
    drop shadow={ashadow, color=blue!60!black}] [xshift=1.6cm, yshift=0.4cm, rotate=20] \arrow;
\pgftransformreset

\pgftransformcm{-1.2}{0.0}{0.0}{1.2}{\pgfpoint{0}{0}}
    \draw[color=darkblue, bottom color=white!90!black, top color=white!30, %
    drop shadow={ashadow, color=blue!60!black}] [xshift=-2.2cm, yshift=0.4cm, rotate=20] \arrow;
\pgftransformreset

\draw (-1.8, 2.6) node[below] {$(N^{tr})^*$};
\draw (2.6, 2.5) node[below] {$A^*$};

\draw[->] (-7.0, 0) -- (-3.0, 0) node[right]{$x$};
\draw[->] (-5.0, -2.0) -- (-5.0, 2.0) node[above]{$p_x$};

\draw (0, 0) ellipse (0.667cm and 1.5cm);
\draw[->] (-2.0, 0) -- (2.0, 0) node[right]{$x$};
\draw[->] (0, -2.0) -- (0, 2.0) node[above]{$p_x$};

\draw (5, 0) ellipse (1cm and 1cm);
\draw[->] (3.0, 0) -- (7.0, 0) node[right]{$x$};
\draw[->] (5.0, -2.0) -- (5.0, 2.0) node[above]{$p_x$};

\end{tikzpicture}

\end{minipage}
\caption[Courant-Snyder decomposition]{Effect of the two operations $(N^{tr})^*$ and $A^*$ on a phase space ellipse given by the quadratic form $g$ in Example \ref{xmpl:2d}, using $\alpha > 0$.}
\label{fig:2dcs}
\end{figure}
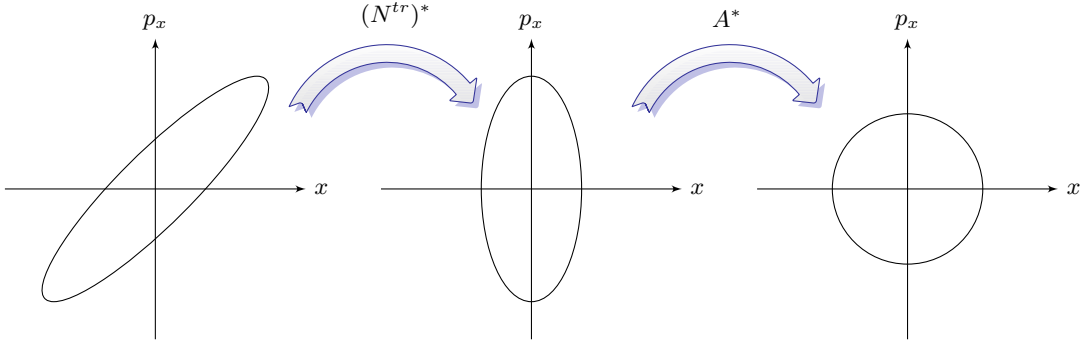

\begin{xmpl}[4D] \label{xmpl:4d}
Let $G = \left(\begin{array}{cc} A & C \\ C^{tr} & B \end{array}\right)$ be the covariance matrix of a linearly matched particle distribution in a 4D tracking routine. Then its symplectic spectrum, which are the two emittances, are given by
\begin{equation}
\epsilon_{1, 2} = \frac{1}{\sqrt{2}} \sqrt{\Delta G \pm \sqrt{\Delta G^2 - 4 \det(G)} } , \label{eq:4dcor}
\end{equation}
where $\Delta G := \det(A) + \det(B) + 2 \det(C)$. This formula for the spectrum of a 4D covariance matrix can be found in Ref. \cite{bib:pirandola2009} in a different context. From Eq. \eqref{eq:4dcor} we regain two familiar symplectic invariants:
\begin{subequations}
\begin{eqnarray}
\det (G) &=& \epsilon_1^2 \epsilon_2^2 , \\
\Delta G &=& \epsilon_1^2 + \epsilon_2^2 .
\end{eqnarray}
The first one appears for example in Ref. \cite{bib:lebedev2010}.
\end{subequations}
\end{xmpl}

For the next example we need some more machinery.

\begin{lem} \label{lem:ldl_2d} Let
$A \in \mathbb{K}^{2 \times 2}$ 
be symmetric with $g > 0$ so
that
  \[
  A = \left(
    \begin{array}{cc}
      g & a \\
      a & b
    \end{array}\right) .
  \]
  Then the LDL-Cholesky factorization $A = Q^{tr} H Q$ of $A$ is given by
  \begin{align*}
    H &= \mathrm{diag}(g, \mathrm{det}(A)/g) , \\
    Q &= \left(
        \begin{array}{cc}
          1 & a/g \\ 0 & 1 
        \end{array}
                         \right) .
  \end{align*}
\end{lem}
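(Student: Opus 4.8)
The statement is the standard LDL (Cholesky-type) factorization of a $2 \times 2$ symmetric matrix with nonzero leading principal minor, so the plan is simply to verify directly that the proposed $Q$ and $H$ reproduce $A$, and that this is the only factorization of the claimed form. First I would compute $Q^{tr} H Q$ explicitly. Writing
\[
Q^{tr} H Q = \left(\begin{array}{cc} 1 & 0 \\ a/g & 1 \end{array}\right)
\left(\begin{array}{cc} g & 0 \\ 0 & \det(A)/g \end{array}\right)
\left(\begin{array}{cc} 1 & a/g \\ 0 & 1 \end{array}\right) ,
\]
one multiplies the first two factors to get $\left(\begin{smallmatrix} g & 0 \\ a & \det(A)/g \end{smallmatrix}\right)$ and then multiplies by $Q$ on the right; the $(1,1)$ entry is $g$, the $(1,2)$ and $(2,1)$ entries are $a$, and the $(2,2)$ entry is $a^2/g + \det(A)/g = (a^2 + gb - a^2)/g = b$, using $\det(A) = gb - a^2$. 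Hence $Q^{tr} H Q = A$. Note that this is exactly where the hypothesis $g > 0$ (or at least $g \neq 0$) is used: it is needed for the entry $a/g$ of $Q$ and for $H$ to be well defined.

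Next I would address uniqueness, if one wants the phrase "\emph{the} LDL-Cholesky factorization'' to be justified. The claim is that among factorizations $A = Q^{tr} H Q$ with $Q$ real unit upper triangular and $H$ diagonal, the pair $(Q,H)$ above is the only one. This follows by comparing entries: if $Q = \left(\begin{smallmatrix} 1 & q \\ 0 & 1 \end{smallmatrix}\right)$ and $H = \mathrm{diag}(h_1, h_2)$, then $Q^{tr} H Q = \left(\begin{smallmatrix} h_1 & h_1 q \\ h_1 q & h_1 q^2 + h_2 \end{smallmatrix}\right)$; matching the $(1,1)$ entry forces $h_1 = g$, matching $(1,2)$ forces $q = a/g$ (here again $g \neq 0$ is essential), and matching $(2,2)$ then forces $h_2 = b - g q^2 = b - a^2/g = \det(A)/g$. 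So the factorization is unique.

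There is essentially no obstacle here — the only thing to be careful about is that the field $\mathbb{K}$ is arbitrary (the lemma is stated over $\mathbb{K}$, presumably $\mathbb{R}$ or $\mathbb{C}$), so I would phrase the argument purely algebraically, using only that $g$ is invertible in $\mathbb{K}$, and avoid any appeal to positive-definiteness or orthogonal diagonalization. The computation is short enough that I would just present the two displayed matrix products and the uniqueness comparison inline, without further commentary.
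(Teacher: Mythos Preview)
Your proposal is correct and complete; the paper itself simply states ``The proof is left to the reader'' and adds the remark that the factorization also holds for $g < 0$, which is exactly your observation that only invertibility of $g$ is needed. There is nothing further to compare.
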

\begin{proof}
  The proof is left to the reader. Note that this equation holds also for $g < 0$.
\end{proof}

\begin{thm}[T. Y. Tam] \label{thm:tam} Let
  $\mathbb{K} \in \{\mathbb{R}, \mathbb{C}\}$ and
  $X \in \mathrm{Sp}(2n; \mathbb{K})$ with
  \[
  X^H X =: \left(
    \begin{array}{cc}
      A_1 & B_1 \\
      B_1^H & D_1 
    \end{array}
  \right) .
  \]
  Let $A_1 = Q^H H Q$ be the LDL-Cholesky factorization of the
  positive definite $A_1$, where $Q$ is unit upper triangular and $H$
  positive diagonal. Then the Iwasawa factors $A$ and $N$ of $X = KAN$
  can be computed by
  \begin{subequations}
    \begin{align}
      A &= \left(
          \begin{array}{cc}
            H^{1/2} & 0 \\
            0 & H^{-1/2}
          \end{array}
                \right) , \\
      N &= \left(
          \begin{array}{cc}
            Q & Q A_1^{-1} B_1 \\
            0 & Q^{-tr}
          \end{array}
                \right) =
                \left(
                \begin{array}{cc}
                  Q & H^{-1} Q^{-tr} B_1 \\
                  0 & Q^{-tr}
                \end{array}
                      \right) .
    \end{align} 
  \end{subequations}
\end{thm}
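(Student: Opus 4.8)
The plan is to recover the Iwasawa factors from the Hermitian product $X^H X$. By Thm.~\ref{thm:iwasawa} (and its straightforward complex extension) we may write $X = KAN$, with $K$ orthosymplectic — unitary in the complex case — $A = \mathrm{diag}(P, P^{-1})$ for a positive diagonal $P$, and $N = \left(\begin{smallmatrix} E & F \\ 0 & E^{-tr} \end{smallmatrix}\right)$ with $E$ unit upper triangular and $E F^{tr} = F E^{tr}$. The key observation is that $X^H X$ forgets $K$: since $K^H K = 1$ and $A^H = A$ (real diagonal),
\[
X^H X = N^H A^H K^H K A N = N^H A^2 N .
\]
Hence the given block matrix $\left(\begin{smallmatrix} A_1 & B_1 \\ B_1^H & D_1 \end{smallmatrix}\right)$ equals $N^H A^2 N$, an expression built solely from $A$ and $N$, and the whole argument amounts to reading the two factors off this identity.

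First I would expand $N^H A^2 N$ into $n \times n$ blocks. Its $(1,1)$-block is $E^H P^2 E$, so the hypothesis gives $A_1 = E^H P^2 E$. Here $E$ is unit upper triangular and $P^2$ positive diagonal, so this \emph{is} an LDL-Cholesky factorization of $A_1$ (note that $A_1$, being a principal submatrix of the positive definite $X^H X$, is itself positive definite, so the factorization exists; cf. the $2\times 2$ instance in Lemma~\ref{lem:ldl_2d}). By uniqueness of the LDL-Cholesky factorization, comparison with $A_1 = Q^H H Q$ forces $E = Q$ and $P^2 = H$, i.e. $P = H^{1/2}$, which yields $A = \mathrm{diag}(H^{1/2}, H^{-1/2})$.

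Next I would look at the $(1,2)$-block: $B_1 = E^H P^2 F = Q^H H F$, hence $F = H^{-1}(Q^H)^{-1} B_1$. Substituting $A_1 = Q^H H Q$ rewrites this as $F = Q A_1^{-1} B_1$ (over $\mathbb{R}$ both coincide with $H^{-1} Q^{-tr} B_1$, the second form in the statement). Since the remaining blocks of $N$ are already fixed as $0$ and $E^{-tr} = Q^{-tr}$, this establishes the asserted formula for $N$. I would not verify the $(2,2)$-block separately: the $A$ and $N$ just obtained are, by construction, the genuine Iwasawa factors of $X$, so the resulting identity for $D_1$ holds automatically.

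I do not expect a real obstacle here; the delicate point is only the adjoint bookkeeping when $\mathbb{K} = \mathbb{C}$. There the Iwasawa block $N$ carries a plain transpose ($E^{-tr}$, forced by $N^{tr} J N = J$), whereas the Cholesky factor of the Hermitian $A_1$ appears as $Q^H H Q$; one has to keep the two conventions apart and, in the formula for $F$, reconcile $Q^{-tr}$ with $(Q^H)^{-1}$. Apart from that, the argument rests on the existence of the Iwasawa decomposition together with uniqueness of the LDL-Cholesky factorization, and everything else is routine block multiplication.
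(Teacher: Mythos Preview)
Your argument is correct: computing $X^H X = N^H A^2 N$, reading off the $(1,1)$-block as an LDL factorization of $A_1$, invoking uniqueness to identify $E = Q$ and $P^2 = H$, and then solving the $(1,2)$-block for $F$ is exactly the right route, and your remark about the $Q^{-tr}$ versus $(Q^H)^{-1}$ bookkeeping in the complex case is on point. The paper itself gives no proof here at all---it simply cites Tam's original article---so you have in fact supplied more than the paper does; your derivation is the natural one and almost certainly what the cited reference contains.
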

\begin{proof}
  See Ref. \cite{bib:tinyautam2006}.
\end{proof}

\begin{xmpl}[4D optics with a single coupling term] \label{xmpl:4d_cpl}
  Similar as in Example \ref{xmpl:2d} consider for $i = 1, 2$:
  $\alpha_i, \varphi_i \in \mathbb{R}$ and $\mathbb{R} \ni \beta_i, \gamma_i > 0$ with
  $\beta_i \gamma_i = 1 + \alpha_i^2$ and
  $\varphi_1 \not = \varphi_2$ the matrices
  $M_1, M_2, G_1, G_2 \in \mathbb{R}^{2 \times 2}$.
  Let $\diamond$ be the operator defined in Prop. \ref{dfnsz:oplus}. From the properties we have
$J_4 = J_2 \diamond J_2$. Let
  \begin{align*}
    \tilde M &:= M_1 \diamond M_2 , \\
    \tilde G &:= G_1 \diamond G_2 .
  \end{align*}
  It follows from Prop. \ref{dfnsz:oplus} with the
  symplecticity of $M_i$ that $\tilde M$ is symplectic and furthermore
  that $\tilde G$ is
  symmetric and $\tilde M \tilde G \tilde M^{tr} = \tilde G$. From
  Prop. \ref{dfnsz:oplus} it also follows that $\tilde M$ has
  four complex eigenvalues of the form $\exp(\pm i \varphi_j)$ and by
  assumption it is guaranteed that no eigenvalue equals $\pm 1$.
  Moreover it follows that $\tilde G$ is positive definite. 
$\tilde M$ can be interpreted as an uncoupled lattice. Let us now introduce
a basic coupling term; for
  $\psi \in \mathbb{R}$ set $c := \cos(\psi)$, $s := \sin(\psi)$ and
  $S^{(2)}_{ij} \in \mathbb{R}^{2 \times 2}$ as in Example
  \ref{xmpl:2d}, Eqs. \eqref{eq:xmpl_2ds1}, \eqref{eq:xmpl_2ds2}, with
  $S^{(2)}_{ij}$ using $\alpha_j, \beta_j$ and $\gamma_j$.
  \begin{align*}
    U &:= \left(
        \begin{array}{cc}
          c & -s \\ s & c
        \end{array}
                        \right) \in \mathbb{R}^{2 \times 2}, \\
    V &:= \mathrm{diag}(U, U) \in \mathbb{R}^{4 \times 4}, \\
    \tilde S_{ij} &:= S^{(2)}_{i1} \diamond S^{(2)}_{j2} .
  \end{align*}
  By the properties of the operator $\diamond$, all four $\tilde S_{ij}$
  are symplectic.  $V$ is clearly orthogonal and also symplectic:
  \[
  V^{tr} J_4 V = \mathrm{diag}(U^{-1}, U^{-1}) \left(
    \begin{array}{cc}
      0 & U \\
      - U & 0
    \end{array}
  \right) = J_4 .
  \]
  The symplectic $M := V^{tr} \tilde M V$ satisfies $M G M^{tr} = G$
  with symmetric and positive definite $G := V^{tr} \tilde G V$:
  \[
  M G M^{tr} = V^{tr} \tilde M V V^{tr} \tilde G V V^{tr} \tilde M^{tr} V   = V^{tr} \tilde G V = G .
  \]
  The interpretation of $G$ is that it describes the covariance matrix of a linearly matched particle distribution in this coupled optics. From Example \ref{xmpl:2d} we know that
  $\tilde S_{ij}^{tr} \tilde S_{ij} = \tilde G$ must hold. Therefore
  the symplectic $S_{ij} := \tilde S_{ij} V$ are diagonalizing $G$ as
  in Williamson's Theorem:
  $S_{ij}^{tr} S_{ij} = V^{tr} \tilde S_{ij}^{tr} \tilde S_{ij} V =
  V^{tr} \tilde G V = G$.
  Let us now use lemma \ref{lem:ldl_2d} and Thm. \ref{thm:tam} to
  compute the Iwasawa decomposition exemplary in the case of $S_{11}$. By construction:
\begin{equation}
\tilde S_{11} = \left(
\begin{array}{cccc}
\sqrt{\beta_1} & 0 & -\alpha_1/\sqrt{\beta_1} & 0 \\
0 & \sqrt{\beta_2} & 0 & -\alpha_2/\sqrt{\beta_2} \\
0 & 0 & 1/\sqrt{\beta_1} & 0 \\
0 & 0 & 0 & 1/\sqrt{\beta_2}
\end{array}
\right) =: 
\left(
\begin{array}{cc}
s_{11} & s_{12} \\
0 & s_{22}
\end{array}
\right) ,
\end{equation}
with diagonal block-matrices $s_{ij} \in \mathbb{R}^{2 \times 2}$. It follows
\begin{align*}
S_{11} &= \tilde S_{11} V = 
\left(
\begin{array}{cc}
s_{11} U & s_{12} U \\ 0 & s_{22} U
\end{array}
\right) ,
\\
\Rightarrow \;\; S^{tr}_{11} S_{11} &= 
\left(
\begin{array}{cc}
U^{-1} s_{11} & 0 \\ U^{-1} s_{12} & U^{-1} s_{22}
\end{array}
\right) 
\left(
\begin{array}{cc}
s_{11} U & s_{12} U \\ 0 & s_{22} U
\end{array}
\right) 
=
\left(
\begin{array}{cc}
U^{-1} s_{11}^2 U & U^{-1} s_{11} s_{12} U \\
U^{-1} s_{12} s_{11} U & U^{-1} (s_{12}^2 + s_{22}^2) U 
\end{array}
\right) .
\end{align*}
Thm. \ref{thm:tam} tells us that we have to compute the LDL-Cholesky factorization of 
the positive definite 
\[
A_1 := U^{-1} s_{11}^2 U = 
\left(
\begin{array}{cc}
c & s \\ -s & c
\end{array}
\right)
\left(
\begin{array}{cc}
\beta_{1} c & - \beta_1 s \\
\beta_2 s & \beta_2 c
\end{array}
\right)
=
\left(
\begin{array}{cc}
\beta_1 c^2 + \beta_2 s^2 & cs (\beta_2 - \beta_1) \\
cs (\beta_2 - \beta_1) & \beta_1 s^2 + \beta_2 c^2
\end{array}
\right) .
\]
By lemma \ref{lem:ldl_2d}, using $\det(A_1) = \beta_1 \beta_2$, we obtain $A_1 = Q^{tr} H Q$ with
\begin{subequations}
\begin{align}
H^{1/2} &= \mathrm{diag}
\left(
\sqrt{\beta_1 c^2 + \beta_2 s^2}, \sqrt{\frac{\beta_1 \beta_2}{\beta_1 c^2 + \beta_2 s^2}}
\right) , \label{eq:h12_4d} \\
Q &=
\left(
\begin{array}{cc}
1 & cs(\beta_2 - \beta_1)/(\beta_1 c^2 + \beta_2 s^2) \\
0 & 1
\end{array}
\right) . \label{eq:q_4d}
\end{align}
In order to determine the remaining entry in the Iwasawa factor $N$, we compute (the details are left to the reader)
\begin{equation}
X := H^{-1} Q^{-tr} B_1 =
\left(
\begin{array}{cc}
- \frac{\alpha_1 c^2 + \alpha_2 s^2}{\beta_1 c^2 + \beta_2 s^2} & 
\frac{cs(\alpha_1 - \alpha_2)}{\beta_1 c^2 + \beta_2 s^2} \\
cs \left(\frac{\alpha_1}{\beta_1} - \frac{\alpha_2}{\beta_2}\right) &
-c^2 \frac{\alpha_2}{\beta_2} - s^2 \frac{\alpha_1}{\beta_1}
\end{array}
\right) . \label{eq:hqb_d4}
\end{equation}
\end{subequations}
By Eq. \eqref{eq:h12_4d} we see that the determinant of the $A$-factor of a given covariance matrix yields $\beta_1 \beta_2$. Moreover, we can recover the term $\sqrt{\beta_1 c^2 + \beta_2 s^2}$ and by Eq. \eqref{eq:q_4d} the $cs (\beta_2 - \beta_1)$. Eq. \eqref{eq:h12_4d} - \eqref{eq:hqb_d4} constitute a system of equations for the five parameters $\alpha_1$, $\alpha_2$, $\beta_1$, $\beta_2$ and $\psi$ of this model and more such equations may be obtained by computing the other decompositions of $S_{ij}$. In this way we can attempt to recover the entire set of parameters of this model from a given covariance matrix.\\\\
Since $N$ is symplectic, we have $N^{-1} = -J N^{tr} J$, i.e.
\begin{equation}
N^{-1} = \left(\begin{array}{cc} Q^{-1} & -X^{tr} \\ 0 & Q^{tr} \end{array}\right) .
\end{equation}
Using this equation, we can then compute the orthosymplectic $K$ via $K = S_{11} N^{-1} A^{-1}$. After some steps we find the block-diagonal form
\begin{equation}
K_{11} = \frac{1}{\sqrt{\beta_1 c^2 + \beta_2 s^2}} \left(\sqrt{\beta_1} c - \sqrt{\beta_2} s J_2^{\oplus 2} \right) .
\end{equation}
Hereby we attached the indices on $K$ to indicate that it emerges out of the map $S_{11}$. Note that $J_4 \not = J_2^{\oplus 2}$, so if $\psi \not = 0$, then this term is not in the form $D_1 + J_4 D_2$. This example therefore completes the statement in Conclusion \ref{con:iwa_opt}. What happens for the other cases $S_{12}$, $S_{21}$ and $S_{22}$? Since we expect a symmetric result for $S_{22}$ let us investigate the $S_{12}$ case. As we have already computed the Iwasawa factors $A$ and $N$, we do not need to recompute them again. $\tilde S_{12}$ has the form:
\begin{equation}
\tilde S_{12} = \left(
\begin{array}{cccc}
\sqrt{\beta_1} & 0 & -\alpha_1/\sqrt{\beta_1} & 0 \\
0 & 1/\sqrt{\gamma_2} & 0 & 0 \\
0 & 0 & 1/\sqrt{\beta_1} & 0 \\
0 & - \alpha_2/\sqrt{\gamma_2} & 0 & \sqrt{\gamma_2}
\end{array}
\right) .
\end{equation}
After further computations we arrive at the following orthosymplectic $K_{12}$:
\begin{equation}
K_{12} = \frac{1}{\sqrt{\beta_1 c^2 + \beta_2 s^2}} 
\left(\begin{array}{cccc}
c \sqrt{\beta_1} & - s \sqrt{\beta_2} & 0 & 0 \\
s / \sqrt{\gamma_2} & c \sqrt{\beta_1}/\sqrt{\beta_2 \gamma_2} & \alpha_2 s / \sqrt{\gamma_2} & \alpha_2 c \sqrt{\beta_1}/\sqrt{\beta_2 \gamma_2} \\
0 & 0 & c \sqrt{\beta_1} & -s \sqrt{\beta_2} \\
- \alpha_2 s / \sqrt{\gamma_2} & - \alpha_2 c \sqrt{\beta_1}/\sqrt{\beta_2 \gamma_2} & s / \sqrt{\gamma_2} & c \sqrt{\beta_1} / \sqrt{\beta_2 \gamma_2}
\end{array}\right) .
\end{equation}
We see that also $K_{12}$ turns into the standard form if we have no coupling. Moreover, there has to be a term $D_1 + J_4 D_2$ with $D_1^2 + D_2^2 = 1$ which transforms $K_{11}$ into $K_{12}$. In fact we find $K_{12} K_{11}^{tr} = D_1 + J_4 D_2$
with $D_1 = \mathrm{diag}(1, 1/\sqrt{\beta_2 \gamma_2}, 1, 1/\sqrt{\beta_2 \gamma_2})$ and
$D_2 = \mathrm{diag}(0, \alpha_2/\sqrt{\beta_2 \gamma_2}, 0, \alpha_2/\sqrt{\beta_2 \gamma_2})$. This corresponds to a rotation in the second plane by an angle of $\arctan(-\alpha_2)$. In general, any rotation in these planes can lead to valid $K$'s. Therefore
it requires a careful analysis of the covariance matrices involved in order to disentangle the sought coupling terms from that freedom.
\end{xmpl}

\section{Emittances from simulation and experiment} \label{subsec:emit_cov}

For large-scale simulations with many particles, the particle positions are usually not stored turn by turn, because
this will generate an enormous amount of data. What can be stored without generating too much data, however,
are the turn-by-turn covariance matrices of the distribution. In this section we will outline the connection to the well-established formula of Lapostolle. For completeness we will also include practical formulae to obtain emittances from experimental data.

\subsection{Simulation case}

By means of the map $V$, given by Thm. \ref{thm:s_diag_r} or \ref{con:iwa_opt} (from a covariance matrix) and utilized in Thm. \ref{thm:m_invariance}, we can
parameterize all symmetric invariants $G$ of $M$. As motivated in paragraph \ref{ssubs:class},
one may have the task to find, for a given covariance matrix $G$ an $M$-invariant covariance matrix
$G_*$ which is closest to $G$, so that we can apply Thm. \ref{thm:m_invariance}. Hereby we understand 
the distance between $G_*$ and $G$
as given by the Frobenius norm.
This can be formulated in a precise fashion as follows:\\\\
Let $\{e_j ; j \in \overline{2n}\}$ be the canonical basis of $\mathbb{R}^{2n}$. 
Introduce for $k \in \overline{n}$ the matrices
\begin{equation}
E_k := e_ke_{k}^{tr} + e_{n + k}e_{n + k}^{tr} \in \mathbb{R}^{2n \times 2n} ,
\end{equation}
i.e. $E_k$ consists of zeros besides its $(k, k)$ and $(n + k, n + k)$ components, which are one. 
Denote for brevity $W_k := V E_k V^{tr}$ with the notations of Thm. \ref{thm:m_invariance}. 
Then the problem stated above corresponds to the task of finding $\Lambda_k \in \mathbb{R}_{\geq 0}$
so that
\begin{equation}
f(\Lambda) := |G - \sum_{k = 1}^n \Lambda_k W_k |_F  \label{eq:matching}
\end{equation}
is minimized.
The $\Lambda_k$'s then correspond to the emittances, because the covariance 
matrix in Floquet-space has zero off-diagonal elements for independent variables and the determinant in the individual
Floquet-planes are therefore just $\Lambda_k^2$ (see also example \ref{xmpl:lapostolle} below). We remark that an expression as the sum in Eq. \eqref{eq:matching} also appears in Ref. \cite{bib:wolski2006}.
Since $h(G) := G - \sum_k \Lambda_k W_k$ is extremal at a given
point if and only if
$V^{-1} h(G) V^{-tr}$ is extremal at that point,
we obtain, by using the the symmetry of $\langle A, B \rangle_F := \mathrm{tr}(A^{tr} B)$ and $G_V := V^{-1} G V^{-tr}$:
\begin{align}
f(\Lambda)^2 &= |G_V|^2_F - 2 \sum_{k = 1}^n \Lambda_k \langle G_V, E_k \rangle_F + \sum_{k, l = 1}^n
\Lambda_k \Lambda_l \underbrace{\langle E_k, E_l\rangle_F}_{= \mathrm{tr}(E_k E_l) = 2 \delta_{kl}} \nonumber \\
&= |G_V|^2_F + 2 \sum_{k = 1}^n ( \Lambda_k^2 - \Lambda_k \langle G_V, E_k \rangle_F) \nonumber \\
&= |G_V|^2_F + 2 \sum_{k = 1}^n (\Lambda_k - \frac{1}{2} \langle G_V, E_k \rangle_F)^2 - \frac{1}{2} \sum_{k = 1}^n
\langle G_V, E_k \rangle_F^2 . \label{eq:gstar_g}
\end{align}
Hence, $f^2$ is minimal if and only if
\begin{equation}
\Lambda_k := \frac{1}{2} \langle V^{-1} G V^{-tr}, E_k \rangle_F . \label{eq:lambda_k}
\end{equation}
This process provides us with a closest symmetric invariant $G_* := \sum_k \Lambda_k W_k$.
Since $G$ was assumed to be positive (semi)definite, there exist $P$ so that $G = PP^{tr}$, i.e. 
$2 \Lambda_k = \mathrm{tr}(V^{-1} G V^{-tr} E_k) = \mathrm{tr}(V^{-1} P P^{tr} V^{-tr} E_k) 
= |Ae_k|^2 + |Ae_{n + k}|^2 \geq 0$ with $A := P^{tr} V^{-tr}$, so $G_*$ is again positive (semi)definite.
\begin{xmpl} \label{xmpl:lapostolle}
In the 2-dimensional case $n = 1$ we obtain the classical emittance definition by Lapostolle (cf. \cite{bib:lawsonetal1973})
as follows:
Assume that $G = \langle x x^{tr} \rangle$ is given and set $\Lambda_k$ according to Eq. \eqref{eq:lambda_k}.
With $z := V^{-1} x$ we get
\begin{equation}
2 \Lambda_1 = \langle G_V, E_1 \rangle_F = \mathrm{tr}(\langle V^{-1} x x^{tr} V^{-tr} \rangle E_1)
= \sum_{k = 1}^2 \mathrm{tr}(\langle e_k^{tr} z (e_k^{tr} z)^{tr} \rangle) = \langle z_1^2 \rangle + \langle z_2^2 \rangle .
\label{eq:2lambda1}
\end{equation}
On the other hand, by Eq. \eqref{eq:gstar_g} and $G_V = V^{-1} G V^{-tr} = \langle z z^{tr} \rangle$ we have
\begin{equation}
|G_* - G|_F^2 = |G_V|_F^2 - 2 \Lambda_1^2 , \label{eq:gstarg1}
\end{equation}
hereby 
\begin{equation}
|G_V|_F^2 = \mathrm{tr}(\langle z z^{tr} \rangle \langle z z^{tr} \rangle) = \sum_{ij} (\langle z_i z_j \rangle)^2
= \langle z_1^2 \rangle^2 + 2 (\langle z_1 z_2 \rangle)^2 + \langle z_2^2 \rangle^2 . \label{eq:gvnorm}
\end{equation}
Now combining Eqs. \eqref{eq:2lambda1}, \eqref{eq:gstarg1} and \eqref{eq:gvnorm} we obtain
\begin{equation}
2 \Lambda_1^2 = 4 \Lambda_1^2 - 2 \Lambda_1^2 = 2 \langle z_1^2 \rangle \langle z_2^2 \rangle - 2 (\langle z_1 z_2 \rangle)^2
+ |G - G_*|_F^2 ,
\end{equation}
and since $\mathrm{det}(V)^2 = 1$ we have $\mathrm{det}(G) = \mathrm{det}(\langle z z^{tr} \rangle)$, therefore we regain
the emittance of Lapostolle up to the Frobenius distance between $G$ and the $M$-invariant $G_*$:
\begin{equation}
\Lambda_1^2 = \langle x_1^2 \rangle \langle x_2^2 \rangle - (\langle x_1 x_2 \rangle)^2 + \frac{1}{2} |G - G_*|_F^2 .
\end{equation}
\end{xmpl}
\begin{xmpl}
In the special case of a single particle, i.e. if $G$ has the form
$G = x x^{tr}$ with $x \in \mathbb{R}^{2n}$, then $z^{tr} e_k := Ae_k = x^{tr} V^{-tr} e_k$
(compare above) and so we get with
$z := V^{-1}x$ the \textit{action} as a 'single-particle emittance':
\begin{equation}
2 \Lambda_k = z_k^2 + z_{n + k}^2 . 
\label{eq:sp_emit}
\end{equation}
We recall the standard 2D example of a linear transport map $M$ of Example \ref{xmpl:2d} (or found e.g. in Ref. \cite{bib:lee2012}):
\begin{equation}
M = \left(\begin{array}{cc}
\cos(\varphi) + \alpha \sin(\varphi) & \beta \sin(\varphi) \\
- \gamma \sin(\varphi) & \cos(\varphi) - \alpha \sin(\varphi)
\end{array}\right) ,
\end{equation}
where $\beta, \gamma \in \mathbb{R}_{> 0}$, $\varphi, \alpha \in \mathbb{R}$ and
$1 + \alpha^2 = \beta \gamma$.
Then one can show that with
\begin{equation}
V := \left(\begin{array}{cc}
1/\sqrt{\gamma} & -\alpha/\sqrt{\gamma} \\
0 & \sqrt{\gamma}
\end{array}\right)
\end{equation}
we have $V^{tr} J_2 V = J_2$ and
\begin{equation}
V^{-1} M V = \left(\begin{array}{cc}
\cos(\varphi) & \sin(\varphi) \\
- \sin(\varphi) & \cos(\varphi)
\end{array}\right) ,
\end{equation}
and with $x = (x_1, x_2)^{tr}$, $z = V^{-1} x$, 
we obtain for the action \eqref{eq:sp_emit}:
\begin{equation}
2 \Lambda_1 = z_1^2 + z_2^2 = \gamma x_1^2 + 2 \alpha x_1 x_2 + \beta x_2^2 .
\end{equation}
\end{xmpl} 

\subsection{Measurement case} \label{subs:c5_measc}

In this last paragraph we will summarize of how to obtain emittances from measured profile data in this general linear setting. We assume here that the coordinates are arranged in the form $x, y, z, p_x, p_y, p_z$. Denote for $i = 1, 2, 3$ 
$M_i \colon \mathbb{R}^6 \to \mathbb{R}^6$ the symplectic transport maps to the location of the scanners which measure our profiles (which are usually two wirescanners and a wall-current monitor)
and by $V \colon \mathbb{R}^6 \to \mathbb{R}^6$ the map from Floquet-space to ordinary phase space, 
which diagonalize the one-turn map $M$ by $R = V^{-1} M V$ according to Thm. \ref{thm:s_diag_r}.\\\\
Assume that $G$ is the covariance matrix belonging to a matched distribution. 
By Thm. \ref{thm:m_invariance} we have $G = VDV^{tr}$ with
$D = \mathrm{diag}(\Lambda, \Lambda)$ and $\Lambda = \mathrm{diag}(\lambda_1, \lambda_2, \lambda_3)$, i.e. $G$ can be interpreted as the image of a covariance matrix of a distribution in Floquet-space, 
in which the individual planes are uncorrelated, transported by the map $V$ to ordinary phase space.
For $i = 1, 2, 3$ consider the maps $T_i := e_i^{tr} M_i V$, where $e_i$ denotes the unit vector having a one in 
the $i$th position, so they project onto the spaces belonging to the $x$, $y$ and $z$ directions at the corresponding scanner locations. Now consider the linear map $E \colon \mathbb{R}^3 \to \mathbb{R}^3$ given by $(\lambda_1, \lambda_2, \lambda_3) \mapsto (T_1DT_1^{tr}, T_2DT_2^{tr}, T_3DT_3^{tr})$, so its matrix entries are
$E_{jk} = T_je_ke_k^{tr}T_j^{tr} + T_j e_{3 + k}e_{3 + k}^{tr}T_j^{tr}$. Since covariance matrices transport under linear maps in form of matrix congruence (see paragraph \ref{subs:prelims}), the image of this map can be identified with the second moments of the distribution $G$ at the corresponding scanner positions: $\langle x^2 \rangle$, $\langle y^2 \rangle$ and $\langle z^2 \rangle$. They are known from our experiments, hence $E^{-1}$ provides us with the emittances of the distribution.

\section{Conclusion}

We have examined in detail the close connection between linear normal form and covariance matrices belonging to a matched particle distribution. In fact, a linear normal form is contained in such a matrix: If the emittances are mutually distinguishable, then up to an $\mathrm{SO}(2)^n$-freedom (which can be understood as a rotation part related to the tunes) the entries of the normal form are uniquely determined. Furthermore, by means of the Iwasawa decomposition, we obtain a natural generalization of the optics $\beta$-functions and coupling terms, which complement, together with their relation \eqref{eq:beta_tensor} to the embedding coefficients $\beta_{ij}^{\hspace{.2cm} l}$, our picture in this linear scenario. 
In addition, we have provided the connection to the Courant-Snyder parameterization and the Lapostolle-emittance and summarized useful formulae for practical applications regarding simulations and experiments.

\section{Acknowledgments}

The author want to thank Prof. Dr. M. de Gosson and Dr. F. Schmidt for inspiring comments. This work has been sponsored by the Wolfgang Gentner Programme of the German Federal Ministry of Education and Research (grant no. 05E15CHA)

\section{Appendix}

\subsection{Preliminaries} \label{app:ec}

This part of the appendix is intended as a convenient reference of properties and notations which we used in the main text. Some of these facts are known in the literature but often scattered or not easy to find in a concise and self-contained fashion. We will begin with the proof of Thm. \ref{thm:cov_spsd}.\\\\
\textit{$G \in \mathbb{R}^{m \times m}$ is a covariance matrix if and only if $G$ is symmetric and positive semidefinite.}
\begin{proof}
'$\Rightarrow$' Symmetry is a consequence of $\langle x_i x_j \rangle = \langle x_j x_i \rangle$. Positive semidefiniteness
follows with $u^{tr} G u = u^{tr} \langle x x^{tr} \rangle u = \langle u^{tr} x x^{tr} u \rangle = \langle (u^{tr} x)^2 \rangle
\geq 0$.
'$\Leftarrow$' Since $G$ is symmetric, we can find, by Sylvester's law of inertia, an orthogonal matrix $Q$ 
and a diagonal matrix $D$ so that $D = Q^{tr}GQ$ hold. Since
$G$ is positive semidefinite, the diagonal entries $D_k$ are non-negative. Set $\sqrt{D}$ by taking the square root of these
diagonal elements, so that we obtain a Cholesky decomposition of $G$ by $G = Q\sqrt{D}\sqrt{D}^{tr}Q^{tr} = PP^{tr}$
with $P := Q \sqrt{D}$. Now take $m$ independent random variables $z_j$, i.e. $\langle z_i z_j \rangle = \delta_{ij}$ for $i, j = 1, ..., m$.
Set $x := Pz$. It follows $G = PP^{tr} = \langle Pz(Pz)^{tr} \rangle = \langle x x^{tr} \rangle$, so $G$ is a covariance matrix.
\end{proof}
\noindent Let us make a notation convention:
\begin{conv} \label{conv:ab}
The Symbol $\mathbb{K}$ means either 
$\mathbb{R}$ or $\mathbb{C}$.
$J$ denotes the symplectic structure
\[
J := J_n := \left(\begin{array}{cc} 0 & 1_n \\ -1_n &
    0 \end{array}\right) \in \mathbb{K}^{2n \times 2n} ,
\]
where $1_n$ denotes the identity matrix in $\mathbb{K}^{n \times n}$. The upper letter
'$H$' on a matrix means transposition and complex conjugation. For $x, y \in \mathbb{C}^{2n}$
we understand $\langle \cdot, \cdot \rangle$ as the sesquilinear form $\langle x, y \rangle := x^H y$
if nothing else is stated. We will sometimes use the abbreviation
$\overline{n} := \{1, ..., n\}$ for $n \in \mathbb{N}$. If $V$ is a vector space,
we denote its complexification by $V_\mathbb{C}$ and for $M \colon V \to V$, we sometimes
denote its action onto $V_\mathbb{C}$ by $M_\mathbb{C}$. However, this notion will be dropped whenever the context is clear.\\\\
Let $M \in \mathrm{Sp}(2n; \mathbb{R})$ be diagonalizable with mutually distinguishable eigenvalues. We denote
by $\{a_j ; j \in \overline{2n}\}$
a fixed basis of eigenvectors, where $a_j \in \mathbb{C}^{2n}$ belongs to the eigenvalue $\lambda_j \in \mathbb{C}$. 
Because $M$ is real, $\bar \lambda_j$ is the eigenvalue of the eigenvector $\bar a_j$ of $M$.
We have for all $i, j \in \overline{2n}$:
\begin{equation}
\bar \lambda_i \langle a_i, J a_j \rangle =
\langle M a_i, J a_j \rangle = \langle a_i, M^{tr} J a_j \rangle = \langle a_i, J M^{-1} a_j \rangle 
= 1/\lambda_j \langle a_i, J a_j \rangle = \bar \lambda_j \langle a_i, J a_j \rangle, \label{eq:ai_j_aj}
\end{equation}
so we conclude, since all eigenvalues are mutually distinguishable, that if $i \not = j$, then $a_i$ and $J a_j$ are orthogonal.
Because of this orthogonality, the fact that $\{Ja_i\}$ is a basis of $\mathbb{C}^{2n}$ and $\langle \cdot, \cdot \rangle$ is non-degenerate, it must
hold that $\forall i \colon \langle a_i, J a_i \rangle \not = 0$,
and these values are purely imaginary, which follows by $\langle x, y \rangle = \overline{\langle y, x \rangle}$ with $J^{tr} = -J$.
Set $i \sigma_j := \langle a_j, J a_j \rangle$ with $\sigma_j \in \mathbb{R} \backslash \{0\}$.
Since $- i \sigma_j = \overline{\langle a_j, J a_j \rangle} = \langle \bar a_j, J \bar a_j \rangle$, we can
choose a representation system $\{j_1, j_2, ..., j_n\} \subset \overline{2n}$ of the equivalence
relation introduced in Dfn. \ref{dfn:equirel}, so that $\forall k \in \overline{n} \colon \sigma_{j_k} = 1$ hold.\\\\
The eigenvalues of $M^{tr}$ coincide with $M$, and if we set $b_i := Ja_i$, then
\[
M^{tr}b_i = - JM^{-1}Jb_i = JM^{-1}a_i = \bar \lambda_i Ja_i = \bar
\lambda_i b_i ,
\]
i.e. $b_i$ is an eigenvector or $M^{tr}$ with respect to $\bar \lambda_i$. Let us summarize:
\begin{enumerate}
\item If we speak of an eigenvector $b_i$ of $M^{tr}$ we will always understand $b_i := J a_i$ for a given (and fixed) system of eigenvectors
$\{a_i, i \in \overline{2n}\}$ of $M$.
\item From the properties of $M$ we always have $i \not = j \Rightarrow \langle a_i, J a_j \rangle = 0$.
\item There is a subset $\{j_1, ..., j_n\} \subset \overline{2n}$ so that 
$\forall k \in \overline{n} \colon \langle a_{j_k}, J a_{j_k} \rangle = i \sigma_k$ with $\mathbb{R} \ni \sigma_k > 0$ holds.
\end{enumerate}
\end{conv}
\noindent Proofs of the next statements are straightforward. They are required in the proof of the linear normal form Thm. \ref{thm:s_diag_r}.
\begin{prop} \label{prop:evon}
Let $V$ be a $\mathbb{K}$-vector space and $M \colon V \to V$ linear.
Then all eigenvectors of $M$ belonging to mutually distinguishable eigenvalues are linearly
independent.
\end{prop}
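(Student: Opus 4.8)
The plan is the classical induction-on-length argument. Suppose, toward a contradiction, that some finite collection of eigenvectors with pairwise distinct eigenvalues is linearly dependent, and among all such dependent collections pick one of minimal size, say $v_1, \dots, v_m$ with $M v_i = \lambda_i v_i$ and the $\lambda_i$ mutually distinguishable. By minimality $m \geq 2$ (a single eigenvector is nonzero by definition, hence independent), and there is a relation $\sum_{i=1}^m c_i v_i = 0$ in which every coefficient $c_i$ is nonzero (if some $c_j$ vanished we would have a shorter dependent subcollection).

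Next I would apply $M$ to this relation to obtain $\sum_{i=1}^m c_i \lambda_i v_i = 0$, and then subtract $\lambda_m$ times the original relation. This eliminates the last term and yields $\sum_{i=1}^{m-1} c_i (\lambda_i - \lambda_m) v_i = 0$. Because the eigenvalues are mutually distinguishable, $\lambda_i - \lambda_m \neq 0$ for $i < m$, so the coefficients $c_i(\lambda_i - \lambda_m)$ are all nonzero; hence $v_1, \dots, v_{m-1}$ is a dependent collection of eigenvectors with distinct eigenvalues of size $m-1 < m$, contradicting minimality. Therefore no finite dependent collection exists, i.e. every family of eigenvectors belonging to mutually distinguishable eigenvalues is linearly independent.

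There is essentially no obstacle here — the statement holds over any field $\mathbb{K}$ and in arbitrary (possibly infinite-dimensional) $V$, since linear independence is a statement about finite subfamilies, and the argument above only ever manipulates finitely many vectors at a time. The only point to state carefully is that eigenvectors are by convention nonzero, which anchors the base case $m = 1$; after that the induction step is the one-line "apply $M$, subtract a multiple of the relation" computation.
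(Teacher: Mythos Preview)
Your argument is correct and is exactly the standard minimal-counterexample (equivalently, induction) proof of this elementary fact. The paper itself does not spell out a proof for this proposition --- it merely remarks that the statement is straightforward --- so there is nothing to compare against beyond noting that your write-up is the canonical argument the author evidently had in mind.
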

\begin{dfn} \label{dfn:equirel}
Let $V$ be an $\mathbb{R}$-vector space and $M \colon V \to V$ linear.
Assume that $M_\mathbb{C}$ has mutually distinguishable eigenvalues $\{\lambda_j \in \mathbb{C}; j \in I_0 \}$.
Since $M$ is real, it also admits the complex conjugate eigenvalues. So we can introduce on $I_0$
the equivalence relation $i \sim j :\Leftrightarrow \lambda_i = \bar \lambda_j$.
We denote the equivalence class of $j \in I_0$ by $[j]$. They constitute of pairs of indices.
\end{dfn}
\begin{prop} \label{prop:re_2basis}
Let $V$ be a real vector space and $M \colon V \to V$ linear. Assume that $M_\mathbb{C}$ has
mutually distinguishable eigenvalues
$\{\lambda_j \in \mathbb{C} ; j \in I_0 \}$ with corresponding 
eigenvectors $a_j = x_j + i y_j \in V_\mathbb{C}$
so that $x_j, y_j \in V$.
Then $\{x_{j_1}, y_{j_1}, x_{j_2}, y_{j_2}, ...\}$ are linearly independent in $V_\mathbb{C}$ for every representation system
$\{j_1, j_2, ...\}$ of the equivalence relation introduced in Dfn. \ref{dfn:equirel}.
\end{prop}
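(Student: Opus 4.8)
The plan is to reduce the statement to Prop.~\ref{prop:evon} on the linear independence of eigenvectors belonging to mutually distinguishable eigenvalues. The observation driving the reduction is that the real and imaginary parts $x_{j_k}, y_{j_k}$ are obtained from the eigenvectors $a_{j_k}$ and their complex conjugates $\bar a_{j_k}$ by a fixed invertible $\mathbb{C}$-linear substitution which acts inside each conjugate pair separately, and that the enlarged family $a_{j_1}, \bar a_{j_1}, a_{j_2}, \bar a_{j_2}, \dots$ consists of eigenvectors of $M_\mathbb{C}$ attached to pairwise distinct eigenvalues. Once the enlarged family is known to be linearly independent, so is $\{x_{j_1}, y_{j_1}, x_{j_2}, y_{j_2}, \dots\}$.

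First I would record that $\bar a_j$ is an eigenvector of $M_\mathbb{C}$ with eigenvalue $\bar\lambda_j$, which follows from $M$ being real via $M_\mathbb{C}\bar a_j = \overline{M_\mathbb{C} a_j} = \bar\lambda_j \bar a_j$, and that $\bar\lambda_j \neq \lambda_j$ because the eigenvalues are non-real in our setting (for a symplectic $M$ with distinct eigenvalues on the unit circle this is automatic, since $\pm 1$ would be forced to have multiplicity two). Next I would exploit that $\{j_1, j_2, \dots\}$ is a representation system of the equivalence relation of Dfn.~\ref{dfn:equirel}: for $k \neq l$ the indices $j_k, j_l$ lie in distinct classes, so $\lambda_{j_k} \notin \{\lambda_{j_l}, \bar\lambda_{j_l}\}$, and together with $\lambda_{j_k} \neq \bar\lambda_{j_k}$ this shows that the list $\lambda_{j_1}, \bar\lambda_{j_1}, \lambda_{j_2}, \bar\lambda_{j_2}, \dots$ has no repetition. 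Prop.~\ref{prop:evon} then gives that $a_{j_1}, \bar a_{j_1}, a_{j_2}, \bar a_{j_2}, \dots$ are linearly independent in $V_\mathbb{C}$.

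To conclude, one takes a relation $\sum_k (\alpha_k x_{j_k} + \beta_k y_{j_k}) = 0$ with $\alpha_k, \beta_k \in \mathbb{C}$ and substitutes $x_{j_k} = \frac12(a_{j_k} + \bar a_{j_k})$, $y_{j_k} = \frac{1}{2i}(a_{j_k} - \bar a_{j_k})$, obtaining
\[
\sum_k \left[ \left( \frac{\alpha_k}{2} + \frac{\beta_k}{2i} \right) a_{j_k} + \left( \frac{\alpha_k}{2} - \frac{\beta_k}{2i} \right) \bar a_{j_k} \right] = 0 ;
\]
linear independence of the enlarged family forces $\frac{\alpha_k}{2} \pm \frac{\beta_k}{2i} = 0$, hence $\alpha_k = \beta_k = 0$, for every $k$. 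I expect the only point requiring care to be the combinatorial bookkeeping of the second step -- that selecting one index per $\sim$-class and then adjoining complex conjugates yields eigenvalues without repetition; the substitution and the non-reality remark are routine.
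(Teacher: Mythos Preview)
The paper does not actually supply a proof of Prop.~\ref{prop:re_2basis}; it only remarks beforehand that ``proofs of the next statements are straightforward'' and leaves them to the reader. Your argument is precisely the standard one the paper is gesturing at: enlarge the family $\{x_{j_k}, y_{j_k}\}$ to $\{a_{j_k}, \bar a_{j_k}\}$ via the invertible $2\times 2$ change of basis, check that the associated eigenvalues $\lambda_{j_k}, \bar\lambda_{j_k}$ are pairwise distinct (using both that the $\lambda_j$ are mutually distinguishable and that representatives of distinct $\sim$-classes satisfy $\lambda_{j_k} \neq \bar\lambda_{j_l}$), and then invoke Prop.~\ref{prop:evon}. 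This is correct and complete.

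One small remark on presentation: your parenthetical about non-reality of the eigenvalues appeals to the symplectic setting, whereas Prop.~\ref{prop:re_2basis} as stated is for a general real linear $M$. The cleaner justification, staying at the generality of the proposition, is that Dfn.~\ref{dfn:equirel} already asserts that the equivalence classes ``constitute of pairs of indices'', which forces $\lambda_{j_k} \neq \bar\lambda_{j_k}$ for every $k$; no symplecticity is needed. With that adjustment your proof matches exactly what the paper intends.
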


\begin{prop} \label{prop:span_inv}
Let $V$ be a real vector space and $M \colon V \to V$ linear. Let $\lambda \in \mathbb{C}$, be
an eigenvalue of $M_\mathbb{C}$ with eigenvector
$a = x + iy \in V_\mathbb{C}$ so that $x, y \in V$.
Then the $\mathbb{C}$-vector space spanned by $x$ and $y$ in $V_\mathbb{C}$ is $M$-invariant and it holds
\begin{equation}
\forall \alpha, \beta \in \mathbb{C} \colon \;\; M(\alpha x + \beta y) =
(\alpha \lambda_R + \beta \lambda_I)x + (\beta \lambda_R - \alpha \lambda_I)y ,
\end{equation}
where $\lambda_R := \mathrm{Re}(\lambda)$ and $\lambda_I := \mathrm{Im}(\lambda)$ are the real and imaginary parts
of $\lambda$.
\end{prop}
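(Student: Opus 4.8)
The plan is to extract the two defining real equations from the single complex eigenvector relation $M_\mathbb{C} a = \lambda a$ and then to read off the displayed formula by invoking $\mathbb{C}$-linearity of $M_\mathbb{C}$.

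First I would write $a = x + iy$ with $x, y \in V$ and expand both sides of $M_\mathbb{C}(x + iy) = (\lambda_R + i\lambda_I)(x + iy)$. On the left, the complexification acts by $M_\mathbb{C}(u + iv) = Mu + iMv$ for $u, v \in V$, so the left-hand side is $Mx + iMy$ with $Mx, My \in V$. On the right, multiplying out gives $(\lambda_R x - \lambda_I y) + i(\lambda_I x + \lambda_R y)$. Since $V_\mathbb{C} = V \oplus iV$ as a real vector space, I may compare the $V$-component and the $iV$-component separately, which yields the pair
\[
Mx = \lambda_R x - \lambda_I y, \qquad My = \lambda_I x + \lambda_R y .
\]

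Next I would take arbitrary $\alpha, \beta \in \mathbb{C}$ and use that $M_\mathbb{C}$ is $\mathbb{C}$-linear to write $M(\alpha x + \beta y) = \alpha\, Mx + \beta\, My$; substituting the two equations above and collecting the coefficients of $x$ and of $y$ gives exactly $(\alpha \lambda_R + \beta \lambda_I)x + (\beta \lambda_R - \alpha \lambda_I)y$, which is the claimed identity. In particular the right-hand side again lies in $\mathrm{span}_\mathbb{C}\{x, y\}$, so that subspace is $M$-invariant; note the statement does not require $\{x, y\}$ to be linearly independent, and the computation is unaffected in the degenerate case.

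There is essentially no serious obstacle here: the only point demanding care is the bookkeeping of the identification $V_\mathbb{C} = V \oplus iV$ that legitimizes "comparing real and imaginary parts," together with tracking the signs when expanding $(\lambda_R + i\lambda_I)(x + iy)$. Everything else is a one-line manipulation.
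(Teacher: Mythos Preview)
Your proof is correct. The paper itself omits the proof of this proposition, declaring it (together with the neighboring propositions) ``straightforward''; your argument---separating real and imaginary parts of $M_\mathbb{C}(x+iy)=(\lambda_R+i\lambda_I)(x+iy)$ and then combining by $\mathbb{C}$-linearity---is exactly the intended direct verification.
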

The next map emerged rather often in our programs as well as in some formulae, so that we found it useful to write it down as reference. It appears whenever we had to switch between the $(x, p_x, y, p_y, ..)$ phase-space notation to a block notation of the form $(x, y, ..., p_x, p_y, ...)$. But we also used this (non-symplectic) isomorphism in Thm. \ref{thm:m_invariance} to transport a statement regarding $2 \times 2$ matrices to a statement regarding block matrices and where it is useful to keep track that changing the notation does not have any effect on the symplecticity of the result.
\begin{dfnsz} \label{dfnsz:oplus}
Let
  $A \in \mathbb{K}^{2n \times 2n}$ and $B \in \mathbb{K}^{2m \times 2m}$ be two block-matrices of the form
\begin{align*}
A &= 
\left(
\begin{array}{cc}
A_{11} & A_{12} \\
A_{21} & A_{22}
\end{array}
\right) ,
& B &= 
\left(
\begin{array}{cc}
B_{11} & B_{12} \\
B_{21} & B_{22}
\end{array}
\right) ,
\end{align*} 
with $A_{ij} \in \mathbb{K}^{n \times n}$ and $B_{ij} \in \mathbb{K}^{m \times m}$ respectively.
Define $A \diamond B \in \mathbb{K}^{2(n + m) \times 2(n + m)}$ as
  \[
  A \diamond B := \left(
    \begin{array}{cccc}
      A_{11} & 0 & A_{12} & 0 \\
      0 & B_{11} & 0 & B_{12} \\
      A_{21} & 0 & A_{22} & 0 \\
      0 & B_{21} & 0 & B_{22} 
    \end{array}
  \right) .
  \]
It holds
\begin{enumerate}
\item $\diamond$ is bilinear.
\item  If $A, C \in \mathbb{K}^{2n \times 2n}$ and $B, D \in \mathbb{K}^{2m \times 2m}$, then
$(A \diamond B)(C \diamond D) = (AC \diamond BD)$.
\item For $A \in \mathbb{K}^{2n \times 2n}$, $B \in \mathbb{K}^{2m \times 2m}$ and
$C \in \mathbb{K}^{2k \times 2k}$ associativity holds: $A \diamond (B \diamond C) = (A \diamond B) \diamond C$.
\item  $(A \diamond B)^{tr} = A^{tr} \diamond B^{tr}$.
\item  $\det(A \diamond B) = \det(A) \det(B)$.
\item $J_n \diamond J_m = J_{n + m}$.
\end{enumerate}
\end{dfnsz}

\begin{rmk} \label{rmk:dia_oplus}
For convenience we may want to recast $A_1 \diamond \cdots \diamond A_n$ for
$A_k \in \mathbb{K}^{2 \times 2}$, $k \in \overline{n}$, in block-diagonal form, which we will denote
by the symbol $\oplus$ as $A_1 \oplus \cdots \oplus A_n
= \mathrm{diag}(A_1, ..., A_n)$.
For this purpose we introduce the following orthogonal operator $T \in \mathbb{K}^{2n \times 2n}$ (and
also provide its inverse) on the canonical basis $\{e_j ; j \in \overline{2n}\}$ of $\mathbb{K}^{2n}$:
\begin{align*}
T(e_j) &:= \left\{
\begin{array}{ll}
e_{(j + 1)/2} & \text{if $j$ is odd,} \\
e_{n + j/2} & \text{else,}
\end{array}\right.  &
T^{-1}(e_j) &= \left\{
\begin{array}{ll}
e_{2j - 1} & \text{if } j \in \overline{n}, \\
e_{2(j - n)} & \text{else.}
\end{array}\right.
\end{align*}
Then it holds $T^{tr} (A_1 \diamond \cdots \diamond A_n) T = A_1 \oplus \cdots \oplus A_n$.
\end{rmk}

\noindent A proof of Thm. \ref{thm:s_diag_r} can be found for example in Ref. \cite{bib:dragt2018}, but because of its importance and also because of certain details in the construction of the map we will recall it here:
\begin{proof}
Let us rescale the $a_j$'s by $\sqrt{2/|\sigma_j|}$, where $\sigma_j$ is given according to Conv. \ref{conv:ab}.
So without loss of generality we consider eigenvectors so that $\sigma_j = \pm 2$ hold.
By Conv. \ref{conv:ab} there is a representation system
$\{j_1, j_2, ..., j_n\} \subset \overline{2n}$ so that $\forall k \in \overline{n} \colon \sigma_{j_k} = 2$ holds.
According to Prop. \ref{prop:re_2basis}, we obtain a corresponding
real basis $\{x_1, x_2, ..., x_n, y_1, y_2, ..., y_n \}$ of $\mathbb{C}^{2n}$
with $a_{j_k} = x_k + i y_k$. We thus have by construction $\forall k, l \in \overline{n}$:
\begin{subequations}
\begin{align}
i 2 \delta_{kl} &= \langle a_{j_k}, J a_{j_l} \rangle = \langle x_k + i y_k, J x_l + i J y_l \rangle
= \langle x_k, J x_l \rangle + \langle y_k, J y_l \rangle - i \langle y_k, J x_l \rangle + i \langle x_k, J y_l \rangle , \label{eq:i2dkl_a} \\
0 &= \langle a_{j_k}, J \bar a_{j_l} \rangle = \langle x_k + i y_k, J x_l - i J y_l \rangle 
= \langle x_k, J x_l \rangle - \langle y_k, J y_l \rangle - i \langle y_k, J x_l \rangle - i \langle x_k, J y_l \rangle . \label{eq:i2dkl_b}
\end{align}
\end{subequations}
Therefore $\forall k, l \in \overline{n}$:
\begin{subequations}
\begin{align}
\langle x_k, J x_l \rangle &= 0 , \\
\langle y_k, J y_l \rangle &= 0 , \\
\langle x_k, J y_l \rangle &= \delta_{kl} ,
\end{align}
\end{subequations}
and so the linear map $V \colon \mathbb{R}^{2n} \to \mathbb{R}^{2n}$ defined on the canonical basis 
$\{e_j, j \in \overline{2n}\}$ of $\mathbb{R}^{2n}$ via
\begin{equation}
V(e_j) := \left\{\begin{array}{ll}
x_j & \text{if } j \in \overline{n} , \\
y_{j - n} & \text{else,} 
\end{array}\right.
\end{equation}
is symplectic. By Prop. \ref{prop:span_inv} we know that for $k \in \overline{n}$ the planes
$\tilde E_k := \mathrm{span}_\mathbb{R}\{V(e_k), V(e_{n + k})\} \subset \mathbb{R}^{2n}$ are
$M$-invariant; i.e.
if $(\alpha, \beta)^{tr} \in \mathbb{R}^2$ denote the vector $\alpha V(e_k) + \beta V(e_{n + k}) \in \tilde E_k$, then
with $\lambda := \lambda_{j_k} \in \mathbb{R}$:
\begin{equation}
M |_{\tilde E_k} 
\left(
\begin{array}{c}
\alpha \\ \beta
\end{array}
\right)
= 
\left(
\begin{array}{c}
\alpha \lambda_R + \beta \lambda_I \\
\beta \lambda_R - \alpha \lambda_I
\end{array}
\right)
=
\left(
\begin{array}{cc}
\lambda_R & \lambda_I \\
- \lambda_I & \lambda_R
\end{array}
\right)
\left(
\begin{array}{c}
\alpha \\ \beta
\end{array}
\right) . \label{eq:m_inv_block}
\end{equation}
By assumption $M$ has eigenvalues on the unit circle, $\lambda_R^2 + \lambda_I^2 = 1$, 
so $V^{-1} M V$ has the properties as claimed.
\end{proof}
\begin{lem} \label{lem:r_invariance}
Let $R = R_1 \oplus \cdots \oplus R_n \in \mathbb{K}^{2n \times 2n}$ be block diagonal with
\[
R_i = \left(\begin{array}{cc}
\cos(\varphi_i) & - \sin(\varphi_i) \\
\sin(\varphi_i) & \cos(\varphi_i)
\end{array}\right) ,
\]
with mutually disjoint $\varphi_i$, where $\varphi_i \in ]0, \pi[$. Let $B \in \mathbb{K}^{2n \times 2n}$ 
be given. Then
\begin{equation}
RB = BR \;\; \Leftrightarrow \;\; B = D_1 + J_2^{\oplus n} D_2 ,
\end{equation}
where $D_j$ are diagonal matrices
and of the form $\mathrm{diag}(b_1, b_1, b_2, b_2, 
..., b_n, b_n)$. 
\end{lem}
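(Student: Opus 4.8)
The plan is to turn the single matrix identity $RB=BR$ into a family of $2\times 2$ Sylvester equations indexed by pairs of diagonal blocks, solve each of them, and reassemble. Writing $B=(B_{ij})_{i,j\in\overline n}$ in $2\times 2$ blocks conforming to $R=R_1\oplus\cdots\oplus R_n$, and using that $R$ is block diagonal, the condition $RB=BR$ is equivalent to
\[
R_iB_{ij}=B_{ij}R_j\qquad(i,j\in\overline n).
\]
It therefore suffices to analyse this equation for $2\times2$ rotation blocks, separately for $i=j$ and for $i\neq j$; everything else is bookkeeping.

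For the diagonal blocks I would note $R_i=\cos(\varphi_i)\,1_2+\sin(\varphi_i)\,J_2$ with $\sin(\varphi_i)\neq0$ since $\varphi_i\in\,]0,\pi[\,$, so commuting with $R_i$ is the same as commuting with $J_2$; a direct computation of the commutant of $J_2$ in $\mathbb{K}^{2\times2}$ gives $B_{ii}=a_i\,1_2+b_i\,J_2$ for scalars $a_i,b_i\in\mathbb{K}$. For the off-diagonal blocks ($i\neq j$) the key point — and the heart of the argument — is that the linear operator $X\mapsto R_iX-XR_j$ on $\mathbb{K}^{2\times2}$ has spectrum $\{\mu-\nu:\mu\in\mathrm{spec}(R_i),\ \nu\in\mathrm{spec}(R_j)\}$, hence is invertible exactly when $R_i$ and $R_j$ share no eigenvalue. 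Complexifying if $\mathbb{K}=\mathbb{R}$, the eigenvalues of $R_i$ are $e^{\pm i\varphi_i}$ and those of $R_j$ are $e^{\pm i\varphi_j}$; since $\varphi_i,\varphi_j\in\,]0,\pi[\,$ and $\varphi_i\neq\varphi_j$, the relation $\varphi_i-\varphi_j\in\,]{-\pi},\pi[\,$ excludes $e^{\pm i\varphi_i}=e^{\pm i\varphi_j}$ and $\varphi_i+\varphi_j\in\,]0,2\pi[\,$ excludes $e^{\pm i\varphi_i}=e^{\mp i\varphi_j}$. The spectra are thus disjoint, the operator is invertible, and $B_{ij}=0$.

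Assembling these facts, $B$ is block diagonal with $i$-th block $a_i\,1_2+b_i\,J_2$. It remains only to recognise this as $D_1+J_2^{\oplus n}D_2$ with $D_1=\mathrm{diag}(a_1,a_1,\ldots,a_n,a_n)$ and $D_2=\mathrm{diag}(b_1,b_1,\ldots,b_n,b_n)$: the $i$-th $2\times2$ block of $J_2^{\oplus n}D_2$ is $J_2\cdot b_i\,1_2=b_i\,J_2$, so the two sides agree block by block. This gives the implication ``$\Rightarrow$''. The converse ``$\Leftarrow$'' I would dispatch first as a warm-up: if $B=D_1+J_2^{\oplus n}D_2$ with $D_1,D_2$ of the stated form, then $B$ is block diagonal with each block lying in $\mathrm{span}_\mathbb{K}\{1_2,J_2\}$, an abelian subalgebra containing every $R_i$, so $RB=BR$. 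The only genuine obstacle is the Sylvester/disjoint-spectra step, and the whole argument is uniform in $\mathbb{K}$ because the real equation and its complexification have the same solution space.
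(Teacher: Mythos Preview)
Your argument is correct, and it takes a genuinely different route from the paper's. Both proofs begin by reducing $RB=BR$ to the blockwise equations $R_iB_{ij}=B_{ij}R_j$, but from there they diverge. The paper splits $B$ into its symmetric and antisymmetric parts, shows each part separately commutes with $R$, and then for the off-diagonal blocks forms the positive semidefinite $L_{ij}=B_{ij}^{tr}B_{ij}$, proves $L_{ij}$ commutes with $R_j$ and hence is a scalar multiple of $1_2$, so (when nonzero) $B_{ij}/\sqrt{r_{ij}}$ is orthogonal; a case analysis on $\det=\pm1$ then forces $\varphi_i=\pm\varphi_j$, contradicting the hypothesis. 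Your approach instead invokes the Sylvester criterion directly: the operator $X\mapsto R_iX-XR_j$ is invertible precisely when $R_i$ and $R_j$ have disjoint spectra, and your interval arithmetic on $\varphi_i\pm\varphi_j$ cleanly rules out any collision of the eigenvalues $e^{\pm i\varphi_i}$ and $e^{\pm i\varphi_j}$. This buys you a shorter, more conceptual proof that avoids the symmetric/antisymmetric split and the explicit $2\times2$ computations with $L_{ij}$; the paper's version, on the other hand, is more self-contained in that it does not appeal to the spectral description of the Sylvester operator. Your handling of the diagonal blocks via $R_i=\cos(\varphi_i)\,1_2+\sin(\varphi_i)\,J_2$ and the commutant of $J_2$ is also more direct than the paper's treatment, which reaches the same conclusion only after the symmetric/antisymmetric reduction.
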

\begin{proof}
Since the $R_i$'s are orthogonal and commute with $J_2$, the '$\Leftarrow$' direction is clear, so let us prove the '$\Rightarrow$'
direction. Decompose $B$ into a symmetric and an antisymmetric part $B = \frac{1}{2}(B + B^{tr}) + \frac{1}{2}(B - B^{tr}) 
=: S + A$. Since $R$ is antisymmetric, it follows
\begin{subequations}
\begin{align}
RS + RA &= R(S + A) = (S + A)R = SR + AR , \label{eq:rsra} \\
\mathrm{tr} \Rightarrow -SR + AR &= -RS + RA . \label{eq:srra}
\end{align}
\end{subequations}
By adding and subtracting Eqs. \eqref{eq:rsra} and \eqref{eq:srra} we obtain the equivalent conditions
$RS = SR$ and $AR = RA$. So let us assume for a moment that $B$ is (anti)symmetric.\\\\
Condition $BR = RB$ in block indices reads: $\forall i, j \in \overline{n} \colon B_{ij} R_j = R_i B_{ij}$. 
In particular, by exchanging
indices, we can write down the following two equations for every $i$ and $j$:
\begin{subequations}
\begin{align}
B_{ij} R_j &= R_i B_{ij} , \label{eq:brrb1} \\
B_{ji} R_i &= R_j B_{ji} .
\end{align}
\end{subequations}
From these two equations we obtain
\begin{equation}
R_j B_{ji} B_{ij} = B_{ji} R_i B_{ij} = B_{ji} B_{ij} R_j . \label{eq:rbb}
\end{equation}
Since $B$ is (anti)symmetric, $B_{ij} = \pm B_{ji}^{tr}$, and therefore with the positive semidefinite symmetric
$L_{ij} := B_{ij}^{tr} B_{ij}$, Eq. \eqref{eq:rbb} reads
\begin{equation}
R_j L_{ij} = L_{ij} R_j . \label{eq:rjlij}
\end{equation}
Now write $L_{ij}$ in terms of a $2 \times 2$ symmetric matrix
\[
L_{ij} =: \left(\begin{array}{cc}
r & d \\
d & f
\end{array}
\right) . 
\]
For brevity denote $c_j := \cos(\varphi_j)$ and $s_j := \sin(\varphi_j)$. Then Eq. \eqref{eq:rjlij}
reads
\begin{align*}
\left(
\begin{array}{cc}
r c_j - d s_j & d c_j - f s_j \\
r s_j + d c_j & d s_j + f c_j
\end{array}
\right) =
\left(\begin{array}{cc}
c_j & -s_j \\
s_j & c_j
\end{array}\right)
\left(\begin{array}{cc}
r & d \\
d & f
\end{array}\right) &=
\left(\begin{array}{cc}
r & d \\
d & f
\end{array}\right)
\left(\begin{array}{cc}
c_j & -s_j \\
s_j & c_j
\end{array}\right)
= \left(\begin{array}{cc}
r c_j + d s_j & - r s_j + d c_j \\
d c_j + f s_j & - d s_j + f c_j 
\end{array}\right) .
\end{align*}
Since $s_j \not = 0$ it follows from the $(1, 1)$ or $(2, 2)$-component that $d = 0$. Inserting
this into e.g. the $(1, 2)$-component we see that $r = f$ must hold and from $L_{ij} \geq 0$, it follows
that $r \geq 0$. We now attach the indices $i$ and $j$ on $r$. We conclude that 
$r_{ij} \boldsymbol{1}_2 = B_{ij}^{tr} B_{ij}$, so if $B_{ij} \not = 0$,
$C_{ij} := B_{ij}/\sqrt{r_{ij}}$ must be orthogonal.\\\\
In the case that $B_{ij} \not = 0$ there are now two options: Either $\det(C_{ij}) = 1$ or $\det(C_{ij}) = -1$. In the first case, $C_{ij}$ and therefore $B_{ij}$ commutes with $R_j$ and we get together with Eq. \eqref{eq:brrb1}
\begin{equation}
R_j B_{ij} = B_{ij} R_j = R_i B_{ij} .
\end{equation}
Now the second case. By multiplication of $BR = RB$ with the inverse of $R$, also $BR(-\varphi) = R(-\varphi)B$ hold. In this version, Eq. \eqref{eq:brrb1} reads
\begin{equation}
B_{ij} R_j(-\varphi_j) = R_i(-\varphi_i) B_{ij} .
\end{equation}
Let $W := \left(\begin{array}{cc} 0 & 1 \\ 1 & 0 \end{array}\right)$. Then $C_{ij} W$ and therefore $R_{ij} W$ commutes with $R_j$, so we get
\begin{equation}
R_j B_{ij} = R_j B_{ij} W W = B_{ij} W R_j W = B_{ij} R_j(-\varphi_j) = R_i(-\varphi_i) B_{ij} .
\end{equation}
Hence in both cases a equation of the form $R_j = R_i(\pm \varphi_i)$ follows, so $\varphi_i =
\pm \varphi_j$ and therefore, by construction, this is only possible if $\varphi_i = +\varphi_j$ and so $i = j$. Conversely we have shown that if $i \not = j$, then $B_{ij} = 0$, so $B$ must be block-diagonal and its diagonal blocks must have positive determinant.\\\\
Recall that we assumed that $B$ was (anti)symmetric. In the symmetric case, since
$B_{ii}^{tr} = B_{ii}$, and at the same time $B_{ii}/\sqrt{r_{ii}}$ is orthogonal (if $B_{ii} \not = 0$), the individual off-diagonal
elements must vanish and therefore $B$ has a diagonal form as claimed. In the antisymmetric case 
$B_{ii}^{tr} = - B_{ii}$, so its diagonal entries vanish and it has the form 
$J_2 \mathrm{diag}(b, b)$.
\end{proof}

\subsection{A basis for symmetric invariant matrices} \label{subs:symbasis}

Symmetric matrices which are $M$-congruent invariant, where $M$ is diagonalizable with mutually distinguishable eigenvalues, can be given a basis built out of the eigenvalues of $M$ itself. This very useful result, as stated in Cor. \ref{cor:real_sym_inv_basis}, is used in Ref. \cite{bib:nash2006} in order to find matched distributions near coupled synchrobetatron resonances. As we shall see at the end of the next paragraph, this decomposition is linked to the diagonalization of invariant covariance matrices by linear normal form, which is used in some of the other references mentioned in the introduction. For convenience, we change our notation to $M^{tr}$ \textit{only in this paragraph \ref{subs:symbasis}}, as otherwise we would have to attach many minus signs on the maps. 
Assume that $G$ is a symmetric invariant and define $X := JG$. Since $M^{tr}$ is symplectic, $M J M^{tr} = J$, condition \ref{dfn:m_inv} can be recast as
\[
X = JG = J M^{tr} G M = M^{-1} J G M = M^{-1} X M,
\]
and so 
\begin{equation}
X = M^{-1} X M .  \label{eq:adx_x}
\end{equation}
Denote by $\mathfrak{g}$ the Lie-algebra of $\mathrm{Sp}(2n; \mathbb{R})$. One can show that $\mathfrak{g}$ can be characterized as
$\mathfrak{g} = \{ X \in \mathbb{R}^{2n \times 2n} ;
J X + X^{tr} J = 0\}$. The elements of this semisimple Lie-algebra are called \textit{Hamiltonian matrices}. Now observe that since
$G$ is symmetric,
\begin{equation}
J (JG) + (JG)^{tr} J = - G - G^{tr} J^2 = 0 ,
\end{equation}
so together with Eq. \eqref{eq:adx_x} we have the characterization $X \in \mathfrak{g}$ with $\mathrm{Ad}(M)X = X$. If
$X \in \mathfrak{g}$, then conversely $JX$ is symmetric: $(JX)^{tr} = -X^{tr}J = JX$. Let $\mathfrak{h} := \{X \in \mathfrak{g} ; \mathrm{Ad}(M)X = X\}
= \{ X \in \mathfrak{g} ; [M, X] = 0 \}$.
$\mathfrak{h}$ is a Lie-subalgebra of $\mathfrak{g}$ since $\mathrm{Ad}(M)$ enters $[\cdot, \cdot]$ in both entries.
Condition \ref{dfn:m_inv} therefore essentially
means that we are considering elements $X = JG$ of the Lie-subalgebra $\mathfrak{h}$ of $\mathfrak{g}$ and so that
$G$ may in addition be positive semidefinite.
\begin{lem} \label{lem:eigad}
Let $A$ be a diagonalizable real or complex $N \times N$-matrix with eigenvalues $\nu_1, ..., \nu_N \in \mathbb{C}$. 
Then $\mathrm{ad}(A) = [A, \cdot]$ has the $N^2$ eigenvalues
$\tau_{ij} := \nu_i - \nu_j \in \mathbb{C}$ for $1 \leq i, j \leq N$ and the corresponding eigenmatrices
$V_{ij} \in \mathbb{C}^{n \times n}$
to $\tau_{ij}$ have the form
\[
V_{ij} = u_i v_j^H ,
\]
where $u_i$ and $v_j$ are eigenvectors of $A$ and $A^H$ with respect to $\nu_i$ and $\bar \nu_j$.
\end{lem}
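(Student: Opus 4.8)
The plan is to exhibit an explicit eigenbasis of $\mathrm{ad}(A)$ on $\mathbb{C}^{N\times N}$ built from rank-one matrices coming from a biorthogonal system of eigenvectors; the eigenvalue statement then drops out by a dimension count. First I would use diagonalizability of $A$ (regarded as acting on $\mathbb{C}^N$, whether or not $A$ is real) to pick a basis $u_1, \dots, u_N$ with $A u_i = \nu_i u_i$, set $P := (u_1 \mid \cdots \mid u_N)$ so that $P^{-1} A P = \mathrm{diag}(\nu_1, \dots, \nu_N) =: \Lambda$, and define $v_j^H$ to be the $j$-th row of $P^{-1}$; this yields at once the biorthogonality $v_j^H u_i = \delta_{ij}$. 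Taking the conjugate transpose of $P^{-1} A P = \Lambda$ shows $A^H P^{-H} = P^{-H} \bar\Lambda$, so the $j$-th column $v_j$ of $P^{-H}$ is an eigenvector of $A^H$ for the eigenvalue $\bar\nu_j$, which is exactly the form claimed for $V_{ij}$.

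The computational core is then a single line: since $v_j^H A = (A^H v_j)^H = (\bar\nu_j v_j)^H = \nu_j v_j^H$, one finds for $V_{ij} := u_i v_j^H$ that
\[
\mathrm{ad}(A) V_{ij} = A u_i v_j^H - u_i v_j^H A = \nu_i u_i v_j^H - \nu_j u_i v_j^H = (\nu_i - \nu_j)\, V_{ij} = \tau_{ij} V_{ij} .
\]
Hence every $V_{ij}$ is an eigenmatrix of $\mathrm{ad}(A)$ with eigenvalue $\tau_{ij}$.

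To see that these $N^2$ matrices exhaust the spectrum, I would observe that $V_{ij} = P E_{ij} P^{-1}$ with $E_{ij} := e_i e_j^H$ the standard matrix units (using $P e_i = u_i$ and $e_j^H P^{-1} = v_j^H$). Since conjugation $X \mapsto P X P^{-1}$ is a linear automorphism of $\mathbb{C}^{N \times N}$, it carries the basis $\{E_{ij}\}$ to a basis $\{V_{ij}\}$; therefore $\mathrm{ad}(A)$ is diagonalizable in the basis $\{V_{ij}\}$, its eigenvalues listed with multiplicity are precisely the $\tau_{ij}$, and the corresponding eigenmatrices are the $V_{ij} = u_i v_j^H$, which is the assertion.

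I do not expect a genuine obstacle here; the statement is a classical fact. The only points requiring a little care are the bookkeeping that identifies the left eigenvectors of $A$ with the eigenvectors of $A^H$ for the conjugate eigenvalues, and the elementary observation that the rank-one products of a basis of $\mathbb{C}^N$ with the corresponding dual basis span all of $\mathbb{C}^{N\times N}$ — once these are in place the result follows by the direct calculation above, and the argument is unaffected by repeated eigenvalues of $A$ (the $\nu_i$ are then simply listed with multiplicity, and several $\tau_{ij}$ may coincide, in which case the associated eigenspace of $\mathrm{ad}(A)$ is the span of the corresponding $V_{ij}$).
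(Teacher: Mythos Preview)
Your proof is correct and follows essentially the same approach as the paper: both reduce to the one-line computation $[A, u_i v_j^H] = \nu_i u_i v_j^H - u_i (A^H v_j)^H = (\nu_i - \nu_j) u_i v_j^H$. You go slightly further by explicitly exhibiting $V_{ij} = P E_{ij} P^{-1}$ to conclude that the $V_{ij}$ form a basis, whereas the paper's proof of the lemma records only the eigenvalue computation and establishes linear independence separately (via biorthogonality) in the subsequent proposition.
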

\begin{proof}
Since $A^H$ has the eigenvalues $\bar \nu_1, ..., \bar \nu_N$, let $A u_i = \nu_i u_i$ and $A^H v_i = \bar \nu_i v_i$. It follows
\[
[A, V_{ij}] = A u_i v_j^H - u_i v_j^H A = \nu_i u_i v_j^H - u_i (A^H v_j)^H = \nu_i u_i v_j^H - \nu_j u_i v_j^H  = \tau_{ij} V_{ij} .
\]
\end{proof}

\begin{prop} \label{prop:h_abelian}
Let $M \in \mathrm{Sp}(2n; \mathbb{R})$ be diagonalizable with mutually distinguishable eigenvalues and
denote by $\mathfrak{g}$ the Lie-algebra of $\mathrm{Sp}(2n; \mathbb{R})$ and
$\mathfrak{h} := \{X \in \mathfrak{g} ; [M, X] = 0\}$. Then $\mathfrak{h}$ is abelian.
\end{prop}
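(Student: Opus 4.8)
The plan is to reduce the statement to the elementary fact that the centralizer of a matrix with mutually distinguishable eigenvalues is a commutative algebra, namely the algebra of polynomials in that matrix, and then to note that $\mathfrak{h}$ sits inside this centralizer.

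First I would pass to the complexification and view $\mathrm{ad}(M) = [M, \cdot]$ as a linear operator on $\mathbb{C}^{2n \times 2n}$. By Lemma \ref{lem:eigad} (applied with $A = M$ and $N = 2n$), the eigenvalues of $\mathrm{ad}(M)$ are exactly $\tau_{ij} = \lambda_i - \lambda_j$, where $\lambda_1, \dots, \lambda_{2n}$ are the eigenvalues of $M$. Since these are mutually distinguishable, $\tau_{ij} = 0$ precisely when $i = j$, so $0$ is an eigenvalue of $\mathrm{ad}(M)$ of multiplicity exactly $2n$ and the complex centralizer $\mathcal{C} := \{X \in \mathbb{C}^{2n\times 2n} : [M,X] = 0\} = \ker(\mathrm{ad}(M))$ has complex dimension $2n$. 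Equivalently, one may argue even more directly: in a basis diagonalizing $M$ the centralizer consists of the diagonal matrices, which commute pairwise, and commutativity of the centralizer is a basis-independent property.

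Next I would observe that $1, M, M^2, \dots, M^{2n-1}$ all lie in $\mathcal{C}$ and are linearly independent, because $M$ is diagonalizable with $2n$ distinct eigenvalues and hence has minimal polynomial of degree $2n$. Therefore $\mathcal{C} = \mathbb{C}[M]$, the commutative algebra of polynomials in $M$. Since $\mathfrak{h} = \{X \in \mathfrak{g} : [M,X] = 0\}$ is contained in $\{X \in \mathbb{R}^{2n\times 2n} : [M,X] = 0\} \subseteq \mathcal{C} = \mathbb{C}[M]$, any two elements $X, Y \in \mathfrak{h}$ satisfy $[X,Y] = 0$, so $\mathfrak{h}$ is abelian.

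There is essentially no serious obstacle here: the only point requiring a touch of care is the dimension count identifying $\ker(\mathrm{ad}(M))$ with $\mathbb{C}[M]$ (or, in the basis-dependent version, invoking basis-invariance of commutativity). The distinguishability hypothesis on the eigenvalues of $M$ is exactly what makes the argument work, and everything else is immediate from Lemma \ref{lem:eigad}.
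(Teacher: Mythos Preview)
Your argument is correct. The one slightly loose step is the passage from ``$0$ is an eigenvalue of $\mathrm{ad}(M)$ of multiplicity $2n$'' to ``$\dim\ker(\mathrm{ad}(M))=2n$'': this needs $\mathrm{ad}(M)$ to be diagonalizable, which does follow from the diagonalizability of $M$ but is not stated. Your parenthetical basis-dependent argument (in a basis diagonalizing $M$ the centralizer is exactly the diagonal matrices) closes this cleanly and is in fact the most direct route.

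The paper proceeds a little differently. Rather than identifying the centralizer abstractly as $\mathbb{C}[M]$, it writes down the explicit spanning set $B_i := a_i b_i^H$ of $\mathfrak{m} := \{X \in \mathbb{C}^{2n\times 2n} : [M,X]=0\}$ coming from Lemma~\ref{lem:eigad}, and checks directly that $B_i B_j = a_i(b_i^H a_j)b_j^H = 0$ for $i \neq j$ using the orthogonality $\langle a_i, b_j\rangle = 0$ of Conv.~\ref{conv:ab}. Both approaches are really the same observation (diagonal matrices commute) in different clothing. Your $\mathbb{C}[M]$ identification is arguably cleaner for the bare abelianness claim; the paper's explicit rank-one generators $B_i$ have the advantage that they are reused immediately afterwards (Prop.~\ref{prop:hc_basis}) to build the $J$-unitary basis $C_k$ needed for the decomposition of symmetric invariants.
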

\begin{proof}
By application of lemma \ref{lem:eigad} to the real matrix $M$ we obtain the $(2n)^2$ eigenmatrices $V_{ij} = a_i b_j^H$ of $\mathrm{ad}(M)$. By Conv. \ref{conv:ab}, the entry $b_k^H V_{ij} a_l$ is zero if $k \not = i$ or $j \not = l$ and otherwise not, and so these eigenmatrices are linearly independent. Therefore $\mathfrak{m} := \{X \in \mathbb{C}^{2n \times 2n} ;
[M, X] = 0\}$ is spanned by
the matrices $B_i := V_{ii} = a_i b_i^H$, which
are the eigenmatrices belonging to the eigenvalue $0$.
By Conv. \ref{conv:ab} we have, since the eigenvalues are mutually distinguishable, for $i \not = j$:
$\langle a_i, b_j \rangle = 0$.
It follows $B_i B_j = a_i b_i^H a_j b_j^H = 0$ if $i \not = j$, showing that $\mathfrak{m}$ is abelian, especially
its sub Lie-algebras $\mathfrak{h} \subset \mathfrak{h}_\mathbb{C} \subset \mathfrak{m}$.
\end{proof}
\begin{prop} \label{prop:hc_basis}
There exist a basis $\{C_k; k \in \pm \bar n\}$ of linearly independent and $J$-unitary vectors of $\mathfrak{m}$ with respect to the
hermitian bilinear form $\langle X, Y \rangle_J := - \mathrm{tr}(J X^H J Y)$, i.e.
\begin{equation}
\forall k, l \in \pm \bar n \colon\langle C_k, C_l \rangle_J = \delta_{kl}.
\end{equation}
\end{prop}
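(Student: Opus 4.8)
The plan is to exhibit an explicit $\langle\cdot,\cdot\rangle_J$-orthonormal basis rather than to invoke an abstract Gram--Schmidt argument. In the proof of Prop.~\ref{prop:h_abelian} it was already shown that $\mathfrak{m} = \{X \in \mathbb{C}^{2n\times 2n}; [M,X]=0\}$ is spanned by the $2n$ linearly independent matrices $B_i := a_i b_i^H$ ($i \in \overline{2n}$), the eigenmatrices of $\mathrm{ad}(M)$ belonging to the eigenvalue $0$, where $b_i = J a_i$ for the fixed eigenbasis $\{a_i\}$ of $M$ from Conv.~\ref{conv:ab}. Hence it suffices to compute the Gram matrix of the $B_i$ with respect to $\langle\cdot,\cdot\rangle_J$, to observe that it is diagonal with positive diagonal entries, and then to rescale the $B_i$ and re-index them by $\pm\bar n$.

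Carrying this out, I would first record $B_i^H = b_i a_i^H = J a_i a_i^H$ together with $J^H = -J$ and $J^2 = -1_{2n}$. Then, using cyclicity of the trace,
\[
\langle B_i, B_j\rangle_J = - \mathrm{tr}\!\big(J\,(Ja_ia_i^H)\,J\,(a_jb_j^H)\big) = \mathrm{tr}\!\big(a_ia_i^H\,Ja_j\,b_j^H\big) = (b_j^H a_i)(a_i^H J a_j) .
\]
By Conv.~\ref{conv:ab} the factor $a_i^H J a_j = \langle a_i, J a_j\rangle$ vanishes for $i\neq j$ and equals $i\sigma_i$ for $i=j$, while $b_j^H a_i = (Ja_j)^H a_i = -\langle a_j, J a_i\rangle$ vanishes for $i\neq j$ and equals $-i\sigma_i$ for $i=j$. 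Therefore $\langle B_i,B_j\rangle_J = \delta_{ij}\,\sigma_i^2$ with $\sigma_i^2 > 0$. Setting $C_k := B_{\tau(k)}/|\sigma_{\tau(k)}|$ for a bijection $\tau\colon \pm\bar n \to \overline{2n}$ (for instance $\tau(k)=j_k$ for $k>0$ and $\tau$ sending $k<0$ to the conjugate partner of $j_{-k}$) then produces linearly independent vectors spanning $\mathfrak{m}$ with $\langle C_k,C_l\rangle_J = \delta_{kl}$, which is exactly the claim. In passing one also checks that $\langle\cdot,\cdot\rangle_J$ is genuinely hermitian, so that the notion of an orthonormal basis is meaningful: $\overline{\langle X,Y\rangle_J} = -\mathrm{tr}\big((JX^HJY)^H\big) = -\mathrm{tr}(Y^H J X J) = -\mathrm{tr}(JY^HJX) = \langle Y,X\rangle_J$.

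There is no genuine obstacle here; the computation is short once the basis of $\mathfrak{m}$ from Prop.~\ref{prop:h_abelian} is available. The only point requiring care is the bookkeeping of complex conjugations — in particular using $J^H=-J$ correctly and tracking that $b_i^H a_i = -i\sigma_i$ while $a_i^H J a_i = i\sigma_i$ — so that the self-pairings come out as $+\sigma_i^2$ rather than $-\sigma_i^2$; a negative value would signal that $\pm\bar n$ should carry a signature, whereas the computation shows $\langle\cdot,\cdot\rangle_J$ is in fact positive definite on $\mathfrak{m}$.
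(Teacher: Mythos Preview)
Your proof is correct and follows essentially the same route as the paper: you take the basis $B_i = a_i b_i^H$ of $\mathfrak{m}$ established in Prop.~\ref{prop:h_abelian}, compute the Gram matrix $\langle B_i,B_j\rangle_J$ using Conv.~\ref{conv:ab} to obtain a positive diagonal, and then normalize. The only cosmetic differences are that the paper first rescales the eigenvectors $a_{j_k}$ so that $\sigma_{j_k}=1$ before forming the $C_k$, whereas you keep the $\sigma_i$ general and absorb them into the normalization at the end.
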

\begin{proof} 
Let $\{j_1, ..., j_n\} \subset \overline{2n}$ be a representation system according to Conv. \ref{conv:ab} and rescale the $a_{j_k}$'s so that
$\langle a_{j_k}, J a_{j_k} \rangle = i$ holds. Set $C_k := a_{j_k} b_{j_k}^H$ and use the notation $-j_k$
for the other element in the equivalence class $[j_k]$. Correspondingly let $C_{-k} := a_{-j_k} b_{-j_k}^H = \overline{C_k}$ . By the proof of Prop.
\ref{prop:h_abelian} $\{C_k ; k \in \pm \bar n\}$ constitute a basis of $\mathfrak{m}$. We have
\begin{align}
\langle C_k, C_{\pm l} \rangle_J &= - \mathrm{tr}(J b_{j_k} a_{j_k}^H J a_{\pm j_l} b_{\pm j_l}^H) 
= - \mathrm{tr}(b_{\pm j_l}^H J b_{j_k} a_{j_k}^H J a_{\pm j_l}) =
- 2n \langle b_{\pm j_l}, J b_{j_k} \rangle \langle a_{j_k}, J a_{\pm j_l} \rangle \nonumber \\
&= -2n \langle Ja_{\pm j_l}, J^2 a_{j_k} \rangle \langle a_{j_k}, J a_{\pm j_l} \rangle = 2n
| \langle a_{j_k}, J a_{\pm j_l} \rangle |^2 .
\end{align}
We see that by an additional rescaling by $\sqrt{2n}$ of the $C_k$'s we obtain $J$-unitarity.
\end{proof}
\begin{cor}[See Ref. \cite{bib:nash2006}]
The set of all real symmetric invariants of $M$, where $M$ is diagonalizable with mutually distinguishable
eigenvalues, is spanned by the $n$ linearly independent matrices
$J (a_{j_k} a_{j_k}^H + \bar a_{j_k} a_{j_k}^{tr}) J$, where $\{j_1, j_2, ..., j_n\} \subset \overline{2n}$
is a representation system according to Conv. \ref{conv:ab}.
\end{cor}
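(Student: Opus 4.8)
The plan is to work through the bijection $G \mapsto X := JG$ between symmetric matrices and Hamiltonian matrices set up just above: a symmetric $G$ satisfies the invariance condition (in the notation of this paragraph, $M^{tr} G M = G$) exactly when $X \in \mathfrak{h} = \{X \in \mathfrak{g} : [M, X] = 0\}$, and conversely $G = J^{-1} X = -J X$ is symmetric whenever $X \in \mathfrak{g}$. Since $G \mapsto J G$ is an $\mathbb{R}$-linear isomorphism, it is enough to exhibit a basis of $\mathfrak{h}$ and translate it back. For this I would use the description obtained in the proof of Prop.~\ref{prop:h_abelian}: the complexified centralizer $\mathfrak{m} = \{X \in \mathbb{C}^{2n \times 2n} : [M, X] = 0\}$ has as $\mathbb{C}$-basis the rank-one eigenmatrices of $\mathrm{ad}(M)$ for the eigenvalue $0$, which in the labelling of Conv.~\ref{conv:ab} are $C_k = a_{j_k} b_{j_k}^H$ and $\overline{C_k} = a_{-j_k} b_{-j_k}^H$ for $k \in \overline{n}$, and $\mathfrak{h} \subset \mathfrak{h}_\mathbb{C} \subset \mathfrak{m}$.

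First I would record what the candidate matrices become under this isomorphism. Writing $b_i = J a_i$ and using $J^{tr} = J^H = -J$, a direct computation gives $J(a_{j_k} a_{j_k}^H + \bar a_{j_k} a_{j_k}^{tr}) J = -(D_k + \overline{D_k})$, where $D_k := b_{j_k} b_{j_k}^H = J C_k$; hence, applying $J$ on the left, the candidate matrices $G_k := J(a_{j_k} a_{j_k}^H + \bar a_{j_k} a_{j_k}^{tr}) J$ correspond to $X_k := C_k + \overline{C_k} \in \mathfrak{m}$. The two routine points then follow immediately. Each $G_k$ is real (conjugation exchanges the two summands), symmetric (since $(a_{j_k} a_{j_k}^H)^{tr} = \bar a_{j_k} a_{j_k}^{tr}$ and $J^{tr} S J^{tr} = J S J$), and invariant, because $C_k$ — being an $\mathrm{ad}(M)$-eigenmatrix for the eigenvalue $0$ — commutes with $M$, hence so does $\overline{C_k}$ and so does $X_k$. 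Linear independence of $\{G_1, \dots, G_n\}$ over $\mathbb{R}$ is inherited from the basis $\{C_k, \overline{C_k}\}$ of $\mathfrak{m}$: if $\sum_k r_k G_k = 0$ with $r_k \in \mathbb{R}$, then $\sum_k r_k X_k = \sum_k r_k (C_k + \overline{C_k}) = 0$, forcing every $r_k = 0$.

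The substance of the proof is the spanning claim. Let $G$ be an arbitrary real symmetric invariant; then $X := J G$ lies in $\mathfrak{m}$ and is real, so matching conjugates in the basis expansion gives $X = \sum_k (c_k C_k + \bar c_k \overline{C_k})$ with $c_k \in \mathbb{C}$, whence $G = -J X = -\sum_k (c_k D_k + \bar c_k \overline{D_k})$. Since $D_k^{tr} = \overline{D_k}$, imposing $G = G^{tr}$ yields $\sum_k (c_k - \bar c_k) D_k = \sum_k (c_k - \bar c_k) \overline{D_k}$, and the key step is that $\{D_k, \overline{D_k} : k \in \overline{n}\} = \{b_i b_i^H : i \in \overline{2n}\}$ is $\mathbb{C}$-linearly independent: evaluating a supposed vanishing combination on $a_j$ and using that $b_i^H a_j = -\langle a_i, J a_j \rangle$ vanishes for $i \neq j$ and is nonzero for $i = j$ (Conv.~\ref{conv:ab}) kills all the coefficients. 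Hence $c_k - \bar c_k = 0$, i.e. $c_k \in \mathbb{R}$ for all $k$, and therefore $G = \sum_k c_k(-D_k - \overline{D_k}) = \sum_k c_k G_k$. I expect the only genuine obstacle to be the careful bookkeeping of transposes and conjugates and the clean identification $\{D_k, \overline{D_k}\} = \{b_i b_i^H\}$; the rest reduces to Prop.~\ref{prop:h_abelian}, Lemma~\ref{lem:eigad} and the orthogonality relations of Conv.~\ref{conv:ab}, which is where the mutual distinctness of the eigenvalues enters.
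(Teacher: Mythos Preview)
Your argument is correct and follows essentially the same route as the paper: both pass through the isomorphism $G\mapsto X=JG$ onto $\mathfrak{h}\subset\mathfrak{m}$, expand in the eigenmatrix basis $\{C_k,\overline{C_k}\}$ of $\mathfrak{m}$ from Prop.~\ref{prop:h_abelian}, and then cut the $2n$-dimensional real span down to the $n$-dimensional one by imposing first reality of $X$ and then symmetry of $G=-JX$. The only cosmetic difference is that the paper changes to the real basis $\{C_k^R,C_k^I\}$ and appeals to the $J$-unitarity of Prop.~\ref{prop:hc_basis} to read off coefficients, then observes directly that $JC_k^R$ is symmetric while $JC_k^I$ is antisymmetric; you instead stay with $\{C_k,\overline{C_k}\}$, use plain linear independence to match coefficients, and argue via $D_k^{tr}=\overline{D_k}$ together with the linear independence of $\{D_k,\overline{D_k}\}=\{b_ib_i^H\}$---which is exactly $J$ times the basis of Prop.~\ref{prop:h_abelian}, so the verification you sketch is immediate. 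Your version is marginally more self-contained in that it bypasses Prop.~\ref{prop:hc_basis}, at the cost of redoing the independence check for $\{b_ib_i^H\}$ by hand.
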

\begin{proof}
From Prop. \ref{prop:hc_basis} we saw that $\mathfrak{m}$ admits a $J$-unitary basis $\{C_k; k \in
\pm \bar n\}$. 
Let $X = \sum_k X_k C_k \in \mathfrak{m}$ be given with
real- and imaginary parts $X_k = X^R_k + i X^I_k$, $C_k = C^R_k + i C^I_k$. Then
\begin{align}
& \bar X = \sum_k \bar X_k \bar C_k = \sum_{k} \bar X_k C_{-k} , \nonumber \\
& \text{$\bar X = X$ and $J$-unitarity of $C_k$} \;\; \Rightarrow \;\; X_k = \bar X_{-k} \;\;
\Leftrightarrow \;\;
\left\{\begin{array}{rcl} 
X^R_k &=& X^R_{-k} , \\
X^I_k &=& - X^I_{-k} . 
\end{array}\right. \label{eq:xicr_xrci2}
\end{align}
It follows if $X$ is real
\begin{equation}
X = \sum_{k = 1}^n X^R_k (C_k^R + C_{-k}^R) - \sum_{k = 1}^n X_k^I (C_k^I - C_{-k}^I) =
\sum_{k = 1}^n 2 X^R_k C_k^R - \sum_{k = 1}^n 2 X_k^I C_k^I , \label{eq:x_xrcr_xici2}
\end{equation}
and so $X$ is real if and only if it can be represented as a sum of the real matrices 
$C_k^R$ and $C_k^I$
Now observe that $2 J C_k^R = -J (a_{j_k} a^H_{j_k} + \bar a_{j_k} a_{j_k}^{tr}) J$
is symmetric, while $2 i J C_k^I = - J(a_{j_k} a^H_{j_k} - \bar a_{j_k} a^{tr}_{j_k}) J$ is antisymmetric. So $X \in \mathfrak{h}$
if and only if $X$ is represented as a sum of the $J$-orthogonal $C_k^R$ for $k \in \overline{n}$.
\end{proof}
\noindent
As we shall see, the result \ref{cor:real_sym_inv_basis} can equivalent be obtained by means of linear normal form, which will be discussed now.

\subsection{Proof of Thm. \ref{thm:mt} Part 1} \label{subsec:evev} 

For the proof Thm. \ref{thm:mt} it is more convenient to change the notation in its claim to $M^{tr}$ (as otherwise we get many minus signs in the exponents). 
A similar assertion can be found in Ref. \cite{bib:degosson2006}, however the proof unfortunately 
contained a mistake \cite{bib:degosson_pc}. We did not found an alternative proof. 
For the next part we will drop the indices 1 and 2 for convenience.\\\\
Since $G$ is positive definite, there exist a Cholesky-decomposition of $G$ in the
form $G = P^H P$, with
invertible $P \in \mathbb{C}^{2n \times 2n}$ (in fact, $P$ is real but for convenience we keep the complex notation). Then the invariance condition \ref{dfn:m_inv}
can be rewritten as
\[
(PMP^{-1})^H PMP^{-1} = 1 ,
\]
i.e. $U := PMP^{-1}$ is unitary. Since
$\mathrm{det}(M - \lambda) = \mathrm{det}(PMP^{-1} - \lambda)$, $U$
must have the same eigenvalues as $M$. Let $\{v_i \in \mathbb{C}^{2n}; i \in \overline{2n}\}$
be a basis of eigenvectors of $U$ with respect to the eigenvalue
$\lambda_i \in \mathbb{C}$, i.e. $U v_i = \lambda_i v_i$.\\\\
We have $U^H v_i = U^{-1} v_i = \lambda_i^{-1} v_i$ and therefore for
every $i$ and $j$:
\begin{equation}
  \bar \lambda_i \langle v_i, v_j \rangle = \langle \lambda_i v_i, v_j \rangle = \langle U v_i, v_j \rangle = \langle v_i, U^H v_j \rangle = \lambda_j^{-1} \langle v_i, v_j \rangle .
  \label{eq:barlilj}
\end{equation}
in particular ($i = j$) it follows that $|\lambda_i|^2 = 1$ for every
$i$, i.e. all eigenvalues lay on the unit circle. Furthermore if
$i \not = j$, then by assumption
$\bar \lambda_i \not = \bar \lambda_j$. Consequently we
must have $\langle v_i, v_j \rangle = 0$ in this case.\\\\
Let $a_i$ be the eigenvectors of $M$ with respect to $\lambda_i$ and
$b_i = Ja_i$ the eigenvectors of $M^{tr}$ with respect to $\bar \lambda_i$.
Since the eigenvalues are mutually distinguishable by our assumption,
they are connected to the orthogonal eigenvectors $v_i$ of $U$ (see
above) as follows:
\begin{subequations}
  \begin{align}
    \lambda_i v_i = U v_i = P M P^{-1} v_i &\Rightarrow P^{-1} v_i = \alpha_i a_i , \label{eq:vipai} \\
    \bar \lambda_i v_i = \lambda_i^{-1} v_i = U^H v_i = (P M P^{-1})^H v_i = P^{-H} M^{tr} P^H v_i &\Rightarrow P^H v_i
                                                                                                     = \beta_i b_i , \label{eq:bvipbi}
  \end{align}
\end{subequations}
with $\alpha_i, \beta_i \in \mathbb{C} \backslash \{0\}$ dependent on
$v_i$ (and therefore unspecified yet). It follows
\begin{subequations}
  \begin{align}
    & \langle v_i, v_j \rangle = \bar \alpha_i \alpha_j (Pa_i)^H Pa_j = \bar \alpha_i \alpha_j a_i^H P^H P a_j =
      \bar \alpha_i \alpha_j a_i^H G a_j , \label{eq:vv1} \\
    & \bar \alpha_i \beta_j \langle a_i, b_j \rangle = (P^{-1}v_i)^H P^H v_j = v_i^H P^{-H} P^H v_j = \langle v_i, v_j \rangle . \label{eq:ab}
  \end{align}
\end{subequations}
If we fix an index $i$, then for every $j$:
$\beta_j \langle a_i, b_j \rangle = \alpha_j \langle a_i, G a_j
\rangle$, thus
\[
Ga_j = \beta_j / \alpha_j b_j ,
\]
and so $JGa_j = -\beta_j/\alpha_j a_j$, i.e. $-\beta_j/\alpha_j =: \gamma_j$ 
are the eigenvalues of $X := JG$ with respect to the eigenvectors $a_j$. Hence we make the following
\begin{conv} \label{conv:ev} \hfill
  \begin{enumerate}
  \item Let $\{a_i, i \in \overline{2n}\}$ be a basis of eigenvectors
    of $M$. We will assume that this system is fixed and relabeled in
    such a way that for $i \in \overline{n}$ it holds
    $\lambda_i = \bar \lambda_{n + i}$ with
    $\mathrm{Im}(\lambda_i) > 0$.
  \item The eigenvalue of $X_k = J G_k$ for $k = 1, 2$ with respect to the
    eigenvector $a_i$ of $M$ (labeled according to point 1) is denoted by
    $\gamma_i^{(k)}$.
  \end{enumerate}
\end{conv} \noindent
Since $b_j = J a_j$, Eq. \eqref{eq:ab}
provides us with the following relation between the norm of $v_i$,
$\alpha_i$ and $a_i$:
\begin{equation}
  \langle v_i, v_i \rangle = - |\alpha_i|^2 \gamma_i \langle a_i, J a_i \rangle . \label{eq:viai}
\end{equation}
\begin{rmk} \label{rmk:giimag} By Eq. \eqref{eq:viai} it follows in
  particular, since
  $\overline{\langle a_i, J a_i \rangle} = a_i^{tr} J \bar a_i = -
  a_i^H J a_i = - \langle a_i, J a_i \rangle$,
  that the eigenvalues $\gamma_i$ must be purely imaginary.
\end{rmk} \noindent By assumption it holds 
$G = S^{tr} D S$,
so $X = JG = J S^{tr} D S = S^{-1} JD S$ and therefore $JD$ has the
same eigenvalues $\gamma_i$ as $X$.  A unitary basis $\{f_j; j \in \overline{2n}\}$
of eigenvectors of $JD$ is given as follows:

\begin{dfnsz}  \label{dfnsz:eig_jd} 
Denote by $\{e_j, j \in \overline{2n}\}$ the canonical basis
of $\mathbb{C}^{2n}$. For $j \in \overline{2n}$ define
\begin{align*}
  \sqrt{2} f_j := \left\{\begin{array}{cl} 
                           e_j + i e_{n + j} & \text{if } j \in \overline{n} , \\ 
                           e_j - i e_{n + j} & \text{else}.  \end{array}\right.
\end{align*}
Then it holds:
  \begin{enumerate}
  \item $\{f_j\}_{j \in \overline{2n}}$ span a unitary basis of
    $\mathbb{C}^{2n}$.
  \item Let $D = \mathrm{diag}(\Lambda, \Lambda) \in \mathbb{C}^{2n \times 2n}$ block-diagonal with
    diagonal $n \times n$-matrices
    $\Lambda := \mathrm{diag}(\Lambda_1, ..., \Lambda_n)$. Then
    \[
    JD f_j = \left\{\begin{array}{cl} i \Lambda_j f_j & \text{if } j
        \in \overline{n} , \\ - i \Lambda_j f_j &
        \text{else.} \end{array}\right.
    \]
  \item $Jf_j = i f_j$ for $j \in \overline{n}$ and $J f_j = - i f_j$
    else.
  \item $D f_j = \Lambda_j f_j$.
  \end{enumerate}
\end{dfnsz}
\begin{proof}
  \begin{enumerate}
  \item If $j, k \in \overline{n}$, then
    $2 \langle f_j, f_k \rangle = \langle e_j + i e_{n + j}, e_k + i
    e_{n + k} \rangle = \langle e_j, e_k \rangle + \langle e_{n + j},
    e_{n + k} \rangle = 2 \delta_{jk}$; similarly is the case $j, k \in \overline{2n} \backslash \overline{n}$.
    If $j \in \overline{n}$ and
    $k \in \overline{2n} \backslash \overline{n}$, then
    $2 \langle f_j, f_k \rangle = \langle e_j + i e_{n + j}, e_k - i
    e_{n + k} \rangle = \langle e_j, e_k \rangle - \langle e_{n + j},
    e_{n + k} \rangle = 0$.
  \item
    \begin{equation}
      JD f_j = \left(
        \begin{array}{cc} 
          0 & \Lambda \\
          - \Lambda & 0 
        \end{array}
      \right) f_j 
      = \frac{1}{\sqrt{2}} (- \Lambda_j e_{n + j} \pm i \Lambda_j e_j) =
      \pm i \Lambda_j f_j . \label{eq:jdf_j}
    \end{equation}
  \item This follows from Eq. \eqref{eq:jdf_j} by setting
    $\Lambda = 1_n$.
  \item $Df_j = -J^2 Df_j = \mp i \Lambda_j J f_j = \Lambda_j f_j$.
  \end{enumerate}
\end{proof}

\begin{dfnsz} \label{dfnsz:strns} For
$a, b \in \overline{n}$
define
  an orthosymplectic transposition $T_{ab} \in \mathbb{C}^{2n}$ as
  follows:
  \[
  T_{ab}(e_j) := \left\{
    \begin{array}{cl}
      -e_{n + b} & \text{if } j = a , \\
      e_b & \text{if } j = n + a , \\
      -e_{n + a} & \text{if } j = b , \\
      e_a & \text{if } j = n + b , \\
      e_j & \text{else.}
    \end{array}\right.
  \]
  Then it holds
  \begin{enumerate}
  \item $T_{ab}$ is orthogonal and symplectic.
  \item If $D' := T_{ab}^{tr} D T_{ab}$ then $D'$ has the same form of
    $D$ where $\Lambda_a$ is exchanged with $\Lambda_b$.
  \end{enumerate}
\end{dfnsz}
\begin{proof}
  $T_{ab}$ is orthogonal, since
  $T_{ab}$ is a combination of permutation and a reflection which are
  orthogonal.  The symplecticity and the second property can be seen
  as follows: Without loss of generality we consider only indices
  $j \in \{a, b, n + a, n + b\}$. Since $T_{ab}$ is orthogonal,
  $T_{ab}^{tr} = T_{ab}^{-1}$. Then
  \begin{align*}
    T^{tr}_{ab} J T_{ab}(e_j) &=
                                \left\{
                                \begin{array}{cl}
                                  -T_{ab}^{-1} J (e_{n + b}) = - T^{-1}_{ab}(e_b) = -e_{n + a} = J(e_a) & \text{if } j = a , \\
                                  T_{ab}^{-1} J (e_b) = - T_{ab}^{-1}(e_{n + b}) = e_a = J(e_{n + a}) & \text{if } j = n + a , \\
                                  - T_{ab}^{-1} J(e_{n + a}) = - T_{ab}^{-1}(e_a) = - e_{n + b} = J(e_b) & \text{if } j = b , \\
                                  T_{ab}^{-1} J (e_a) = - T_{ab}^{-1}(e_{n+ a}) = e_b = J(e_{n+ b}) & \text{if } j = n + b .
                                \end{array}\right\} = J(e_j). \\
    T^{tr}_{ab} D T_{ab}(e_j) &=
                                \left\{
                                \begin{array}{cl}
                                  -T_{ab}^{-1} D (e_{n + b}) = \Lambda_b e_a & \text{if } j = a , \\
                                  T_{ab}^{-1} D (e_b) = \Lambda_b e_{n + a} & \text{if } j = n + a , \\
                                  - T_{ab}^{-1} D (e_{n + a}) = \Lambda_a e_b & \text{if } j = b , \\
                                  T_{ab}^{-1} D (e_a) = \Lambda_a e_{n + b} & \text{if } j = n + b.
                                \end{array}\right.
  \end{align*}
\end{proof}

Now attach on the matrices the index 1 and 2. Since the eigenvalues of
$JD_k$ are the $\gamma^{(k)}_j$'s and the
$\Lambda^{(k)}_j$'s are positive, there must exist, by
Conv. \ref{conv:ev}, a permutation
$\pi_k \colon \overline{n} \to \overline{n}$ so that
$\gamma^{(k)}_j = i \Lambda^{(k)}_{\pi_k(j)}$ for $j \in \overline{n}$
. Note that this implies automatically
$\gamma^{(k)}_{n + j} = - i \Lambda^{(k)}_{\pi_k(j)}$ for
$j \in \overline{n}$, since the complex conjugated $\overline{\gamma_j^{(k)}}$ belongs to the eigenvector
$S \bar a_j$ of $JD_k$, which in turn equals $S a_{n + j}$ by 
our Conv. \ref{conv:ev} and therefore it is related to the eigenvalue $\gamma_{n + j}^{(k)}$.\\\\
By Prop. \ref{dfnsz:strns}, we can assign to
$\pi_k$ a suitable composition $T_k$ of symplectic permutation
matrices so that the indices of the diagonal entries of
$D'_k := T_k^{tr} D_k T_k$ now coincide with the indices $j$ of
$\gamma_j^{(k)}$, with respect to our fixed eigensystem
$\{a_i, i \in \overline{2n}\}$.  These diagonal matrices $D'_k$ belong
to a similar problem than the original one, now formulated with the
symplectic matrices $\tilde S_k := T_k^{-1} S_k$ (and Prop. \ref{dfnsz:eig_jd} holds also for these
new block-diagonal matrices).  Since $T_k$ is
orthogonal, $S_2 S_1^{-1}$ is orthogonal iff
$\tilde S_2 \tilde S_1^{-1}$ is orthogonal.  The important fact of
this consideration is that we treat both cases $k = 1, 2$
simultaneously (if we would have looked at only one case, we could have
simply relabeled the $a_i$'s). So we conclude:

\begin{cor}
  Without loss of generality we can assume that for all
  $j \in \overline{n}$ it holds $\gamma_j^{(k)} = i \Lambda^{(k)}_j$
  and $\gamma_{n + j}^{(k)} = - i \Lambda^{(k)}_j$, i.e.
  \begin{equation}
    \forall k \in \{1, 2\} \colon \forall j \in \overline{2n} \colon \;\; JD_k f_j = \gamma^{(k)}_j f_j , \label{eq:jdfj_fj}
  \end{equation}
  in particular $f_j \in E^{(k)}_j$, by which we denote the eigenspace of $JD_k$ with respect to $\gamma_j^{(k)}$, and
  $\{f_j; j \in \overline{2n}\}$ is a unitary basis of eigenvectors of
  $JD_k$.
\end{cor}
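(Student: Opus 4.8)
The plan is to reduce the statement to Prop.~\ref{dfnsz:eig_jd}, which already records the complete eigendata of a block matrix $JD_k$ with $D_k=\mathrm{diag}(\Lambda^{(k)},\Lambda^{(k)})$, and then to clean up the indexing with the orthosymplectic transpositions of Prop.~\ref{dfnsz:strns}. First I would use symplecticity of $S_k$ in the form $JS_k^{tr}=S_k^{-1}J$ to rewrite the invariance $G_k=S_k^{tr}D_kS_k$ as $X_k:=JG_k=S_k^{-1}(JD_k)S_k$; hence $JD_k$ is similar to $X_k$, shares its eigenvalues $\gamma_1^{(k)},\dots,\gamma_{2n}^{(k)}$, and has $S_ka_i$ as an eigenvector for $\gamma_i^{(k)}$. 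On the other side, Prop.~\ref{dfnsz:eig_jd}(2) says the eigenvalues of $JD_k$ are precisely $\pm i\Lambda_j^{(k)}$, $j\in\overline{n}$, with $f_j$ an eigenvector for $+i\Lambda_j^{(k)}$ when $j\in\overline{n}$ and $f_{n+j}$ for $-i\Lambda_j^{(k)}$. Because the $\Lambda_j^{(k)}$ are strictly positive and the $\gamma_i^{(k)}$ are purely imaginary by Rmk.~\ref{rmk:giimag}, there must be a permutation $\pi_k$ of $\overline{n}$ with $\gamma_j^{(k)}=i\Lambda_{\pi_k(j)}^{(k)}$ for $j\in\overline{n}$; that it is exactly the indices $j\in\overline{n}$ (for which $\mathrm{Im}\,\lambda_j>0$ by Conv.~\ref{conv:ev}) which pick up the $+i\Lambda$ part is forced by Eq.~\eqref{eq:viai}, since positivity of $\langle v_i,v_i\rangle$ fixes the sign of the real number $\gamma_i\langle a_i,Ja_i\rangle$. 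The relation $\gamma_{n+j}^{(k)}=-i\Lambda_{\pi_k(j)}^{(k)}$ then follows by conjugation, using $S\bar a_j=Sa_{n+j}$ from Conv.~\ref{conv:ev}.

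The second step is to absorb $\pi_k$. I would let $T_k$ be the composition of transpositions $T_{ab}$ from Prop.~\ref{dfnsz:strns} realizing $\pi_k$, set $D_k':=T_k^{tr}D_kT_k$ and $\tilde S_k:=T_k^{-1}S_k$, and observe that $G_k=\tilde S_k^{tr}D_k'\tilde S_k$ is again a symplectic diagonalization to which Prop.~\ref{dfnsz:eig_jd} applies verbatim, now with the diagonal entries indexed so that $JD_k'f_j=\gamma_j^{(k)}f_j$ for all $j\in\overline{2n}$. Since each $T_k$ is orthogonal and $S_2S_1^{-1}=T_2(\tilde S_2\tilde S_1^{-1})T_1^{-1}$, orthogonality of $S_2S_1^{-1}$ is equivalent to that of $\tilde S_2\tilde S_1^{-1}$, so -- as the rest of the proof of Thm.~\ref{thm:mt} only appeals to orthogonality of this product -- replacing $(S_k,D_k)$ by $(\tilde S_k,D_k')$ costs nothing, which is the meaning of ``without loss of generality.'' With this relabeling in force, the displayed identity holds by construction, $\{f_j\}_{j\in\overline{2n}}$ is a unitary basis of $\mathbb{C}^{2n}$ by Prop.~\ref{dfnsz:eig_jd}(1), and each $f_j$ lies in the $\gamma_j^{(k)}$-eigenspace $E_j^{(k)}$ of $JD_k$.

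The main obstacle is organizational rather than computational: the reduction must be carried out once and for all so as to be consistent for $k=1$ and $k=2$ simultaneously. The eigensystem $\{a_i\}$ of $M$ is common to both problems, so one is not free to relabel it differently for $G_1$ and $G_2$; only the per-$k$ transpositions $T_k$ acting on $S_k$ (equivalently on $D_k$) may differ, and the argument above is designed precisely so that only these are invoked. A related delicate point is the sign bookkeeping that decides which $n$ of the $2n$ eigenvalues of $JD_k$ are $+i\Lambda$ and which are $-i\Lambda$; here one genuinely needs positivity of the $\Lambda_j^{(k)}$, positivity of $\langle v_i,v_i\rangle$ via Eq.~\eqref{eq:viai}, and the convention $\mathrm{Im}\,\lambda_i>0$ for $i\in\overline{n}$ from Conv.~\ref{conv:ev} to all fit together.
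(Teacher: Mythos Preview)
Your proposal is correct and follows essentially the same route as the paper: both arguments use the similarity $X_k=S_k^{-1}(JD_k)S_k$ together with Prop.~\ref{dfnsz:eig_jd} to match the eigenvalue sets $\{\gamma_j^{(k)}\}$ and $\{\pm i\Lambda_j^{(k)}\}$, obtain a permutation $\pi_k$ of $\overline{n}$, absorb it via the orthosymplectic transpositions $T_k$ of Prop.~\ref{dfnsz:strns}, and observe that orthogonality of $S_2S_1^{-1}$ is unaffected because $S_2S_1^{-1}=T_2(\tilde S_2\tilde S_1^{-1})T_1^{-1}$ with $T_k$ orthogonal; both also stress that the reduction must be done for $k=1,2$ simultaneously against the fixed eigensystem $\{a_i\}$. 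Your treatment of the sign bookkeeping (which half of the $\gamma_j^{(k)}$ carries the $+i$) is in fact more explicit than the paper's, which simply appeals to Conv.~\ref{conv:ev}; your invocation of Eq.~\eqref{eq:viai} makes transparent why the indices $j\in\overline{n}$ land on the $+i\Lambda$ side.
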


\subsection{Proof of Thm. \ref{thm:mt} Part 2} \label{subsec:evev2} 

\begin{prop} \label{prop:evjd} By assumption we have
  $G = S^{tr} D S = (D^{1/2} S)^{tr} D^{1/2} S$, therefore we can
  apply the results of paragraph \ref{subsec:evev}, using in particular
  $P = D^{1/2} S$. Then the corresponding orthogonal eigenvectors
  $c_j$ of $PMP^{-1}$ (with this particular $P$) are also eigenvectors
  of $JD$ and satisfy:
  \begin{equation}
    - \gamma_j J D^{-1} c_j = c_j \;\; \Rightarrow \;\; JD c_j = \gamma_j c_j .  \label{eq:jdcj_cj}
  \end{equation}
\end{prop}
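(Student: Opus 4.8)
The plan is to plug the specific Cholesky-type factor $P := D^{1/2} S$ into the construction of paragraph~\ref{subsec:evev}. This is admissible because $G = S^{tr} D S = (D^{1/2} S)^{tr} (D^{1/2} S) = P^{tr} P$ with $P$ real and invertible (recall $D$ is positive diagonal), so $U := P M P^{-1}$ is unitary and its orthogonal eigenvectors are exactly the $c_j$ of the statement. The defining relations \eqref{eq:vipai} and \eqref{eq:bvipbi}, specialised to this $P$, read $P^{-1} c_j = \alpha_j a_j$ and $P^H c_j = \beta_j J a_j$ for suitable $\alpha_j, \beta_j \in \mathbb{C} \setminus \{0\}$, and --- as was shown there, independently of the choice of Cholesky factor --- $\gamma_j = -\beta_j / \alpha_j$ is the eigenvalue of $X = JG$ (equivalently of $JD$, since $X = S^{-1}(JD)S$) associated with $a_j$ under Conv.~\ref{conv:ev}.

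Next I would eliminate $a_j$. From $P^{-1} c_j = \alpha_j a_j$ we get $a_j = \alpha_j^{-1} S^{-1} D^{-1/2} c_j$; inserting this into $P^H c_j = \beta_j J a_j$ and using $P^H = S^{tr} D^{1/2}$ gives
\[
S^{tr} D^{1/2} c_j = \frac{\beta_j}{\alpha_j}\, J S^{-1} D^{-1/2} c_j = -\gamma_j\, J S^{-1} D^{-1/2} c_j .
\]
Now use symplecticity of $S$ in the form $J S^{-1} = S^{tr} J$ and cancel $S^{tr}$ on the left, obtaining $D^{1/2} c_j = -\gamma_j J D^{-1/2} c_j$. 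Since $D = \mathrm{diag}(\Lambda, \Lambda)$ has identical diagonal blocks, $D$ and all of its real powers commute with $J$; multiplying on the left by $D^{1/2}$ therefore yields $D c_j = -\gamma_j J c_j$, i.e. $-\gamma_j J D^{-1} c_j = c_j$, and one final left multiplication by $JD$ together with $J^2 = -1$ gives $J D c_j = \gamma_j c_j$, which is the claim.

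I do not expect a genuine obstacle: the whole argument is a short substitution once $P = D^{1/2} S$ is chosen. The only points that need care are (i) the commutation of $D^{\pm 1/2}$ and $D^{-1}$ with $J$, which relies specifically on the block form $\mathrm{diag}(\Lambda, \Lambda)$ and would fail for a generic diagonal matrix; (ii) the symplectic identity $J S^{-1} = S^{tr} J$; and (iii) checking that the labels on $\alpha_j$, $\beta_j$, $\gamma_j$ are consistently those of the fixed eigensystem $\{a_i\}$ of Conv.~\ref{conv:ev} --- but this is automatic, because the ratios $\beta_j / \alpha_j$ do not depend on the particular Cholesky factor $P$, so the bookkeeping of paragraph~\ref{subsec:evev} carries over unchanged.
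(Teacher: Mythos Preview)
Your argument is correct and uses the same ingredients as the paper's proof: the specialisation $P = D^{1/2} S$, the symplecticity of $S$, and the commutation of $D^{\pm 1/2}$ with $J$ coming from the block form $D = \mathrm{diag}(\Lambda,\Lambda)$. The only cosmetic difference is that the paper routes the computation through the inner products $\langle c_i, c_j\rangle = (\beta_j/\alpha_j)\langle c_i, JD^{-1} c_j\rangle$ and then invokes that the $c_i$ form a basis, whereas your direct elimination of $a_j$ from \eqref{eq:vipai} and \eqref{eq:bvipbi} reaches the same relation without that detour.
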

\begin{proof}
  By Eq. \eqref{eq:ab}, using $b_j = Ja_j$, we have with
  corresponding $\alpha_i$ and $\beta_i$ values (they are not yet
  specified)
  \begin{align*}
    \langle c_i, c_j \rangle &= \bar \alpha_i \beta_j \langle a_i, J a_j \rangle = \bar \alpha_i \beta_j \langle S a_i, J S a_j \rangle \\
                             &= \langle \alpha_i S a_i, \beta_j J S a_j \rangle = \langle D^{-1/2} c_i, \beta_j / \alpha_j J D^{-1/2} c_j \rangle \\
                             &= \beta_j / \alpha_j \langle c_i, JD^{-1} c_j \rangle .
  \end{align*}
  Since the $c_i$'s constitute a basis, the claim follows.
\end{proof}
By Prop. \ref{prop:evjd} and paragraph \ref{subsec:evev} we thus have for each $k$ an
orthogonal basis $\{c_j^{(k)}\}_{j \in \overline{2n}}$ of eigenvectors
of $JD_k$, satisfying $c_j^{(k)} = \alpha_j^{(k)} D_k^{1/2} S_k(a_j)$
for not yet specified complex numbers
$\alpha_j^{(k)} \in \mathbb{C} \backslash \{0\}$.  Let us now choose
$\alpha_j^{(k)}$ so that $c_j^{(k)}$ are normalized to one, i.e. they
describe a unitary basis.  By Eq. \eqref{eq:viai} this is fulfilled if
and only if
\begin{equation}
  \forall i \in \{1, ..., 2n\} \colon \;\; - |\alpha_i^{(k)}|^2 \gamma_i^{(k)} \langle a_i, J a_i \rangle = 1 ,
\end{equation}
leaving an $SU(2)$ freedom in the choice of the $\alpha_i^{(k)}$'s. In
particular we obtain
\begin{equation}
  \forall i \in \{1, ..., 2n\} \colon \;\; \left| \frac{\alpha_i^{(1)}}{\alpha_i^{(2)}} \right|^2 = \frac{\gamma_i^{(2)}}{\gamma_i^{(1)}} . \label{eq:ai1ai2}
\end{equation}
Now let $U_k \in \mathbb{C}^{2n \times 2n}$ be the unitary
transformation sending $c^{(k)}_j$ to $f_j$.  Let us drop the index
$k$ for the next lemma.
\begin{lem} \label{lem:uju} The unitary map $U$ satisfies
  \begin{enumerate}
  \item $U JD U^H = J D$.
  \item $U D^{\pm 1/2} U^H = D^{\pm 1/2}$.
  \item $U D^{\pm 1} U^H = D^{\pm 1}$.
  \item $U J U^H = J$.
  \end{enumerate}
\end{lem}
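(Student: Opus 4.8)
The plan is to derive all four assertions from the first one together with the elementary fact that $J$ and $D$ commute. I would first prove item~1 from the spectral data already assembled. By Eq.~\eqref{eq:jdcj_cj} the orthogonal eigenvectors satisfy $JD\,c_j=\gamma_j c_j$, and by Eq.~\eqref{eq:jdfj_fj} the $f_j$ satisfy $JD\,f_j=\gamma_j f_j$ with the \emph{same} $\gamma_j$ --- this alignment of indices is precisely what Conv.~\ref{conv:ev} and the reductions of paragraph~\ref{subsec:evev} were set up to guarantee. Since $U$ is by construction the unitary map with $Uc_j=f_j$ and $\{c_j\}$ is a basis, applying $U$ to $JD\,c_j=\gamma_j c_j$ yields $U\,JD=JD\,U$ as linear maps, hence $U\,JD\,U^H=JD$.

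The structural point is that $D=\mathrm{diag}(\Lambda,\Lambda)$ has both diagonal blocks equal, so a one-line block computation gives $JD=DJ$; thus $J$ and $D$ are simultaneously diagonalizable, the basis $\{f_j\}$ being common eigenvectors with $Jf_j=i\varepsilon_j f_j$ and $Df_j=\Lambda_j f_j$, where $\varepsilon_j=+1$ for $j\in\overline n$ and $\varepsilon_j=-1$ otherwise (Dfnsz.~\ref{dfnsz:eig_jd}(3),(4)). Every joint eigenspace of $(J,D)$ therefore has the form $\{v: Jv=i\varepsilon v,\ Dv=\lambda v\}$ for a sign $\varepsilon$ and some $\lambda>0$, and on it $JD$ acts as the scalar $i\varepsilon\lambda$. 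The key remark is that $(\varepsilon,\lambda)\mapsto i\varepsilon\lambda$ is injective (the $\Lambda_j$ are positive), so each eigenspace of $JD$ coincides with one such joint eigenspace --- on which $J$ and $D$ are scalar operators.

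Now, by item~1 the unitary $U$ commutes with $JD$, hence preserves each of its eigenspaces; there $J$ and $D$ act as scalars, so $U$ commutes with both, giving $UJ=JU$ and $UD=DU$ on all of $\mathbb C^{2n}$, and therefore also $UD^{-1}=D^{-1}U$: this is items~3 and~4. Finally, $D$ is positive definite, so $D^{1/2}$ and $D^{-1/2}$ are obtained by applying $t\mapsto t^{\pm1/2}$ to the positive eigenvalues $\Lambda_j$ of $D$, whence any matrix commuting with $D$ commutes with them as well, which is item~2; unitarity of $U$ converts each commutation $UX=XU$ into $UXU^H=X$. I do not expect a genuine obstacle here --- the whole proof reduces to $JD=DJ$ and the injectivity remark --- and the only spot needing care is the index bookkeeping in item~1, which Conv.~\ref{conv:ev} and paragraph~\ref{subsec:evev} have already arranged.
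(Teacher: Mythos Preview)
Your proof is correct and rests on the same spectral idea as the paper: item~1 is identical, and items~2--4 follow because $U$ preserves each eigenspace of $JD$, on which $D$ (and hence $D^{\pm 1/2}$, $D^{\pm 1}$) and $J$ act as scalars. The paper organizes the deduction slightly differently---it expands $U^H f_i$ explicitly in the $f_l$-basis of each eigenspace to get item~2, then derives~3 from~2 and~4 from~1 and~3---whereas your injectivity remark $(\varepsilon,\lambda)\mapsto i\varepsilon\lambda$ yields 3 and 4 in one stroke; this is a minor repackaging of the same argument.
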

\begin{proof}
  \begin{enumerate}
  \item Since $JD c_j = \gamma_j c_j$ by Eq. \eqref{eq:jdcj_cj}, we
    have by Eq. \eqref{eq:jdfj_fj}:
    $U JD U^H(f_j) = \gamma_j f_j = JD f_j$.
  \item Let $E_j$ be the eigenspace of $JD$ with respect to $\gamma_j$
    and $[j]$ the equivalence class of indices $k \in \overline{2n}$
    with $k \sim j : \Leftrightarrow \gamma_k = \gamma_j$. Since
    $\forall i \in [j] \colon \; U^H(f_i) = c_i \in E_j$, and
    $\{f_i \in E_j ; i \in [j]\}$ is a basis of $E_j$, it follows that
    there exist $u_{il} \in \mathbb{C}$ so that
    \[
    U^H(f_i) = \sum_{l \in [j]} u_{il} f_l .
    \]
    Then
    \begin{align*}
      & D^{\pm 1/2} U^H(f_i) = \sum_{l \in [j]} u_{il} |\gamma_l|^{\pm 1/2} f_l = |\gamma_j|^{\pm 1/2} \sum_{l \in [j]} u_{il} f_l = |\gamma_j|^{\pm 1/2} U^H (f_i) , \\
      \Rightarrow \;\; & U D^{\pm 1/2} U^H = D^{\pm 1/2} . 
    \end{align*}
  \item Follows immediately from 2.
  \item By 1 and 3:
    $JD = UJU^HUDU^H = UJU^HD \;\; \Rightarrow \;\; J = UJU^H$.
  \end{enumerate}
\end{proof}
Now we are ready to prove the original claim. Set $P_k := U_k D_k^{1/2} S_k$,
i.e.  $f_j = \alpha^{(k)}_j P_k(a_j)$. It follows
\begin{equation}
  U_2 D_2^{1/2} S_2 S_1^{-1} D_1^{-1/2} U_1^H f_j = P_2 P_1^{-1} f_j = \alpha_j^{(1)} P_2 a_j = \frac{\alpha_j^{(1)}}{\alpha_j^{(2)}} f_j . \label{eq:dssd}
\end{equation}
By lemma \ref{lem:uju} it holds $U_2 D_2^{1/2} = D_2^{1/2} U_2$ and
$D_1^{-1/2} U_1^H = U_1^H D_1^{-1/2}$, and therefore with
Eq. \eqref{eq:ai1ai2}
\begin{align*}
  D_2^{1/2} U_2 S_2 S_1^{-1} U_1^H D_1^{-1/2} f_j &= \frac{\alpha_j^{(1)}}{\alpha_j^{(2)}} f_j , \\
  \Rightarrow \;\; U_2 S_2 S_1^{-1} U_1^H (|\gamma_j^{(1)}|^{-1/2} f_j) &= \frac{\alpha_j^{(1)}}{\alpha_j^{(2)}} |\gamma_j^{(2)}|^{-1/2} f_j , \\
  \Rightarrow \;\; U_2 S_2 S_1^{-1} U_1^H (f_j) &= \frac{\alpha_j^{(1)}}{\alpha_j^{(2)}} \left| \frac{\alpha_j^{(2)}}{\alpha_j^{(1)}} \right| f_j .
\end{align*}
Since multiplication of a unitary basis with complex phases is a
unitary operation, $S_2 S_1^{-1}$ can entirely be described on
$\mathbb{C}^{2n}$ as a unitary operation. And because $S_1$ and $S_2$ itself
are real, we conclude that $S_2 S_1^{-1}$ must be orthogonal. $\Box$

\bibliographystyle{unsrt}
\bibliography{references}

\end{document}